\newlength{\oldparindent}
\def\lbr{[\![}
\def\rbr{]\!]}
\definecolor{revised}{cmyk}{0,0,0,1}
\crefname{equation}{equation}{equations}
\newtheorem{theorem}{Theorem}[section]
\newtheorem{lemma}[theorem]{Lemma}
\newtheorem{proposition}[theorem]{Proposition}
\newtheorem{example}[theorem]{Example}
\newtheorem{remark}[theorem]{Remark}
\newtheorem{assumption}[theorem]{Assumption}
\DeclareMathOperator*{\esssup}{ess\,sup}
\title{\Large \bf    \thanks{ } }
\author{\textsc{Caroline HILLAIRET\thanks{CREST, Ensae, Universit\'e Paris Saclay, 3 av Pierre Larousse
92245 Malakoff France. Email: caroline.hillairet@ensae.fr }
		\quad Cody HYNDMAN\thanks{Department of Mathematics and Statistics, Concordia University, 1455 Boulevard de Maisonneuve Ouest, Montr\'{e}al,
			Qu\'{e}bec, Canada H3G 1M8. Email: cody.hyndman@concordia.ca}
		\quad  Ying JIAO\thanks{ISFA, Universit\'{e} Lyon  1, 50 avenue Tony Garnier
			69007 Lyon, France.  Email:  ying.jiao@univ-lyon1.fr }
		\quad Renjie WANG \footnotemark[2]
	}\\
}
\date{\today}
\title{Trading against disorderly liquidation of a large position under asymmetric information and market impact}
\begin{document}
\maketitle
\numberwithin{equation}{section}
\begin{abstract}
We consider trading against a hedge fund or large trader that must liquidate a large position in a risky asset if the market price of the asset crosses a certain threshold. Liquidation occurs in a disorderly
 manner
and  negatively impacts the market price of the  asset.  %
We consider the perspective of small investors whose trades do not induce market impact and who possess different levels of information about the liquidation trigger mechanism and the market impact. We classify these market participants into three types: fully informed, partially informed and uninformed investors.   We consider the portfolio optimization problems and compare the optimal trading and wealth processes for the three classes of investors theoretically and by numerical illustrations.
\end{abstract}

\vspace*{0.5cm}
{\itshape Keywords :} Disorderly liquidation; asymmetric information; market impact; portfolio optimization; optimal trading; Monte-Carlo method.

\newpage

\section{Introduction}
There is a large literature on insider trading, asymmetric information, and market manipulation trading strategies including seminal works by \citet{kyle1985continuous}, \citet{back1992insider}, \citet{jarrow1992market,jarrow1994derivative}, and \citet{allen1992stock}. These works generally assume that an insider is attempting to influence a price by, or profit from, the release of, potentially false, information known to the insider. These studies also generally break market participants into noise traders, standard informational traders, and informed traders. The existence of arbitrage strategies, price equilibrium, or specific market manipulation strategies are the primary concerns of these early works. Other papers dealing with insider information which quantify the value of insider information through the maximization of agents wealth or utility include \citet{MR98d:90034}, \citet{elliott1999incomplete},  \citet{amendinger1998additional},  and  \citet{amendinger2003monetary}.

More  recently  liquidity  modeling  has  become  an intense area of study.  Market micro-structure and limit order books present one approach to modelling liquidity based on trading mechanisms.  Models that specify the price impact of trades as exogenously determined and depending on the size of a transactions constitute another strand of the literature.  Both approaches treat problems associated with the fact that trading large positions impacts market prices.  A good overview of liquidity models can be found in \citet{gokay2011liquidity}.  The modeling of market micro-structure and the optimal liquidation of large positions has also been studied extensively and an overview of these topics can be found in \citet{abergel2012market}. To the best of our knowledge, among works dealing with asymmetric information, only few papers concern the market impact of liquidation risk.  In particular, \citet{ankirchner2012optimal} studies optimal liquidation problems of an insider.

In contrast to the existing literature we are concerned with disorderly, rather than optimal, liquidation and the point of view of market participants other than the large trader or hedge fund liquidating the position. In particular, we are interested in the following question: is it possible for a market participant to profit from the knowledge that another market participant, with large positions in a stock or derivative, will be forced to liquidate some or all of its position if the price crosses a certain threshold?  There is ample evidence from financial markets concerning the importance of liquidity risks. For example, consider a hedge fund with a large position in natural gas futures contracts, such as Amaranth Advisors LLC in 2006, and macro-economic or weather events contribute to an unexpected adverse change in the price.  In this case the fund may be forced to unwind its positions in a disorderly fashion, which would have a further market impact on the price. Other examples include the case of Long Term Capital Management L.P. (LTCM) in 1998 and numerous firms during the financial crisis of 2007-2008.

 We assume that liquidation occurs immediately when the market price hits the liquidation trigger level and has a temporary impact on the asset price, whereby the market price is depressed away from the fundamental value, and gradually dissipates. We model the temporary market impact by a function with parameters that control the impact speed and magnitude. Other market participants may have different levels of information about the liquidation trigger mechanism and the liquidation impact. We aim to find the optimal trading strategy that maximizes an investor's terminal utility of wealth under different types of information that are accessible to particular market participants. In the standard information case an uninformed market participant is not aware of the liquidation trigger mechanism. They believe and act, erroneously when liquidation occurs, as if the market price is equal to the fundamental asset price. In the partial information case an insider or informed market participant knows the level at which the hedge fund will be forced to liquidate the position but does not have information about the liquidation volume which determines the price impact. In the full information case the insider has complete information about the liquidation threshold and the price impact. Certain market participants may have access to this type of information owing to their position, counter-party status, technology, or knowledge of the market.  The fully informed investor's perfect information represents one extreme which may be unobtainable in practice.  However, we shall show numerically in the power-utility case that the optimal strategy for the partially informed investor is quite close to that of the fully informed investor.

The remainder of the paper is organized as follows. Section \ref{section:model_setting} sets up the framework of our model. Section \ref{sect:fully_informed} solves the portfolio optimization problem for a fully informed investor and gives the explicit expression of  the optimal expected utility in case of log utility. Sections \ref{sect:partially_informed} and \ref{sect:uninformed} explore the optimization problems for uninformed and  partially informed investors, respectively.  Section \ref{sect:numerical_result} presents some numerical results. Section \ref{sect:conclusion} concludes and an appendix contains technical results and proofs.

\section{The Market Model}\label{section:model_setting}
\subsection{Asset price and liquidation impact}
Fix a probability space $(\Omega, \mathfrak{A},\mathbb{P})$ equipped with a  reference  filtration $\mathbb{F}=(\mathcal{F}_{t})_{t\geq 0}$ satisfying the usual conditions, with
$(W_{t},t\geq 0)$ an  $(\mathbb{F}, \mathbb{P})$-Brownian motion. Let $T> 0$ be a finite horizon time.
In our model, we assume that market participants may invest in a riskless asset and a risky asset. Without loss of generality we suppose that the interest rate of the riskless asset is zero.  The fundamental value of the risky asset is modelled by a Black-Scholes diffusion:
\begin{equation}
dS_{t}=S_{t}(\mu  dt+\sigma dW_{t}), \quad 0 \leq t \leq T,
\label{eq:risky_asset_fundamental_value}
\end{equation}
where $\mu$ and $\sigma$ are supposed to be constants, and $\sigma>0$.

We consider a hedge fund which holds a large long position in the risky asset over the investment horizon $[0,T]$. In normal
circumstances, this position could be held until time $T$. However, according to risk management policies, exchange rules, or regulatory requirements,
 the long position must be liquidated in certain circumstances. In this paper, we assume that the liquidation will be triggered  when the market price of the risky asset passes below a pre-determined level. Before liquidation, the market price, denoted by $S^{M}$, is equal to the fundamental value $S$. So the liquidation time $\tau$ is defined as the first passage time of a fixed constant threshold $ \alpha S_{0}$ where $\alpha \in (0,1)$, by the market price process $S^{M}$, i.e.,
\begin{equation}
\tau:=\inf\{t\geq 0, S_{t}^{M}\leq \alpha S_{0} \}=\inf\{t\geq 0, S_{t}\leq \alpha S_{0} \}%
\label{eq:liquidation_time_definition}
\end{equation} %
with the convention $\inf\varnothing =\infty$.  We note that $\tau$ is an $\mathbb{F}$-stopping time. In the simplest case the scenario described corresponds to a margin call that cannot be covered resulting in the liquidation, in full or in part, of the position.

The market price of the risky asset will be influenced by liquidation. Since the number of shares of the risky asset to be sold is very large in comparison to the average volume traded in a short time period, immediate liquidation would have a temporary impact on the market price which would be driven down away from the fundamental price after
liquidation. We denote by  $S^{I}_{t}(u)$  the market price of the risky asset at time $t$ after the liquidation time $\tau=u$. Suppose that it is given as
\begin{equation}
S_{t}^{I}(u)=g(t-u;\Theta,K)S_{t},\quad  u\leq t\leq T.
\label{eq:impacted_risky_asset_price}
\end{equation}
where $g$ is an impact function and  $\Theta$  and $K$ are parameters which will be made precise later.  We note that the mathematical characterization of market impact is a very complicated problem,  and we refer the interested reader to  \citet{kissell2003optimal} for details.  In this paper, inspired by \citet{Li2014}, we characterize the temporary influence of liquidation on market by the impact function $g$ of the form
\begin{equation}
g(t;\Theta,K)=1-\frac{Kt}{\Theta} e^{1-\frac{t}{\Theta}}
\label{eq:impact_function}
\end{equation}
where $\Theta$ and $K$ are positive parameters with  $\Theta$ controlling the speed of the market impact and $K$ representing the magnitude of the market impact. In particular, we assume that $\Theta$ is a positive random variable and $K$ is a random variable valued in $[0,1]$, both of which are independent of $\mathbb{F}$ and with joint probability density function $\varphi(\cdot,\cdot)$, i.e. $\mathbb{P}(\Theta \in d\theta, K \in dk)=\varphi(\theta,k)d\theta dk$.

Figure \ref{fig:impact_function} illustrates the impact function (\ref{eq:impact_function}) with $K=0.1$ and two different realized values of $\Theta$. Clearly the shape of the impact function with $\Theta=0.05$ is steeper than with $\Theta=0.1$.  We note that for each fixed scenario $\omega$, the function $g$ attains its minimum value $1-K(\omega)$ at $t=\Theta(\omega)$.
Also, we observe that the function $g$ first declines from $1$ and then rises back and converges to $1$, which characterizes the market impact of liquidation with time evolution. For realized values $K=0.1$ and $\Theta=0.1$ it would take 0.1 year, which is approximately 25 trading days, for the asset price to reach the minimum value $(1-K)*S_{0}$  after liquidation occurs.  The market impact in the first trading day after liquidation is $1 - g(\frac{1}{250};0.1,0.1)\approx 1\%$. Therefore, the parameter $\Theta$ needs to be small to more accurately reflect the impact of disorderly liquidation. In Section \ref{sect:numerical_result} we present some numerical results which use a rather large $\Theta$ that guarantees better accuracy of the numerical results, but these could be improved by applying other numerical techniques for smaller values of $\Theta$.

\begin{figure}[H]
\centering
\begin{minipage}{.45\textwidth}
  \centering
  \includegraphics[width=\linewidth]{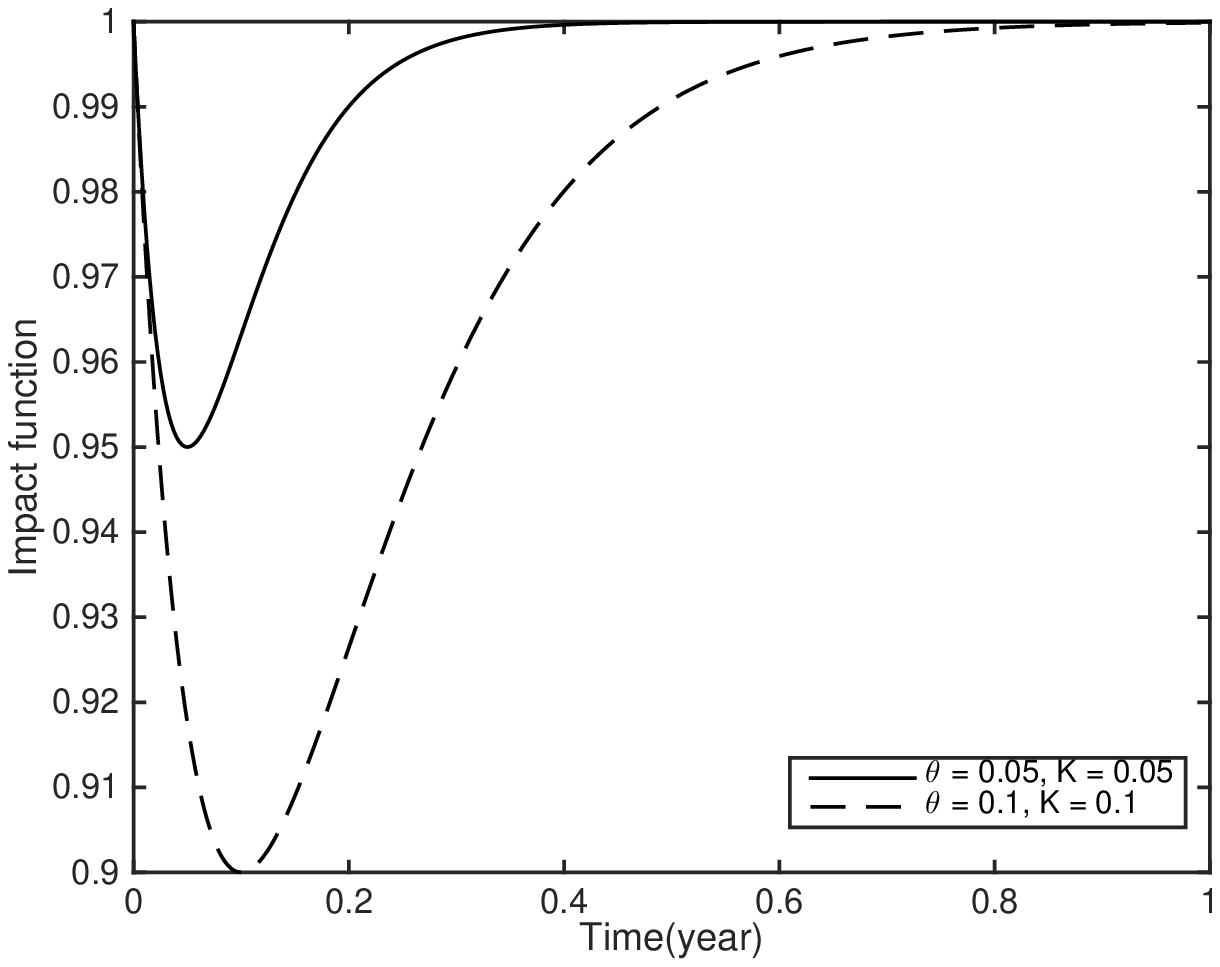}
 \caption{Impact function with 2 parameters}
\label{fig:impact_function}
\end{minipage}
\begin{minipage}{.45\textwidth}
  \centering
  \includegraphics[width=\linewidth]{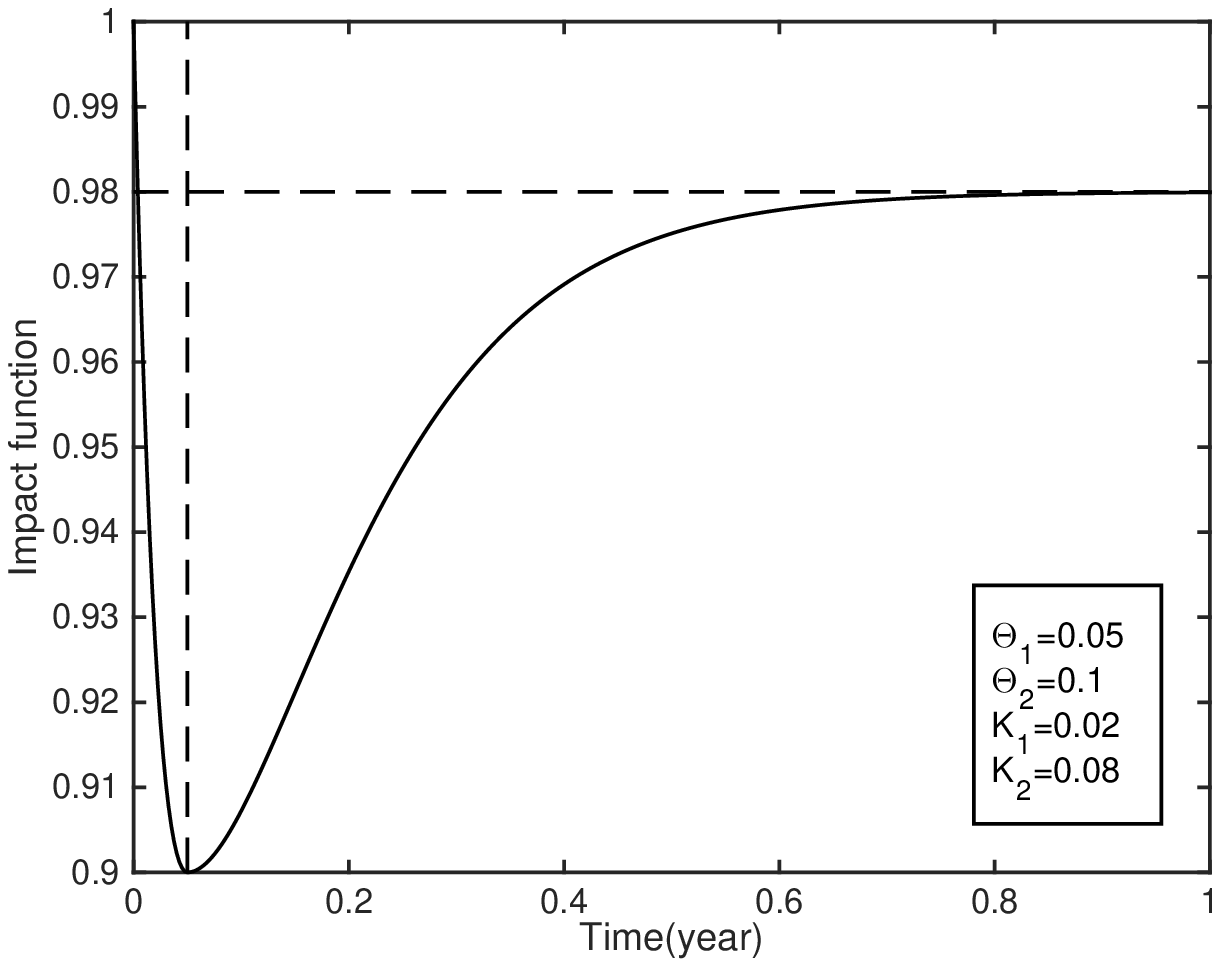}
  \caption{Impact function with 4 parameters}
\label{fig:impact_function4}
\end{minipage}
\end{figure}

\begin{remark}
It is natural to consider a jump effect for the price impact of liquidation.  In our model, by (\ref{eq:impacted_risky_asset_price}), the price before and just after liquidation satisfies the relation $S_{t}^{I}(t)=S_{t}$. However, we can approximate downward jumps of asset prices after liquidation by choosing small values of $\Theta$ in the smooth function $g$.  Further, our model allows us to consider the situation that liquidation by the large trader may have no long-term informational content.  The temporary impact on the market price decays as liquidity providers return to the market and other market participants realize that there may be no information about the fundamental value of the risky asset conveyed by the hedge fund's disorderly liquidation.

A possible extension is to consider a modified impact function with additional parameters and flexibility. For example, let
\begin{equation}
g(t;\Theta_{1},\Theta_{2},K_{1},K_{2})=
\begin{cases} 1-\frac{(K_{1}+K_{2})t}{\Theta_{1}} e^{1-\frac{t}{\Theta_{1}}} & 0 \leq t < \Theta_{1}, \\
1-K_{1}-\frac{K_{2}(t+\Theta_{2}-\Theta_{1})}{\Theta_{2}} e^{1-\frac{t+\Theta_{2}-\Theta_{1}}{\Theta_{2}}} &  \Theta_{1} \leq t.
\end{cases} \label{eq:impact_function4par}
\end{equation}
The  impact function given by (\ref{eq:impact_function4par}) incorporates both permanent and temporary market impacts  with $K_{1}$ and $K_{2}$ controlling the magnitude of permanent and temporary market impacts respectively. The  parameters $\Theta_{1}$ and $\Theta_{2}$ determine both the deviation and reversal speed (see Figure \ref{fig:impact_function4}). Moreover, at long-term time scale, the impact function can come back to a different level other than $1$.
For simplicity, we will use the impact function given by (\ref{eq:impact_function}) in this paper and suppose the parameters $\Theta$ and $K$ to be random variables.
\end{remark}

Considering the market price of the asset to be equal to the fundamental value before the liquidation time $\tau$  and to be the impacted asset price after liquidation, we have that the market price  is given as
$$S^{M}_{t}=1_{\{0\leq t <\tau\wedge T \}} S_{t}+1_{\{\tau\wedge T \leq t \leq T\}}S^{I}_{t}(\tau)$$
where $S_{t}$ and $S^{I}_{t}(\tau)$ are given by (\ref{eq:risky_asset_fundamental_value}) and (\ref{eq:impacted_risky_asset_price}) respectively. Moreover, for any $u\geq 0$, the dynamics of the process $S^{I}_{t}(u)$ satisfies the SDE
\begin{equation}
dS_{t}^{I}(u)=S_{t}^{I}(u)\left( \mu^{I}_{t}(u,\Theta,K)dt+\sigma dW_{t} \right), \quad u\leq t\leq T
\label{eq:after_liquidation_risky_asset_dynamics}
\end{equation}
where
\begin{equation}
\mu^{I}_{t}(u,\Theta,K)=\frac{g^\prime (t-u;\Theta,K)}{g(t-u;\Theta,K)}+\mu.
\label{eq:drift_term_market_price}
\end{equation}

\begin{remark}\label{remark:brownian_motion}
The process $(S_{t}^{I}(u),t\geq u)$ is adapted with respect to the filtration $\mathbb F\vee\sigma(\Theta, K)$ which is the initial enlargement of $\mathbb F$ by the random variables $(\Theta,K)$. As we suppose $\sigma(\Theta,K)$ is independent of $\mathbb{F}_{\infty}$,  the $(\mathbb{F},\mathbb{P})$-Brownian motion $W$ is also a $(\mathbb{F}\vee \sigma(\Theta,K) ,\mathbb{P})$-Brownian motion (see e.g. \citet[Section 5.9]{jeanblanc2009mathematical}.)
\end{remark}
\noindent Thus the market price process of the risky asset, denoted as $S^{M}=(S_{t}^{M},t\geq 0)$, satisfies the SDE
\begin{equation}
dS_{t}^{M}=S^{M}_{t}\left( \mu^{M}_{t}(\Theta,K) dt+\sigma dW_{t} \right)
\label{eq:global_risky_asset_dynamics}
\end{equation}
where
\begin{equation}
\mu^M_{t}(\Theta,K)=1_{\{ 0\leq t <\tau\wedge T   \}}\mu +1_{\{\tau\wedge T \leq t \leq T\}} \mu^{I}_{t}(\tau,\Theta,K).
\label{eq:mu_global_expression}
\end{equation}
We note that the market price admits a regime change at the liquidation time $\tau$, in particular on the drift term. We give an illustrative example as below.

\begin{example}
Suppose that the fundamental value process (\ref{eq:risky_asset_fundamental_value}) is given by the Black-Scholes model with parameters
$S^{M}_{0}=80, \mu=0.07, \sigma=0.2, \alpha=0.9, \Theta=0.1, K=0.1$. Figure \ref{fig:drift_term} shows that liquidation triggers a downward jump of the drift term. Afterward the drift term first rises quickly and then declines gradually back to the original drift term.  Correspondingly, Figure \ref{fig:asset_price} shows the sample market price processes of the asset subject to liquidation impact compared with the fundamental value process.
	
\begin{figure}
\centering
\begin{minipage}{.45\textwidth}
  \centering
  \includegraphics[width=\linewidth]{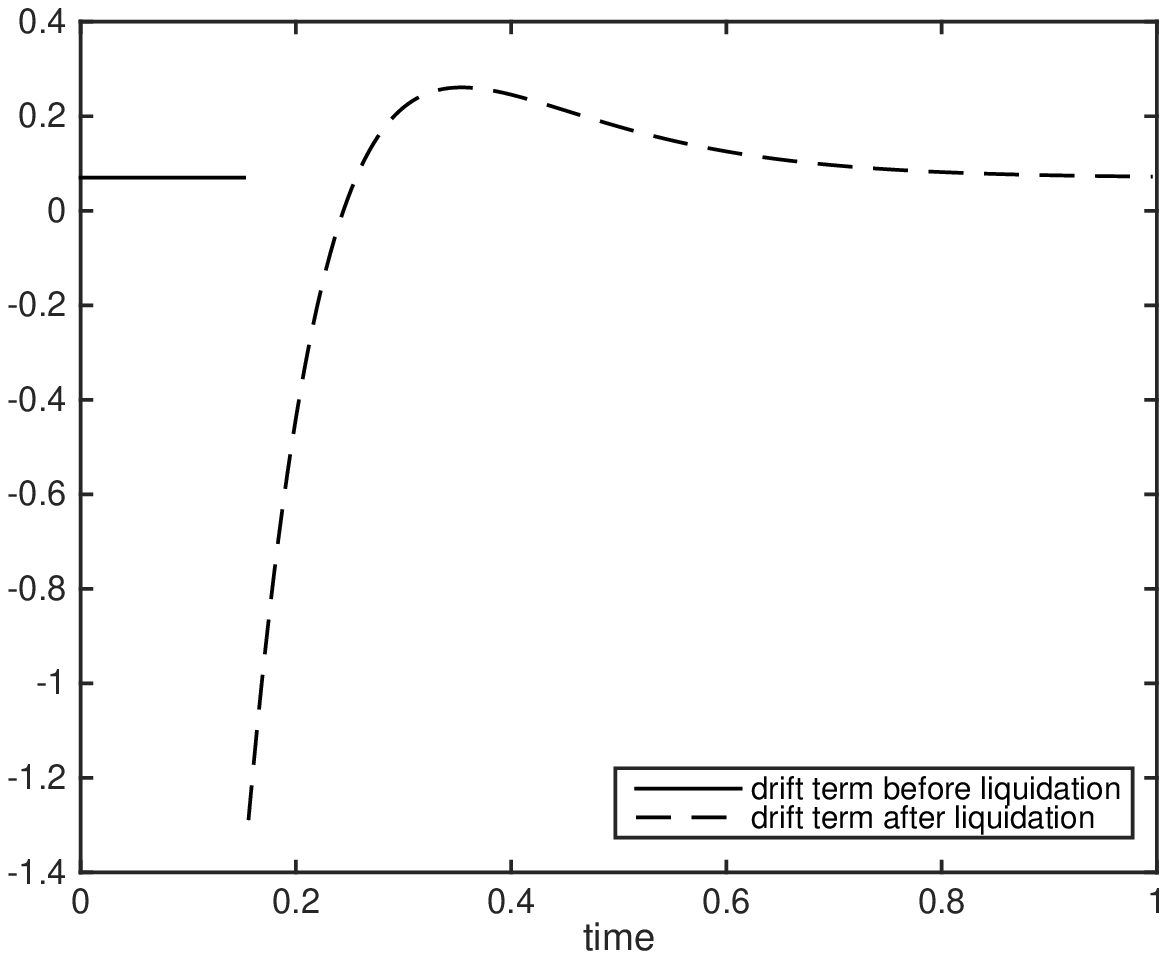}
 \caption{Drift $\mu$ before and after liquidation}
\label{fig:drift_term}
\end{minipage}
\begin{minipage}{.45\textwidth}
  \centering
  \includegraphics[width=\linewidth]{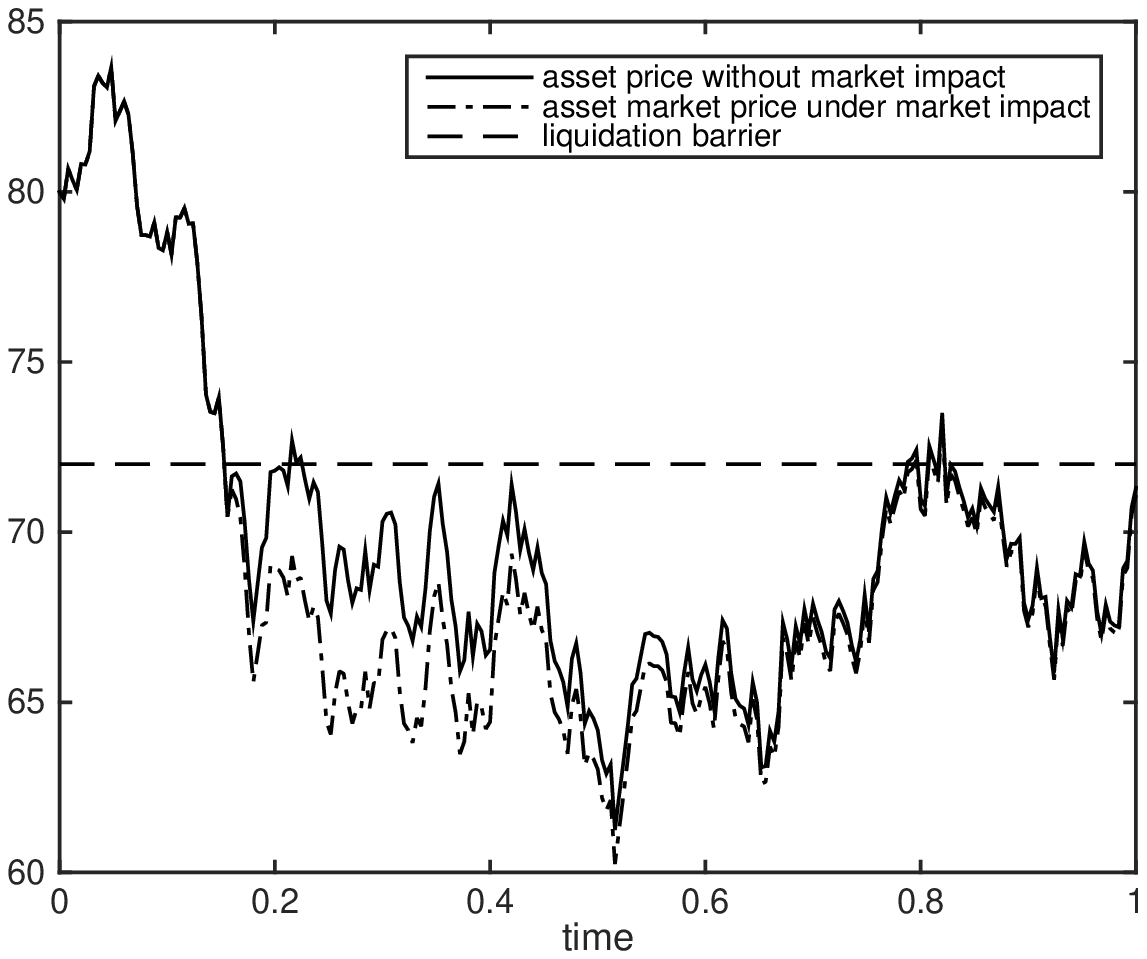}
  \caption{Corresponding asset price $S^M$}
\label{fig:asset_price}
\end{minipage}
\end{figure}
\end{example}

\subsection{The optimal investment problem}
Our objective is to consider the optimal investment problem from the  perspective of investors who trade in the market for the risky asset subject to price impact from disorderly liquidation of the hedge fund's position.  For simplicity we assume these agents may trade in the market for the risky asset without transaction costs.  We consider fully informed investors, partially informed investors and uninformed investors. We suppose that all investors have access to the  market price of the risky asset $S^M$ but their knowledge of the liquidation and price impact are different.  We further assume that  all  the investors know the values of the parameters $\mu$ and $\sigma$.

Fully informed investors observe the market price and are assumed to have complete knowledge of the mechanism of liquidation and the price impact function. Hence they know, in mathematical terms, the liquidation trigger level $\alpha$, the impact function $g$, and the values of the random variables  $\Theta$ and $K$ when liquidation occurs. Therefore, fully informed investors have complete knowledge of the dynamics of the market price process, together with the information of the price impact.  %

Partially informed investors are also able to observe the market price and know the liquidation trigger level $\alpha$, therefore, the liquidation time $\tau$ is also observable for them.  However, partially informed investors
do not have complete information about the price impact function.  We suppose the partially informed investors know the functional form of the price impact function $g$.  However, we assume the partially informed investors only know the distributions of $\Theta$ and $K$ but not the realized value that is necessary to have full knowledge of the price impact of liquidation. %

Uninformed investors are not aware of the liquidation trigger mechanism. They erroneously believe the market price process  follows the  Black-Scholes  dynamics (\ref{eq:risky_asset_fundamental_value})  without price impact. Considering uninformed investors allows us to quantify the value of information about the liquidation barrier and price impact, compared to a Merton-type investor.

We denote by $\mathbb F^{S}=(\mathcal{F}^{S}_{t})_{t\geq 0}$ the natural filtration generated by the market price process $S^{M}$. Since the market price coincides with the fundamental value process $S$ before liquidation, the liquidation time $\tau$, which is an $\mathbb{F}$-stopping time, is also an $\mathbb{F}^{S}$-stopping time. We summarize the knowledge of the various investors in the following assumption.

\begin{assumption}\label{assumption:information_investors}
 All investors observe the market price of the risky asset and know the values of the parameters $\mu$ and $\sigma$.  In addition certain market participants possess additional  information:
\begin{list}{(\roman{enumi})}{\usecounter{enumi}}
\item The observable information for fully informed investors is modeled by the filtration $$\mathcal{G}^{(2)}_{t}=\mathcal{F}^{S}_{t} \vee \sigma(\Theta,K) = \mathcal{F}_{t} \vee \sigma(\Theta,K),$$ they further know the liquidation barrier $\alpha$, as well as the form of the impact function $g$.
\item The observable information for partially informed investors is modeled by the filtration $$\mathcal{G}^{(1)}_{t}=\mathcal{F}^{S}_{t}, $$  they further  know the liquidation barrier $\alpha$, the form of the impact function $g$, and  the distribution of $(\Theta, K)$.
\item To compare with the above two types of insiders, we consider uninformed investors whose observable information is modeled by  the filtration $\mathcal{G}^{(0)}_{t}=\mathcal{F}^{S}_{t} $. They have no information  about the liquidation mechanism. Further, they do not update their knowledge of the drift process after $\tau$ and act as Merton-type investors,  erroneously considering  Black-Scholes dynamics with constant $\mu$ over the entire period $[0,T]$.
\end{list}
\end{assumption}

\begin{remark}

The common information to three types investors are represented by the "public" filtration $\mathbb{F}^{S}$ since the market price of the risky asset is observable to all investors. Assumption \ref{assumption:information_investors} implies that partially informed investors know the the law of $\mu^{M}_{t}(\Theta,K)$. This is similar to the weak information case of \citet{baudoin2003modeling}.

The essential differences among these three types of investors lie in  their knowledge on the drift term $\mu^{M}(\Theta,K)$ defined in (\ref{eq:drift_term_market_price}). Fully informed investors are able to completely observe the drift term.  Partially informed investors partially observe the drift term, corresponding to the case of partial observations considered by \citet{karatzas1998bayesian}. Partially informed investors may obtain an estimate of the drift term which is adapted to their observation process using filtering theory. Uninformed investors do not have any information about the liquidation mechanism and market impact which causes them to erroneously specify the drift term as $\mu$. That is, uninformed investors believe that the market prices follow the Black-Scholes dynamics (\ref{eq:risky_asset_fundamental_value}).  If the uninformed investor treated the drift of (\ref{eq:risky_asset_fundamental_value}) as an unobservable process he could perhaps apply filtering theory to improve his investment decisions even without knowing anything about the liquidation mechanism or market impact function.  However, in this paper we shall only consider the case of Assumption~\ref{assumption:information_investors}, that is of  uninformed investors who estimate the drift at the beginning of the  period and do not update it, since from their own view point no liquidation event happened during the period $[0,T]$.
The uninformed investors are mainly considered as a benchmark for comparison with the Merton model.
	
\end{remark}

We shall study the portfolio optimization problem for three types of investors in the remainder of this paper under logarithmic and power utility.

\section{Fully informed investors}\label{sect:fully_informed}

Fully informed investors choose their trading strategy to adjust the portfolio of assets according to their information accessibility.
 As discussed in Section \ref{section:model_setting} fully informed investors know the realized values of the random variables $\Theta$ and $K$.
The investment strategy  is characterized by a $\mathbb{G}^{(2)}$-predictable process $\pi^{(2)}$ which represents the proportion of wealth invested in the risky asset.  The admissible strategy set $\mathcal{A}^{(2)}$ is a collection of $\pi^{(2)}$
such that, for any $(\theta,k)\in (0,+\infty)\times (0,1)$,
\begin{equation}
\int_{0}^{T} | \pi^{(2)}_{t}\mu^{M}_{t}(\theta,k) | dt +\int_{0}^{T}|\pi^{(2)}_{t}\sigma|^{2}dt< \infty.
\label{eq:admissible_strategy_integration_condition}
\end{equation}
The risk aversion of the investors is modeled by classic utility functions $U$ defined on $(0,\infty)$ that are strictly increasing,  strictly concave, with continuous derivative $U'(x)$ on $(0,\infty)$, and satisfying
$$\lim_{x\rightarrow 0^{+}}U'(x)= +\infty \quad \quad \lim_{x\rightarrow\infty}U'(x)=0.$$
We define the $\mathbb{G}^{(2)}$-martingale measure  $\mathbb{Q}$ by the likelihood process
\begin{equation}
L_{t}:=\left.\frac{d\mathbb{Q}}{d\mathbb{P}}\right|_{\mathcal{G}^{(2)}_t}=\exp\left\{ -\int_{0}^{t}\frac{\mu^M_{v}(\Theta,K)}{\sigma}dW_{v}-\int_{0}^{t}\frac{\left(\mu^M_{v}(\Theta,K)\right)^{2}}{2\sigma^{2}}dv \right\}.
\label{eq:likelihood_process_fully_informed}
\end{equation}
As mentioned in  Remark \ref{remark:brownian_motion}, $W$ is a $(\mathbb{G}^{(2)},\mathbb{P})$-Brownian motion. By Girsanov's theorem, the process $W^{\mathbb{Q}}$ defined as
\begin{equation}
W^{\mathbb{Q}}_{t} = W_{t} + \int_{0}^{t}\frac{\mu^M_{v}(\Theta,K)}{\sigma}dv
\label{eq:girsanov_brownian_change_fully_informed}
\end{equation}
is an $(\mathbb{G}^{(2)}, \mathbb{Q})$-Brownian motion and the dynamics of the asset price $S^M$ under $\mathbb{Q}$ may be written as
\begin{equation}
dS_{t}^{M}=S^{M}_{t}\sigma dW_{t}^{\mathbb{Q}}.
\end{equation}

By taking  a strategy $\pi^{2}\in  \mathcal{A}^{(2)}$, the wealth process with initial endowment $X_{0} \in \mathcal{G}^{(2)}_{0}$
 evolves as
\begin{equation}
dX_t^{(2)}=X_t^{(2)}\pi_t^{(2)}(\mu^M_t(\Theta,K) dt+\sigma dW_t), \quad \quad 0 \leq t \leq T
\label{eq:wealth_fully_informed}
\end{equation}
that is
\begin{equation}
X^{(2)}_{t} = X_{0}+\int_{0}^{T}  \pi^{(2)}_{v}\sigma S^{M}_{v}dW^{\mathbb{Q}}_{v}.
\label{eq:wealth_under_Q_fully_informed}
\end{equation}
{\color{revised}  The fully informed investor's objective is to maximize her expected utility of terminal wealth
\begin{equation}
V^{(2)}_{0} := \sup_{\pi^{(2)}\in \mathcal{A}^{(2)}} \mathbb{E} \left[U\left(X^{(2)}_{T}\right)\right]
\label{eq:optimization_objective_fully_informed}
\end{equation}
or
\begin{equation}
V^{(2)}_{0}(\Theta,K):=  \esssup_{\pi^{(2)}(\Theta,K)\in \mathcal{A}^{(2)}} \mathbb{E} \left[U\left(X^{(2)}_{T}\right)|\mathcal{G}^{(2)}_{0}\right]
\label{eq:optimization_objective_fully_informed_conditional}
\end{equation}
where $\mathcal{G}^{(2)}_{0} = \sigma(\Theta,K)$.
The link between the optimization problems (\ref{eq:optimization_objective_fully_informed}) and 	
(\ref{eq:optimization_objective_fully_informed_conditional}) is
given by \citet{amendinger2003monetary}; if the supremum in (\ref{eq:optimization_objective_fully_informed_conditional})
is attained by some strategy in $\mathcal{A}^{(2)}$, then the $\omega$-wise optimum is also a solution to
(\ref{eq:optimization_objective_fully_informed}). }

 As $(\Theta,K)$ is independent\footnote{This assumption can be relaxed into a density Jacod hypothesis, using then the result of  \citet[Proposition 4.6]{amendinger2000martingale} for a martingale representation theorem.} of $\mathbb F$, a martingale representation theorem holds for $(\mathbb{G}^{(2)}, \mathbb{Q})$-local martingale, thus we adopt the standard "martingale approach" (see \citet{karatzas1998methods}) to solve the utility optimization problem (\ref{eq:optimization_objective_fully_informed_conditional}).
We may consider the following static optimization problem
\begin{equation}
\sup_{X^{(2)}_{T}\in \mathcal{V}} \mathbb{E} \left[U\left(X^{(2)}_{T}\right) |\mathcal{G}^{(2)}_{0}\right]
\label{eq:static_optimization_fully_informed}
\end{equation}
where $\mathcal{V} = \left\{ X^{(2)}_{T} \left| X^{(2)}_{T} = X_0 + \int_{0}^{T} \pi^{(2)}_v\sigma S^{M}_vdW^{\mathbb{Q}}_{v} \right. ,  ~\pi^{(2)}\in \mathcal{A}^{(2)}  \right\}$.
The optimization problem (\ref{eq:static_optimization_fully_informed}) can be solved by using  the method of Lagrange multipliers (see \citet[Proposition 4.5]{amendinger2003monetary}). The optimal terminal wealth $\hat X^{(2)}_{T}$ is given by
\begin{equation}
\hat X^{(2)}_{T}=I(\Lambda L_{T}),
\label{eq:optimal_terminal_wealth_fully_informed}
\end{equation}
where $I=(U^\prime)^{-1}$ and the $\mathcal{G}^{(2)}_{0}$-measurable  random variable $\Lambda$ is determined by
\begin{equation}
\mathbb{E}^{\mathbb{Q}}\left[\left. I(\Lambda L_{T})\right |\mathcal{G}^{(2)}_{0}\right]=X_{0}.
\label{eq:lambda_solution_fully_informed}
\end{equation}
In order to find the optimal strategy $\hat\pi^{(2)}$ one should provide the dynamics of the optimal wealth process
\begin{equation}
\hat X^{(2)}_{t}=\mathbb{E}^{\mathbb{Q}}\left[\hat X^{(2)}_{T}|\mathcal{G}^{(2)}_{t}\right].
\label{eq:optimal_wealth_process_fully_informed}
\end{equation}
Since $(X^{(2)}_{t})_{t\in[0,T]}$ is a $(\mathbb{G}^{(2)},\mathbb{Q})$-martingale, there exists a $\mathbb{G}^{(2)}$-adapted process $J$ such that
\begin{equation}
\hat X^{(2)}_{t}=\mathbb{E}^{\mathbb{Q}}[\left. \hat{X}^{(2)}_{T} \right| \mathcal{G}^{(2)}_{0}]+\int_{0}^{t}J_{v}dW^{\mathbb{Q}}_{v}.
\label{eq:wealth_martingale_representation_fully_informed}
\end{equation}
Substituting (\ref{eq:lambda_solution_fully_informed}) into (\ref{eq:wealth_martingale_representation_fully_informed}) we have
\begin{equation}
\hat X^{(2)}_{t}=X_{0}+\int_{0}^{t}J_{v}dW^{\mathbb{Q}}_{v}.
\label{eq:wealth_martingale_representation_fully_informed_with_inital_x0}
\end{equation}

Comparing (\ref{eq:wealth_under_Q_fully_informed}) with (\ref{eq:wealth_martingale_representation_fully_informed_with_inital_x0}), we obtain the optimal strategy
\begin{equation}
\hat\pi^{(2)}_{t}=\frac{J_{t}}{\sigma \hat X^{(2)}_{t}}.
\end{equation}
Notice that the optimal strategy $(\hat\pi^{(2)}_{t})_{t\in[0,T]}$ involves the process $J$ which is implicitly determined by the martingale representation as in (\ref{eq:wealth_martingale_representation_fully_informed}). To obtain an explicit expression for the optimal strategy, we will consider power and logarithmic utilities in the following subsections.

\subsection{Power utility} \label{sect:fully_power_utility}
We first consider the power utility  $U(x)=\frac{x^p}{p}$, $0<p<1$.  Using the fact that $I(x) = x^{p-1}$ and by (\ref{eq:optimal_terminal_wealth_fully_informed})-(\ref{eq:lambda_solution_fully_informed}) we obtain the optimal terminal wealth
\begin{equation}
\hat X^{(2)}_{T}=\frac{X_{0}}{\mathbb{E}\left[\left(L_{T}\right)^{\frac{p}{p-1}}\right]}\left(L_{T}\right)^{\frac{1}{p-1}}
\label{eq:optimal_terminal_wealth_power_fully_informed}
\end{equation}
where $L_{T}$ is given by (\ref{eq:likelihood_process_fully_informed}).
The following proposition gives then the  optimal expected utility as well as the optimal strategy:

\begin{proposition}
 For  power utility  $U(x)=\frac{x^p}{p}$, $0<p<1$, the optimal expected utility is
 \begin{equation}
V^{(2)}_{0}(\Theta,K)=\frac{(X_{0})^{p}}{p}\left(\mathbb{E}\left[\left(L_{T}\right)^{\frac{p}{p-1}}\right]\right)^{1-p}
\label{eq:optimal_utility_power_fully_informed}
\end{equation}
  and  the optimal strategy is given by
\begin{equation}
\hat\pi_t^{(2)}=1_{\{ 0 \leq t < \tau \wedge T  \}}\hat \pi^{(2,b)}_{t}+1_{\{  \tau \wedge T \leq t \leq  T \}}\hat \pi_{t}^{(2,a)}, \quad t \in[0,T]
\label{eq:optimal_strategy_power_fully_informed_decomposed}
\end{equation}
where
\begin{align}
\hat \pi^{(2,b)}_{t} & = \frac{\mu}{(1-p)\sigma^2}+\frac{ Z^{H}_{t}}{\sigma H_{t}}, \quad  t \in  \lbr 0, \tau\wedge T\lbr ,\label{eq:optimal_strategy_power_fully_informed_before_liquidation}\\
\hat \pi^{(2,a)}_{t} & = \frac{\mu_{t}^{I}(\tau , \Theta,K)}{(1-p)\sigma^2}, \quad    t \in  \lbr \tau\wedge T, T \rbr
\label{eq:optimal_strategy_power_fully_informed_after_liquidation}
\end{align}
with $(H,Z^{H})$ satisfying the following linear BSDE
\begin{equation}
H_{t} = 1+\int_{t}^{T}\left( \frac{p\left(\mu^M_v(\Theta,K)\right)^2}{2(1-p)^{2}\sigma^{2}}H_{v}+\frac{p\mu^M_v(\Theta,K)}{(1-p)\sigma} Z^{H}_{v} \right)dv-\int_{t}^{T}Z^{H}_{v}dW_{v}.
\label{eq:H_BSDE_full_before_liquidation}
\end{equation}
\end{proposition}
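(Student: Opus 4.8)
The plan is to build directly on the martingale-approach reduction already recorded in (\ref{eq:optimal_terminal_wealth_power_fully_informed}), working throughout conditionally on $\mathcal{G}^{(2)}_0=\sigma(\Theta,K)$, so that $(\Theta,K)$ act as frozen parameters. The expected-utility formula (\ref{eq:optimal_utility_power_fully_informed}) is then a pure substitution. Writing $D:=\mathbb{E}[(L_T)^{p/(p-1)}\mid\mathcal{G}^{(2)}_0]$, the optimal terminal wealth reads $\hat X^{(2)}_T=(X_0/D)(L_T)^{1/(p-1)}$, hence $(\hat X^{(2)}_T)^p=(X_0/D)^p(L_T)^{p/(p-1)}$. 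Taking $\mathbb{E}[\,\cdot\mid\mathcal{G}^{(2)}_0]$, pulling out the $\mathcal{G}^{(2)}_0$-measurable factor $(X_0/D)^p$, and using $\mathbb{E}[(L_T)^{p/(p-1)}\mid\mathcal{G}^{(2)}_0]=D$ gives $V^{(2)}_0(\Theta,K)=\tfrac1p X_0^p D^{1-p}$, which is exactly (\ref{eq:optimal_utility_power_fully_informed}). No change of measure is needed here, since the objective (\ref{eq:optimization_objective_fully_informed_conditional}) is taken under $\mathbb{P}$.

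For the strategy I would first identify the BSDE solution probabilistically. Set $H_t:=\mathbb{E}[(L_T/L_t)^{p/(p-1)}\mid\mathcal{G}^{(2)}_t]$, so $H_T=1$. From (\ref{eq:likelihood_process_fully_informed}) one computes
\[
(L_T/L_t)^{p/(p-1)}=\exp\Bigl\{\int_t^T\tfrac{p(\mu^M_v)^2}{2(1-p)\sigma^2}\,dv+\int_t^T\tfrac{p\mu^M_v}{(1-p)\sigma}\,dW_v\Bigr\}.
\]
Writing $a_v=\tfrac{p(\mu^M_v)^2}{2(1-p)^2\sigma^2}$ and $b_v=\tfrac{p\mu^M_v}{(1-p)\sigma}$ for the coefficients appearing in (\ref{eq:H_BSDE_full_before_liquidation}), the closed-form solution of that linear BSDE is $H_t=\mathbb{E}[\Gamma_{t,T}\mid\mathcal{G}^{(2)}_t]$ with $\Gamma_{t,T}=\exp\{\int_t^T(a_v-\tfrac12 b_v^2)\,dv+\int_t^T b_v\,dW_v\}$, and a one-line check gives $a_v-\tfrac12 b_v^2=\tfrac{p(\mu^M_v)^2}{2(1-p)\sigma^2}$, so $\Gamma_{t,T}=(L_T/L_t)^{p/(p-1)}$. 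Thus the $H$ above, paired with the integrand $Z^H$ obtained from the martingale representation of $t\mapsto\mathbb{E}[\Gamma_{0,T}\mid\mathcal{G}^{(2)}_t]$ (available by the same initial-enlargement argument as in Remark \ref{remark:brownian_motion}, using independence of $(\Theta,K)$ and $\mathbb{F}$), solves (\ref{eq:H_BSDE_full_before_liquidation}).

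Next I would re-express the optimal wealth and read off the control. Combining $\hat X^{(2)}_t=\mathbb{E}^{\mathbb{Q}}[\hat X^{(2)}_T\mid\mathcal{G}^{(2)}_t]$, the Bayes rule $L_t\,\mathbb{E}^{\mathbb{Q}}[\,\cdot\mid\mathcal{G}^{(2)}_t]=\mathbb{E}[L_T\,\cdot\mid\mathcal{G}^{(2)}_t]$, and $L_T(L_T)^{1/(p-1)}=(L_T)^{p/(p-1)}$, yields $\hat X^{(2)}_t=(X_0/D)(L_t)^{1/(p-1)}H_t$. Applying It\^o's formula to $(L_t)^{1/(p-1)}$ (an explicit exponential, with $dL_t=-(\mu^M_t/\sigma)L_t\,dW_t$) together with the product rule and $dH_t=-(a_tH_t+b_tZ^H_t)\,dt+Z^H_t\,dW_t$, the $dW_t$-coefficient of $\hat X^{(2)}_t$ equals $\hat X^{(2)}_t\bigl(\tfrac{\mu^M_t}{(1-p)\sigma}+\tfrac{Z^H_t}{H_t}\bigr)$. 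Since $\hat X^{(2)}$ is a $(\mathbb{G}^{(2)},\mathbb{Q})$-martingale, passing to $W^{\mathbb{Q}}$ via (\ref{eq:girsanov_brownian_change_fully_informed}) leaves this coefficient as $J_t$ (the drift must vanish, which I would use as a consistency check on the BSDE driver), so $\hat\pi^{(2)}_t=J_t/(\sigma\hat X^{(2)}_t)=\tfrac{\mu^M_t}{(1-p)\sigma^2}+\tfrac{Z^H_t}{\sigma H_t}$. Inserting the pre-liquidation value $\mu^M_t=\mu$ from (\ref{eq:mu_global_expression}) gives (\ref{eq:optimal_strategy_power_fully_informed_before_liquidation}) on $\lbr 0,\tau\wedge T\lbr$; here $Z^H$ is genuinely nonzero because the future drift still depends on the random hitting time $\tau$.

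The delicate point, and the main obstacle, is proving that the hedging term $Z^H_t/(\sigma H_t)$ vanishes after liquidation, leaving the pure Merton ratio (\ref{eq:optimal_strategy_power_fully_informed_after_liquidation}). The key observation is that for $t\ge\tau$ the quantities $\tau,\Theta,K$ are all $\mathcal{G}^{(2)}_t$-measurable, so on $[t,T]$ the drift $\mu^M_v=\mu^I_v(\tau,\Theta,K)$ is a deterministic function of time given $\mathcal{G}^{(2)}_t$. Then $\int_t^T b_v\,dW_v$ is conditionally Gaussian with deterministic coefficients, and the Gaussian Laplace transform collapses $H_t=\mathbb{E}[\Gamma_{t,T}\mid\mathcal{G}^{(2)}_t]$ to the explicit deterministic expression $H_t=\exp\{\int_t^T a_v\,dv\}$, a function of $(t,\tau,\Theta,K)$ alone. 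Being a finite-variation function of $t$ (given $\mathcal{G}^{(2)}_\tau$), this $H$ carries no $dW$-term on $\lbr\tau\wedge T,T\rbr$, i.e. $Z^H_t=0$ there, whence $\hat\pi^{(2,a)}_t=\mu^I_t(\tau,\Theta,K)/((1-p)\sigma^2)$. I would close by recording the integrability bookkeeping: finiteness of $D$ and the admissibility condition (\ref{eq:admissible_strategy_integration_condition}) ensure both that (\ref{eq:H_BSDE_full_before_liquidation}) is well posed and that the candidate $\hat\pi^{(2)}$ lies in $\mathcal{A}^{(2)}$, so that the $\omega$-wise optimizer genuinely solves (\ref{eq:optimization_objective_fully_informed_conditional}).
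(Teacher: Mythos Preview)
Your proof is correct and follows essentially the same route as the paper: expressing $\hat X^{(2)}_t=(X_0/D)(L_t)^{1/(p-1)}H_t$ via Bayes' formula, deriving the dynamics, reading off $\hat\pi^{(2)}_t=\tfrac{\mu^M_t}{(1-p)\sigma^2}+\tfrac{Z^H_t}{\sigma H_t}$, and then observing that for $t\ge\tau$ the drift is $\mathcal G^{(2)}_t$-measurable so $H_t=\exp\{\int_t^T a_v\,dv\}$ and $Z^H_t=0$. The only minor difference is that you identify $(H,Z^H)$ as solving (\ref{eq:H_BSDE_full_before_liquidation}) by invoking the closed-form representation of linear BSDEs, whereas the paper derives the BSDE by applying It\^o's formula to $H_t=M_t/D_t$ after a martingale representation for $M_t:=H_tD_t$; your shortcut is equivalent and arguably cleaner.
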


\begin{proof}

Following \citet{bjork2010optimal}  we find the explicit expression for the optimal strategy, by computing the dynamics of the optimal wealth process. Applying the abstract Bayes' formula to (\ref{eq:optimal_wealth_process_fully_informed}), we obtain
\begin{align}
\hat X^{(2)}_{t} =& \mathbb{E}^{\mathbb{Q}}\left[\hat X^{(2)}_{T}|\mathcal{G}^{(2)}_{t}\right] \nonumber\\
=& \frac{1}{L_{t}} \mathbb{E}\left[\hat X^{(2)}_{T}L_{T}|\mathcal{G}^{(2)}_{t}\right]. \label{eq:optimal_termimal_wealth_bayes_power_fully_informed}
\end{align}
Substituting (\ref{eq:likelihood_process_fully_informed}) and (\ref{eq:optimal_terminal_wealth_power_fully_informed}) into (\ref{eq:optimal_termimal_wealth_bayes_power_fully_informed}) we have
\begin{align}
\hat X^{(2)}_{t} =& \frac{X_{0}}{\mathbb{E}[(L_{T})^{\frac{p}{p-1}}]L_{t}} \mathbb{E}\left[ (L_{T})^{\frac{p}{p-1}}|\mathcal{G}^{(2)}_{t}\right] \nonumber \\
=& \frac{X_{0}}{\mathbb{E}[(L_{T})^{\frac{p}{p-1}}]L_{t}}\mathbb{E}\left[ \exp\left\{ \int_{0}^{T}\frac{p\mu^M_{v}(\Theta,K)}{(1-p)\sigma}dW_{v}+\int_{0}^{T}\frac{p\left(\mu^{M}_{v}(\Theta,K)\right)^{2}}{2(1-p)\sigma^{2}}dv \right\}|\mathcal{G}^{(2)}_{t}\right] \nonumber \\
=&  \frac{X_{0}(L_{t})^{\frac{1}{p-1}}}{\mathbb{E}[(L_{T})^{\frac{p}{p-1}}]}\mathbb{E}\left[ \exp\left\{ \int_{t}^{T}\frac{p\mu^{M}_{v}(\Theta,K)}{(1-p)\sigma}dW_{v}+\int_{t}^{T}\frac{p\left(\mu^{M}_{v}(\Theta,K)\right)^{2}}{2(1-p)\sigma^{2}}dv \right\}|\mathcal{G}^{(2)}_{t}\right]
\end{align}
Defining
\begin{equation}
H_{t}:=\mathbb{E}[\exp\left\{ \int_{t}^{T} \frac{p\mu_{v}^{M}(\Theta,K)}{(1-p)\sigma}dW_{v} +\int_{t}^{T} \frac{p(\mu_{v}^{M}(\Theta,K))^2}{2(1-p)\sigma^{2}}dv \right\} | \mathcal{G}^{(2)}_{t} ]
\end{equation}
the optimal wealth process writes as
\begin{equation}
\hat X^{(2)}_{t}= \frac{X_{0}}{H_{0}}(L_{t})^{\frac{1}{p-1}}H_{t}.
\label{eq:wealth_intermsof_H}
\end{equation}
In order to find the dynamics of $(H_{t})_{t \in[0,T]}$, we first remark  that $(M_{t}:=H_{t}D_{t})_{t \in[0,T]}$ is a $(\mathbb{G}^{(2)},\mathbb{P})$-martingale, where
\begin{equation}
D_{t}:=\exp\left\{ \int_{0}^{t} \frac{p\mu_{v}^{M}(\Theta,K)}{(1-p)\sigma}dW_{v} +\int_{0}^{t} \frac{p(\mu_{v}^{M}(\Theta,K))^2}{2(1-p)\sigma^{2}}dv \right\}.
\label{eq:D_process}
\end{equation}
 By the martingale representation theorem there exists a $\mathbb{G}^{(2)}$-adapted  process $Z^{M}$ such that
\begin{equation}
M_{t}=M_{0}+\int_{0}^{t} M_{v}Z^{M}_{v} dW_{v}.
\end{equation}
From equation (\ref{eq:D_process})
\begin{equation}
d(\frac{1}{D_{t}})=(\frac{1}{D_{t}})\left\{ \left(\frac{p^{2}(\mu_{t}^{M}(\Theta,K))^{2}}{2(1-p)^{2}\sigma^{2}}-\frac{p(\mu_{t}^{M}(\Theta,K))^{2}}{2(1-p)\sigma^{2}}\right)dt
-\frac{p\mu_{t}^{M}(\Theta,K)}{(1-p)\sigma}dW_{t} \right\}
\end{equation}
which leads to the following dynamics for the process
$(H_{t})_{t \in[0,T]}$
$$
dH_{t} =H_{t}\left\{ \left( \frac{p^2(\mu_{t}^{M}(\Theta,K))^{2}}{2(1-p)^{2}\sigma^{2}}-\frac{p(\mu_{t}^{M}(\Theta,K))^{2}}{2(1-p)\sigma^{2}}- \frac{p\mu_{t}^{M}(\Theta,K)}{(1-p)\sigma}Z^{M}_{t}\right)dt +\left( Z^{M}_{t}- \frac{p\mu_{t}^{M}(\Theta,K)}{(1-p)\sigma}\right)dW_{t} \right\}.$$
Denoting $Z^{H}_{t}: =H_{t}Z^{M}_{t}- \frac{p\mu_{t}^{M}(\Theta,K)}{(1-p)\sigma}$
and using the terminal condition $H_{T}=1$, $(H_{t})_{t \in[0,T]}$   satisfies the  following BSDE
\begin{equation}
H_{t}= 1+\int_{t}^{T}\left( \frac{p(\mu_{v}^{M}(\Theta,K))^2}{2(1-p)^{2}\sigma^{2}}H_{v}+\frac{p\mu_{v}^{M}(\Theta,K)}{(1-p)\sigma} Z^{H}_{v} \right)dv -\int_{t}^{T}Z^{H}_{v}dW_{v}.
\label{eq:H_BSDE}
\end{equation}
Thus the dynamics of the optimal wealth process, using (\ref{eq:wealth_intermsof_H}), are
\begin{equation}
d\hat X^{(2)}_{t} = \hat X^{(2)}_t\left( \frac{\left(\mu_{t}^{M}(\Theta,K)\right)^{2}}{(1-p)\sigma^{2}}+\frac{ \mu_{t}^{M}(\Theta,K) Z^{H}_{t}}{ \sigma H_{t}} \right) dt + \hat X^{(2)}_{t}\left( \frac{\mu_{t}^{M}(\Theta,K)}{(1-p)\sigma}+\frac{ Z^{H}_{t}}{ H_{t}} \right)dW_{t}.
\label{eq:optimal_wealth_power_SDE}
\end{equation}
that leads to the  optimal strategy (by comparing (\ref{eq:optimal_wealth_power_SDE}) with (\ref{eq:wealth_fully_informed}))
\begin{equation}
\hat \pi^{(2)}_{t}=\frac{\mu_{t}^{M}(\Theta,K)}{(1-p)\sigma^2}+\frac{ Z^{H}_{t}}{\sigma H_{t}}.
\label{eq:optimal_strategy_power_fully_informed}
\end{equation}
We decompose the time horizon $[0,T]$ into two random time intervals $\lbr 0,\tau\wedge T \lbr $ and $\lbr \tau\wedge T, T \rbr$. On the random interval  $\lbr \tau\wedge T, T \rbr$, the fully informed investor observes the drift term  $\mu^{M}$  thus the  BSDE  (\ref{eq:H_BSDE})
 can be solved explicitly on $\lbr \tau\wedge T, T \rbr$:
\begin{align}
H_{t} &= \exp \left\{\int_{t}^{T}\frac{p(\mu_{v}^{I}(\tau,\Theta,K))^2}{2(1-p)^{2}\sigma^{2}} dv \right\}, \label{eq:H_after_liquidation_solution}\\
Z^{H}_{t} &= 0. \label{eq:Z_H_after_liquidation_solution}
\end{align}
Recalling  (\ref{eq:mu_global_expression}) and using (\ref{eq:H_after_liquidation_solution})-(\ref{eq:Z_H_after_liquidation_solution})  we may decompose the optimal strategy in (\ref{eq:optimal_strategy_power_fully_informed}) into two parts:
\begin{equation*}
\hat\pi_t^{(2)}=1_{\{ 0 \leq t < \tau \wedge T  \}}\hat \pi^{(2,b)}_{t}+1_{\{  \tau \wedge T \leq t \leq  T  \}}\hat \pi_{t}^{(2,a)} \quad
(\ref{eq:optimal_strategy_power_fully_informed_decomposed})
\end{equation*}
where
\begin{align}
\hat \pi^{(2,b)}_{t} & = \frac{\mu}{(1-p)\sigma^2}+\frac{ Z^{H}_{t}}{\sigma H_{t}}, \quad   t  \in \lbr 0, \tau\wedge T \lbr ,\quad (\ref{eq:optimal_strategy_power_fully_informed_before_liquidation})\\
\hat \pi^{(2,a)}_{t} & = \frac{\mu_{t}^{I}(\tau,  \Theta,K)}{(1-p)\sigma^2}, \quad t \in \lbr  \tau \wedge T , T \rbr .
\quad (\ref{eq:optimal_strategy_power_fully_informed_after_liquidation})
\end{align}
\end{proof}
The optimal strategy after liquidation in (\ref{eq:optimal_strategy_power_fully_informed_after_liquidation}) is essentially a Merton-type strategy. The part before liquidation in (\ref{eq:optimal_strategy_power_fully_informed_before_liquidation}) is the sum of a Merton strategy and an extra component\footnote{This extra term is called "hedging demand for parameter risk" by Bj\"ork \cite{bjork2010optimal}.} which is
determined by the solution of the BSDE (\ref{eq:H_BSDE_full_before_liquidation}). It is hard to obtain a closed-form solution for the BSDE (\ref{eq:H_BSDE_full_before_liquidation}), however, we may solve the BSDE (\ref{eq:H_BSDE_full_before_liquidation}) numerically which will be discussed in Section \ref{sect:numerical_result}.

We next consider the case of logarithmic utility for the fully informed investor.

\subsection{Logarithmic utility}
In this section we consider the logarithmic utility $U(x)=\ln(x)$. Using the fact that $I(x) = \frac{1}{x}$ and by (\ref{eq:optimal_terminal_wealth_fully_informed})-(\ref{eq:lambda_solution_fully_informed}) we obtain the optimal terminal wealth
\begin{equation}
\hat X^{(2)}_{T}=\frac{X_{0}}{L_{T}}
\label{eq:optimal_terminal_wealth_log_fully_informed}
\end{equation}
where $L_{T}$ is given by (\ref{eq:likelihood_process_fully_informed}). The optimal expected utility is
\begin{equation*}
V^{(2)}_{0}(\Theta,K)=\ln(X_{0})-\mathbb{E}\left[ \ln(L_{T}) \right].
\end{equation*}
Applying the abstract Bayes' formula to (\ref{eq:optimal_wealth_process_fully_informed}) and using  (\ref{eq:optimal_terminal_wealth_log_fully_informed}), we obtain
\begin{align}
\hat X^{(2)}_{t} =& \mathbb{E}^{\mathbb{Q}}\left[\hat X^{(2)}_T|\mathcal{G}^{(2)}_{t}\right] \nonumber \\
=& \frac{1}{L_{t}} \mathbb{E}\left[\hat X^{(2)}_{T}L_{T}|\mathcal{G}^{(2)}_{t}\right] \nonumber \\
=&  \frac{X_{0}}{L_{t}} \label{eq:optimal_termimal_wealth_bayes_log_fully_informed}
\end{align}
whose dynamics are given by
\begin{equation}
d\hat X^{(2)}_{t}= \frac{\left(\mu^{M}_{t}(\Theta,K)\right)^{2}}{2\sigma^{2}}dt+\frac{\mu^{M}_{t}(\Theta,K)}{\sigma}dW_{t}.
\label{eq:optimal_wealth_log_SDE}
\end{equation}
Comparing (\ref{eq:optimal_wealth_log_SDE}) with (\ref{eq:wealth_fully_informed}) we obtain the optimal strategy
\begin{equation*}
\hat \pi^{(2)}_{t}=\frac{\mu_t^M(\Theta,K)}{\sigma^{2}}.
\end{equation*}
Recalling  (\ref{eq:mu_global_expression}) we may decompose the optimal strategy into two parts
\begin{equation}
\hat\pi_{t}^{(2)}=1_{\{ 0 \leq t < \tau \wedge T  \}}\hat \pi^{(2,b)}_t+1_{\{ \tau\wedge T \leq t  \leq  T  \}}\hat \pi_{t}^{(2,a)}
\label{eq:optimal_strategy_log_fully_informed}
\end{equation}
where
\begin{align}
\hat \pi^{(2,b)}_t & = \frac{\mu}{\sigma^{2}}, \quad t \in \lbr 0, \tau \wedge T \lbr , \label{eq:optimal_strategy_fully_log_before} \\
\hat \pi^{(2,a)}_{t} & = \frac{\mu^{I}_{t}(\tau,\Theta,K)}{\sigma^{2}}, \quad t \in \lbr \tau\wedge T ,   T \rbr. \label{eq:optimal_strategy_fully_log_after}
\end{align}
The optimal trading strategy for the fully informed investor is composed of two Merton strategies before  and after-liquidation.
Accordingly we decompose the optimal wealth process $\hat{X}^{(2)}_{t}$  as
\begin{equation*}
\hat X^{(2)}_{t}=1_{\{ 0 \leq t < \tau \wedge T  \}}\hat{X}^{(2,b)}_t+1_{\{ \tau\wedge T \leq t  \leq  T  \}}\hat{X}^{(2,a)}_{t}
\end{equation*}
where $\hat{X}^{(2,b)}$ and $\hat{X}^{(2,a)}$ satisfy the following SDEs
\begin{align}
d\hat X_{t}^{(2,b)} & = \hat X_{t}^{(2,b)} \hat\pi^{(2,b)}_{t} (\mu_{t} dt+\sigma dW_{t}), & t \in \lbr 0, \tau \wedge T \lbr, 	\label{eq:wealth_fully_informed_log_before_liquidation_SDE}\\
d\hat X_{t}^{(2,a)} & = \hat X_{t}^{(2,a)}\hat{\pi}^{(2,a)}_{t} \left\{ \mu^{I}_{t}(\tau,\Theta,K)dt+ \sigma d W_{t}\right\}, & t \in \lbr \tau\wedge T ,  T \rbr.
\label{eq:wealth_fully_informed_log_after_liquidation_SDE}
\end{align}
Then we decompose the expected utility of terminal wealth into two parts depending on if liquidation occurs before or after time $T$:
\begin{equation}
V^{(2)}_{0}(\Theta,K)=\mathbb{E}[1_{\{\tau>T\}}\ln (\hat X^{(2,b)}_{T})|\mathcal G_0^{(2)} ]+\mathbb{E}[1_{\{\tau\leq T\}}\ln (\hat X^{(2,a)}_{T})|\mathcal G_0^{(2)} ].
\label{eq:utility_decomposition_log_fully_informed}
\end{equation}
The two conditional expectations in (\ref{eq:utility_decomposition_log_fully_informed}) are calculated in Lemma \ref{lemma:optimal_utility_log_term_one} and \ref{lemma:optimal_utility_log_term_two} respectively. Combining those lemmas we obtain the following result.

\begin{proposition}\label{theorem:optimal_utility_expression_log}
The optimal log expected utility for fully informed investors is
\begin{align*}
& V^{(2)}_{0}(\Theta,K)= \\
&\left\{ \mathcal{N}\left(\frac{-\frac{\ln \alpha}{\sigma}+(\frac{\mu}{\sigma}-\frac{1}{2}\sigma) T}{\sqrt{T}}\right)-\exp\left(\frac{2\mu}{\sigma^{2}}-\ln\alpha\right)\mathcal{N}\left(\frac{\frac{\ln \alpha}{\sigma}+(\frac{\mu}{\sigma}-\frac{1}{2}\sigma) T}{\sqrt{T}}\right) \right\} \times \left( \ln(X_{0})+\frac{1}{2}(\mu-\frac{\mu^{2}}{\sigma^{2}})T  \right) \\
&+ \int_{\frac{\ln\alpha}{\sigma}}^{0}\int_{y}^{\infty} \frac{2\mu x(x-2y)}{\sqrt{2\pi T^{3}}}\exp\left\{(\frac{\mu}{\sigma}-\frac{1}{2}\sigma) x-\frac{1}{2}(\frac{\mu}{\sigma}-\frac{1}{2}\sigma)^{2}T-\frac{1}{2T}(2y-x)^{2}\right \}dxdy \\
&-\frac{\ln\alpha}{\sigma}  \int_{0}^{T}\frac{1}{\sqrt{2\pi t^{3}}}\exp\left\{-\frac{1}{2t}\left(\frac{\ln\alpha}{\sigma}-(\frac{\mu}{\sigma}-\frac{1}{2}\sigma) t\right)^{2}\right\}h^{(2)}(t,\Theta,K)dt
\end{align*}	
where
\begin{equation*}
h^{(2)}(t;\theta,k):=\ln X_{0}+\frac{\mu\ln\alpha}{\sigma^{2}} +\frac{\mu}{2}t-\frac{\mu^{2}}{2\sigma^{2}}t +\int_{t}^{T}\frac{\left(\mu^{I}_{v}(t,\theta,k)\right)^{2}}{2\sigma^2}dv.
\end{equation*}
\end{proposition}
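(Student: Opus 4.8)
The plan is to evaluate the two conditional expectations in the decomposition (\ref{eq:utility_decomposition_log_fully_informed}) separately and then add them. Both the before- and after-liquidation optimal wealth processes follow geometric Brownian dynamics driven by the Merton fractions (\ref{eq:optimal_strategy_fully_log_before})--(\ref{eq:optimal_strategy_fully_log_after}), so their logarithms are explicit; the only genuinely probabilistic ingredient is the law of the first-passage time $\tau$. Writing $S_t=S_0\exp((\mu-\tfrac12\sigma^2)t+\sigma W_t)$, I would first note that $\tau=\inf\{t\ge 0:Y_t\le b\}$ with $Y_t:=W_t+mt$, $m:=\tfrac{\mu}{\sigma}-\tfrac12\sigma$ and $b:=\tfrac{\ln\alpha}{\sigma}<0$, so everything reduces to first-passage functionals of a Brownian motion with drift $m$ to the negative level $b$. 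Since $(\Theta,K)$ is independent of $\mathbb F$ (Remark \ref{remark:brownian_motion}), the first term of (\ref{eq:utility_decomposition_log_fully_informed}) is in fact unconditional, whereas the second depends on $(\Theta,K)$ only through the post-liquidation drift $\mu^{I}$.

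For the survival term (Lemma \ref{lemma:optimal_utility_log_term_one}) I would integrate (\ref{eq:wealth_fully_informed_log_before_liquidation_SDE}) to obtain $\ln\hat X_T^{(2,b)}=\ln X_0+\tfrac{\mu^2}{2\sigma^2}T+\tfrac{\mu}{\sigma}W_T$ and substitute $W_T=Y_T-mT$. The event $\{\tau>T\}$ equals $\{\underline Y_T>b\}$, where $\underline Y_T:=\min_{0\le s\le T}Y_s$, so the required object is the joint law of $(Y_T,\underline Y_T)$. Applying the reflection principle under Wiener measure and then a Cameron--Martin/Girsanov change to restore the drift $m$ gives the density $f(x,y)=\tfrac{2(x-2y)}{\sqrt{2\pi T^3}}\exp\big(mx-\tfrac12 m^2 T-\tfrac{(x-2y)^2}{2T}\big)$ on $\{y\le 0,\ y\le x\}$. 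Splitting $\mathbb E[1_{\{\tau>T\}}\ln\hat X_T^{(2,b)}]$ into its constant part times $\mathbb P(\underline Y_T>b)$ and its part linear in $Y_T$, the survival probability evaluates to the difference of the two $\mathcal N$-terms, while the constant prefactor collapses, after cancelling $\tfrac{\mu}{\sigma}mT=(\tfrac{\mu^2}{\sigma^2}-\tfrac{\mu}{2})T$, exactly to $\ln X_0+\tfrac12(\mu-\tfrac{\mu^2}{\sigma^2})T$; the linear part $\tfrac{\mu}{\sigma}\mathbb E[1_{\{\underline Y_T>b\}}Y_T]=\tfrac{\mu}{\sigma}\int_b^0\int_y^\infty x\,f(x,y)\,dx\,dy$ is the double integral in the statement.

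For the liquidation term (Lemma \ref{lemma:optimal_utility_log_term_two}) I would condition on $\{\tau=t\}$. At $\tau$ the price is pinned, $S_\tau=\alpha S_0$, which forces $W_\tau=b-mt$ and hence makes $\hat X_\tau^{(2,b)}$ a deterministic function of $t$, equal on taking logarithms to $\ln X_0+\tfrac{\mu\ln\alpha}{\sigma^2}+(\tfrac{\mu}{2}-\tfrac{\mu^2}{2\sigma^2})t$. Solving (\ref{eq:wealth_fully_informed_log_after_liquidation_SDE}) from the initial value $\hat X_\tau^{(2,a)}=\hat X_\tau^{(2,b)}$ yields $\ln\hat X_T^{(2,a)}=\ln\hat X_\tau^{(2,b)}+\int_t^T\tfrac{(\mu_v^I(t,\Theta,K))^2}{2\sigma^2}dv+\int_t^T\tfrac{\mu_v^I(t,\Theta,K)}{\sigma}dW_v$. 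Because the post-$\tau$ increments of $W$ are independent of $\mathcal F_\tau$ and of $\tau$ (the strong Markov property, $W$ remaining a Brownian motion in the enlarged filtration by Remark \ref{remark:brownian_motion}) and $\mu^{I}$ is deterministic given $(\tau,\Theta,K)$, the stochastic integral has zero conditional mean, so the conditional expectation of $\ln\hat X_T^{(2,a)}$ given $\{\tau=t,\Theta,K\}$ is precisely $h^{(2)}(t;\Theta,K)$. Integrating against the first-passage (inverse-Gaussian) density $f_\tau(t)=\tfrac{-\ln\alpha/\sigma}{\sqrt{2\pi t^3}}\exp\big(-\tfrac1{2t}(b-mt)^2\big)$ over $[0,T]$ produces the final single integral.

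The main obstacle is the first-passage joint-law computation for the drifted Brownian motion in the survival term: one must derive (or carefully cite) the density of $(Y_T,\underline Y_T)$, keep the sign conventions straight with $b<0$, and carry out the Gaussian integration that turns $\mathbb P(\underline Y_T>b)$ into the $\mathcal N$-difference while verifying that the constant prefactor collapses as claimed. The remaining steps---the vanishing of the conditional martingale term by independence of the post-$\tau$ increments, the deterministic pinning of $\hat X_\tau^{(2,b)}$ at liquidation, and the Fubini/integrability conditions needed to rewrite the expectations as the stated integrals---are routine by comparison. Combining the three resulting pieces reproduces the formula in the statement.
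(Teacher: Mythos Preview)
Your proposal is correct and follows essentially the same route as the paper: you split $V_0^{(2)}(\Theta,K)$ via (\ref{eq:utility_decomposition_log_fully_informed}), handle the survival term exactly as in Lemma~\ref{lemma:optimal_utility_log_term_one} (joint law of the drifted Brownian motion and its running minimum, with the expectation split into a constant times $\mathbb P(\tau>T)$ plus the linear-in-$Y_T$ double integral), and handle the liquidation term exactly as in Lemma~\ref{lemma:optimal_utility_log_term_two} (pin $W_\tau$ via $S_\tau=\alpha S_0$, kill the post-$\tau$ stochastic integral by conditioning, then integrate against the inverse-Gaussian density of $\tau$). The only cosmetic difference is that you sketch the derivation of the joint density via reflection plus Girsanov, whereas the paper simply cites it.
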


In the next section we consider the optimization problem for the partially informed investors.

\section{Partially informed investors}\label{sect:partially_informed}

The portfolio strategy for partially informed investors is supposed to be $\mathbb{G}^{(1)}$-adapted and denoted by $\pi^{(1)} = (\pi^{(1)}_t, 0\leq t\leq T)$.  The wealth process evolves as
\begin{equation}
dX_t^{(1)}=X_t^{(1)}\pi_t^{(1)}(\mu^M_t(\Theta,K) dt+\sigma dW_t), \quad \quad 0 \leq t \leq T.
\label{eq:wealth_partially_informed_partial_observation}
\end{equation}
Similar to (\ref{eq:admissible_strategy_integration_condition}), the admissible strategy set $\mathcal{A}^{(1)}$ is a collection of $\pi^{(1)}$
such that, for any $(\theta,k)\in (0,+\infty)\times (0,1)$,
\begin{equation}
\int_{0}^{T} | \pi^{(1)}_{t}\mu^{M}_{t}(\theta,k) | dt +\int_{0}^{T}|\pi^{(1)}_{t}\sigma|^{2}dt< \infty.
\label{eq:admissible_strategy_integration_condition_partial}
\end{equation}
The portfolio optimization problem for partially informed investors is
\begin{equation}
V^{(1)}_0 = \sup_{\pi^{(1)} \in \mathcal{A}^{(1)}} \mathbb{E} \left[U\left(X^{(1)}_{T}\right)\right].
\label{eq:optimization_objective_partially_informed}
\end{equation}
Note that the optimization problem (\ref{eq:wealth_partially_informed_partial_observation})-(\ref{eq:optimization_objective_partially_informed}) is the case of partial observations since the drift term in (\ref{eq:wealth_partially_informed_partial_observation}) is not $\mathbb{G}^{(1)}$-adapted.

Following \citet{karatzas1998bayesian} we first reduce the optimization problem of partial observation to the case of complete observation.
Recall that the probability measure $\mathbb{Q}$ is defined as
\begin{equation}
\left.\frac{d \mathbb{Q}}{d \mathbb{P}}\right|_{\mathcal{G}^{(2)}_{T}} = L_{T}.
\label{eq:measure_Q_partial}
\end{equation}
with the density process
\begin{equation}
L_{t} = \exp\left\{  -\int_0^t \frac{\mu^{M}_v(\Theta,K)}{\sigma}dW_{v} -\int_{0}^{t} \frac{\left(\mu^{M}_{v}(\Theta,K)\right)^2}{2\sigma^2} dv  \right\}
\label{eq:likelihood_process_partial_a}
\end{equation}
 being  a $(\mathbb{G}^{(2)},\mathbb{P})$-martingale.

We next define the filtered estimate of the drift $\mu^{M}_{t}(\Theta,K)$, based on the observation of the market price, by
\begin{equation}
\bar{\mu}^{M}_{t} = \mathbb{E}\left[ \mu^{M}_{t}(\Theta,K) |\mathcal{G}^{(1)}_{t}\right].
\end{equation}
We define the innovations process $\tilde W$ by
\begin{equation}
d\tilde W_{t}=dW_{t}+\frac{\mu^{M}_t(\Theta,K)-\bar \mu^{M}_{t}}{\sigma}dt,\quad  0\leq t\leq T.
\label{eq:innovation_process_definition}
\end{equation}
By \citet[Lemma 4.1]{bjork2010optimal} we know $\tilde W$ is a standard $(\mathbb{G}^{(1)},\mathbb{P})$-Brownian motion.
Then we may rewrite (\ref{eq:global_risky_asset_dynamics}) as
\begin{equation}
dS_{t}^{M}=S^{M}_{t}\left( \bar{\mu}^{M}_{t}dt+\sigma d\tilde{W}_{t} \right).
\label{eq:global_risky_asset_dynamics_transformed}
\end{equation}
and the wealth process $X^{(1)}$ as
\begin{equation}
dX_t^{(1)}=X_t^{(1)}\pi_t^{(1)}(\bar\mu^M_t dt+\sigma d\tilde W_t), \quad \quad 0 \leq t \leq T
\label{eq:wealth_partially_informed}
\end{equation}
with initial wealth $x_0 \in ]0, + \infty[$. Now the dynamics of the wealth process $X^{(1)}$ is within the framework of a full observation model since $\bar\mu^{M}$ is $\mathbb{G}^{(1)}$-adapted.

Similar to the case of fully informed investors, the optimization problem (\ref{eq:optimization_objective_partially_informed}) can be solved by the martingale approach.
\begin{proposition}
{\rm(i)}The optimal terminal wealth of a partially informed investors, with utility function $U$ and $I=(U^\prime)^{-1}$   is given by
$$\hat X^{(1)}_{T}=I(\lambda \bar L_{T}).$$
The Lagrange multiplier $\lambda$  is determined by the budget constraint
\begin{equation}
\mathbb{E}\left[ I(\lambda \bar L_{T}) \bar L_{T}\right]=x_{0}
\end{equation}
and $\bar L_{T}$ is the density of the risk neutral probability measure for the filtration $\mathbb G^{(1)}$
 \begin{equation}
 \bar L_t=
\exp\left\{ -\int_0^t\frac{\bar\mu^M_{v}}{\sigma}d\tilde W_{v}-\int_{0}^{t}\frac{\left(\bar\mu^{M}_{v}\right)^{2}}{2\sigma^{2}}dv \right\}
\label{eq:likelihood_process_partially_informed}
\end{equation}
 with $\bar\mu^{M}_t = \mathbb{E}\left[\mu^M_{t}(\Theta,K) | \mathcal{G}^{(1)}_{t}\right]$

\noindent{\rm(ii)} The filtered drift estimate $\bar\mu^M$ can be computed as
\begin{equation}
\bar{\mu}^{M}_{t} =
\begin{cases}
\mu, & t \in \lbr 0, \tau \wedge T \lbr \\
\frac{\int_{0}^{\infty}\int_{0}^{1}\left\{ \mu^{M}_{t}(\theta,k)\exp\left\{  \int_0^t \frac{\mu^{M}_v(\theta,k)}{\sigma}dW^{\mathbb{Q}}_{v} -\int_{0}^{t} \frac{\left(\mu^{M}_{v}(\theta,k)\right)^2}{2\sigma^2} dv  \right\} \right\} \varphi(\theta,k)  d\theta dk}{\int_{0}^{\infty}\int_{0}^{1}\left\{\exp\left\{  \int_0^t \frac{\mu^{M}_v(\theta,k)}{\sigma}dW^{\mathbb{Q}}_{v} -\int_{0}^{t} \frac{\left(\mu^{M}_{v}(\theta,k)\right)^2}{2\sigma^2} dv  \right\}\right\} \varphi(\theta,k) d\theta dk}, & t \in \lbr \tau \wedge T, T \rbr .
\end{cases}
\end{equation}
\end{proposition}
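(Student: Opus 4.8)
The plan is to handle the two parts separately, building on the reduction to a full-observation problem already carried out in \eqref{eq:wealth_partially_informed}. For part (i), the key observation is that once the wealth dynamics are written in terms of the filtered drift $\bar\mu^M$ and the innovations process $\tilde W$, equation \eqref{eq:wealth_partially_informed} is a genuine full-observation model for the filtration $\mathbb{G}^{(1)}$: the coefficient $\bar\mu^M$ is $\mathbb{G}^{(1)}$-adapted, and by \citet[Lemma 4.1]{bjork2010optimal} the process $\tilde W$ is a $(\mathbb{G}^{(1)},\mathbb{P})$-Brownian motion. I would then repeat the martingale argument used for the fully informed investor, now in this reduced model. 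Defining the measure $\bar{\mathbb Q}$ through the density $\bar L_T$ of \eqref{eq:likelihood_process_partially_informed}, Girsanov's theorem applied to $\tilde W$ makes $S^M$ a $(\mathbb{G}^{(1)},\bar{\mathbb Q})$-local martingale and turns the self-financing constraint into the budget constraint $\mathbb{E}[\bar L_T X^{(1)}_T]=x_0$. The static problem $\sup\mathbb{E}[U(X^{(1)}_T)]$ subject to this constraint is solved by Lagrange multipliers exactly as before (cf. \citet[Proposition 4.5]{amendinger2003monetary} and \citet{karatzas1998methods}): the first-order condition $U'(\hat X^{(1)}_T)=\lambda\bar L_T$ gives $\hat X^{(1)}_T=I(\lambda\bar L_T)$, with $\lambda$ fixed by $\mathbb{E}[I(\lambda\bar L_T)\bar L_T]=x_0$.

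For part (ii) the task is the explicit filtering computation of $\bar\mu^M_t=\mathbb{E}[\mu^M_t(\Theta,K)\mid\mathcal G^{(1)}_t]$, which I would carry out via the Kallianpur--Striebel (abstract Bayes) formula using the reference measure $\mathbb{Q}$ of \eqref{eq:measure_Q_partial}. Two structural facts under $\mathbb{Q}$ are crucial. First, since $dS^M_t=S^M_t\sigma\,dW^{\mathbb{Q}}_t$, the $\mathbb{Q}$-Brownian motion $W^{\mathbb{Q}}$ is $\mathbb{F}^S$-adapted, and for each frozen $(\theta,k)$ the deterministic-parameter drift $\mu^M_v(\theta,k)$ is $\mathbb{F}^S$-adapted because $\tau$ is an $\mathbb{F}^S$-stopping time. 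Second, $(\Theta,K)$ is $\mathbb{Q}$-independent of $\mathcal F^S_T$ and retains its prior density $\varphi$; I would verify this by checking $\mathbb{E}^{\mathbb{P}}[L_T\mid\Theta,K]=1$, so that $\mathbb{Q}(\Theta\in d\theta,K\in dk)=\mathbb{E}^{\mathbb{P}}[L_T\mathbf 1_{\{\Theta\in d\theta,K\in dk\}}]=\varphi(\theta,k)\,d\theta\,dk$.

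Then Bayes' formula gives $\bar\mu^M_t=\mathbb{E}^{\mathbb{Q}}[\mu^M_t(\Theta,K)L_t^{-1}\mid\mathcal F^S_t]\big/\mathbb{E}^{\mathbb{Q}}[L_t^{-1}\mid\mathcal F^S_t]$. Rewriting $L_t$ from \eqref{eq:likelihood_process_fully_informed} in terms of $W^{\mathbb{Q}}$ via \eqref{eq:girsanov_brownian_change_fully_informed} yields $L_t^{-1}=\exp\{\int_0^t\frac{\mu^M_v(\Theta,K)}{\sigma}dW^{\mathbb{Q}}_v-\int_0^t\frac{(\mu^M_v(\Theta,K))^2}{2\sigma^2}dv\}$, which for each frozen $(\theta,k)$ is $\mathcal F^S_t$-measurable. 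Because $(\Theta,K)$ is $\mathbb{Q}$-independent of $\mathcal F^S_t$ with law $\varphi$, the $\mathbb{Q}$-conditional expectations reduce to integration against $\varphi(\theta,k)\,d\theta\,dk$, reproducing exactly the ratio in the statement on $\lbr\tau\wedge T,T\rbr$. On $\lbr 0,\tau\wedge T\lbr$ one has $\mu^M_v(\theta,k)=\mu$ for every $(\theta,k)$ by \eqref{eq:mu_global_expression}, so neither $\mu^M_t$ nor $L_t^{-1}$ depends on the frozen parameter, the $(\theta,k)$-dependence cancels, and $\bar\mu^M_t=\mu$.

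The Lagrange computation in (i) and the change of Brownian motion in (ii) are routine; the delicate points I expect to spend care on are the martingale representation property for $(\mathbb{G}^{(1)},\bar{\mathbb Q})$-martingales with respect to $\tilde W$, needed to guarantee that the static optimizer is attainable by an admissible strategy and which rests on the innovations filtration coinciding with $\mathbb{F}^S$, and the rigorous justification of the $\mathbb{Q}$-independence and law-preservation of $(\Theta,K)$ together with the $\mathcal F^S_t$-measurability of $\mu^M_v(\theta,k)$ in the frozen parameter, since these are precisely what license replacing the $\mathbb{Q}$-conditional expectations by integrals against the prior $\varphi$.
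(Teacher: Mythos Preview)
Your proposal is correct and follows essentially the same approach as the paper: part~(i) reduces to the full-observation model via the innovations process and applies the martingale/Lagrange method with the $\mathbb{G}^{(1)}$-risk-neutral density $\bar L$, while part~(ii) uses the Kallianpur--Striebel formula under the reference measure $\mathbb{Q}$, exploiting that $(\Theta,K)$ retains its prior law and is independent of $W^{\mathbb{Q}}$ under $\mathbb{Q}$. The delicate points you flag (martingale representation in $(\mathbb{G}^{(1)},\bar{\mathbb Q})$ and the $\mathbb{Q}$-independence of $(\Theta,K)$) are precisely those the paper handles by citation or brief assertion.
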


\begin{proof}
{\rm(i)} We define the process $\bar{L}_{t} = \mathbb{E}^{\mathbb{Q}}[L_{t}|\mathcal{G}^{(1)}_{t}]$. By rewriting $L_{t}$ in (\ref{eq:likelihood_process_partial_a})
as
\begin{equation}
L_{t} = \exp\left\{  -\int_0^t \frac{\mu^{M}_v(\Theta,K)}{\sigma}d{W}^{\mathbb{Q}}_{v} +\int_{0}^{t} \frac{\left(\mu^{M}_{v}(\Theta,K)\right)^2}{2\sigma^2} dv  \right\},
\end{equation}
we compute
\begin{equation}
\bar{L}_{t} = \exp\left\{ -\int_0^t\frac{\bar\mu^M_{v}}{\sigma}d W^{\mathbb{Q}}_{v}+\int_{0}^{t}\frac{\left(\bar\mu^{M}_{v}\right)^{2}}{2\sigma^{2}}dv \right\} \label{eq:likelihood_process_partial_b}
\end{equation}
Combining (\ref{eq:girsanov_brownian_change_partially_informed}) and (\ref{eq:innovation_process_definition}) we have
\begin{equation}
dW^{\mathbb{Q}}_{t}  = d\tilde{W}_{t} + \frac{\bar{\mu}^{M}_{t}}{\sigma}dt.
\label{eq:brownian_Q_to_innovation}
\end{equation}
Substituting (\ref{eq:brownian_Q_to_innovation}) into (\ref{eq:likelihood_process_partial_b}) we find
\begin{equation}
\bar{L}_{t} = \exp\left\{ -\int_0^t\frac{\bar\mu^M_{v}}{\sigma}d \tilde{W}_{v}-\int_{0}^{t}\frac{\left(\bar\mu^{M}_{v}\right)^{2}}{2\sigma^{2}}dv \right\}.\label{eq:likelihood_process_partial_c}
\end{equation}
Since $\bar{L}_{t}$ is a $(\mathbb{G}^{(1)},\mathbb{P})$-martingale, we define the the risk neutral probability measure $\tilde{\mathbb{Q}}$ by
\begin{equation}
\left.\frac{d \tilde{\mathbb{Q}}}{d \mathbb{P}}\right|_{\mathcal{G}^{(1)}_{t}} = \bar{L}_{t}.
\label{eq:risk_neutral_measure_partial}
\end{equation}
By the fact that $\tilde{W}$ is a $(\mathbb{G}^{(1)},\mathbb{P})$-Brownian motion and the Girsanov's theorem, the process $W^{\tilde{\mathbb{Q}}}$ defined as
\begin{equation}
W^{\tilde{\mathbb{Q}}}_{t}= \tilde{W}_{t} + \int_{0}^{t}\frac{\bar\mu^{M}_{v}}{\sigma}dv, \quad 0\leq t \leq T
\label{eq:girsanov_brownian_change_partially_informed}
\end{equation}
is a $(\mathbb{G}^{(1)}, \tilde{\mathbb{Q}})$-Brownian motion.

Following the same procedure as in Section \ref{sect:fully_informed} we find the optimal terminal wealth $\hat X^{(1)}_{T}$  given by
\begin{equation}
\hat X^{(1)}_{T}=I(\lambda \bar L_{T}),
\end{equation}
where $I=(U^\prime)^{-1}$ and the Lagrange multiplier $\lambda$ is determined by
\begin{equation}
\mathbb{E}\left[ I(\lambda \bar L_{T}) \bar L_{T}\right]=x_{0}.
\end{equation}

{\rm(ii)} Recall that
\begin{equation}
\frac{1}{L_{t}}=\left.\frac{d \mathbb{P}}{d \mathbb{Q}}\right|_{\mathcal{G}^{(2)}_{t}} =   \exp\left\{  \int_0^t \frac{\mu^{M}_v(\Theta,K)}{\sigma}dW^{\mathbb{Q}}_{v} -\int_{0}^{t} \frac{\left(\mu^{M}_{v}(\Theta,K)\right)^2}{2\sigma^2} dv  \right\}
\end{equation}
is $(\mathbb{G}^{(2)},\mathbb{Q})$-martingale. By the Kallianpur-Striebel formula, which is related to Bayes formula, we have
\begin{align}
\bar{\mu}^{M}_{t} = & \mathbb{E}\left[ \mu^{M}_{t}(\Theta,K) |\mathcal{G}^{(1)}_{t}\right] \nonumber\\
=&  \frac{\mathbb{E}^{\mathbb{Q}}\left[ \mu^{M}_{t}(\Theta,K)L_{T}  |\mathcal{G}^{(1)}_{t}\right]}{\mathbb{E}^{\mathbb{Q}}[L_{t}|\mathcal{G}^{(1)}_{t}]}  \nonumber \\
=& \frac{\mathbb{E}^{\mathbb{Q}}\left[ \mathbb{E}^{\mathbb{Q}}\left[\mu^{M}_{t}(\Theta,K)L_{T} |\mathcal{G}^{(2)}_{t}\right]\Big|\mathcal{G}^{(1)}_{t}\right]}{\mathbb{E}^{\mathbb{Q}}[L_{t}|\mathcal{G}^{(1)}_{t}]}\nonumber \\
=& \frac{\mathbb{E}^{\mathbb{Q}}\left[ \mu^{M}_{t}(\Theta,K)L_{t}  |\mathcal{G}^{(1)}_{t}\right]}{\mathbb{E}^{\mathbb{Q}}[L_{t}|\mathcal{G}^{(1)}_{t}]}\nonumber \\
=& \frac{\mathbb{E}^{\mathbb{Q}}\left[ \mu^{M}_{t}(\Theta,K)\exp\left\{  \int_0^t \frac{\mu^{M}_v(\Theta,K)}{\sigma}dW^{\mathbb{Q}}_{v} -\int_{0}^{t} \frac{\left(\mu^{M}_{v}(\Theta,K)\right)^2}{2\sigma^2} dv  \right\}  |\mathcal{G}^{(1)}_{t}\right]}{\mathbb{E}^{\mathbb{Q}}[\exp\left\{  \int_0^t \frac{\mu^{M}_v(\Theta,K)}{\sigma}dW^{\mathbb{Q}}_{v} -\int_{0}^{t} \frac{\left(\mu^{M}_{v}(\Theta,K)\right)^2}{2\sigma^2} dv  \right\}|\mathcal{G}^{(1)}_{t}]}.
\end{align}
Since the measure $\mathbb{Q}$ coincides with $\mathbb{P}$ on $\mathcal{G}^{(2)}_{0} = \sigma(\Theta, K)$, the distribution of $(\Theta,K)$ under $\mathbb{Q}$ is identical to the one under $\mathbb{P}$.
Recall that the Brownian motion $W^{\mathbb{Q}}$ is  independent of $\sigma(\Theta,K)$ we have
\begin{equation}
\bar{\mu}^{M}_{t} =
\frac{\int_{0}^{\infty}\int_{0}^{1}\left\{ \mu^{M}_{t}(\theta,k)\exp\left\{  \int_0^t \frac{\mu^{M}_v(\theta,k)}{\sigma}dW^{\mathbb{Q}}_{v} -\int_{0}^{t} \frac{\left(\mu^{M}_{v}(\theta,k)\right)^2}{2\sigma^2} dv  \right\} \right\} \varphi(\theta,k)  d\theta dk}{\int_{0}^{\infty}\int_{0}^{1}\left\{\exp\left\{  \int_0^t \frac{\mu^{M}_v(\theta,k)}{\sigma}dW^{\mathbb{Q}}_{v} -\int_{0}^{t} \frac{\left(\mu^{M}_{v}(\theta,k)\right)^2}{2\sigma^2} dv  \right\}\right\} \varphi(\theta,k) d\theta dk}.
\label{eq:filter_estimate}
\end{equation}
For $t<\tau\wedge T$ we have $\bar{\mu}^{M}_{t} = \mu$ due to the fact that $\mu^{M}_{t} = \mu$.
\end{proof}

Following \citet{Fuji} there exits a martingale representation theorem with respect to the $(\mathbb{G}^{(1)},\tilde{\mathbb{Q}})$-Brownian motion $W^{\tilde{\mathbb{Q}}}$. Similar to the case of fully informed investors, the optimal strategy $\pi^{(1)}$ relies on the martingale representation theorem. For a general utility function, the optimal strategy $\pi^{(1)}$ does not have explicit expression.
In the next subsections,  we will consider power  and logarithmic utilities.

\subsection{Power utility}
We first consider the power utility  $U(x)=\frac{x^p}{p}$, $0<p<1$.  The optimal terminal wealth at $T$ is given by
\begin{equation}
\hat X^{(1)}_{T}=\frac{x_{0}}{\mathbb{E}\left[\left(\bar{L}_{T}\right)^{\frac{p}{p-1}}\right]}\left(\bar{L}_{T}\right)^{\frac{1}{p-1}}
\end{equation}
where $\bar{L}_{T}$ is given by equation (\ref{eq:likelihood_process_partially_informed}).
The optimal expected utility is
\begin{equation}
V^{(1)}_{0}=\frac{(x_{0})^{p}}{p}\left(\mathbb{E}\left[\left(\bar{L}_{T}\right)^{\frac{p}{p-1}}\right]\right)^{1-p}.
\label{eq:optimal_utility_power_partially_informed}
\end{equation}

Similar to the case of fully informed investors, we may decompose the optimal strategy $\hat{\pi}^{(1)}$ into two parts:
\begin{equation}
\hat\pi_t^{(1)}=1_{\{ 0 \leq t < \tau \wedge T  \}}\hat \pi^{(1,b)}_{t}+1_{\{  \tau \wedge T \leq t \leq  T  \}}\hat \pi_{t}^{(1,a)}.
\label{eq:optimal_strategy_power_partially_informed_decomposed}
\end{equation}
Following a similar procedure as in Section \ref{sect:fully_power_utility} we obtain
\begin{align}
\hat \pi^{(1,b)}_{t} & = \frac{\mu}{(1-p)\sigma^2}+\frac{ Z^{\bar{H}}_{t}}{\sigma \bar{H}_{t}}, \quad  t \in \lbr 0,  \tau\wedge T \lbr, \label{eq:optimal_strategy_power_partially_informed_before_liquidation}\\
\hat \pi^{(1,a)}_{t} & = \frac{\bar{\mu}_{t}^{I}}{(1-p)\sigma^2}, \quad t \in \lbr  \tau \wedge T , T \rbr .
\label{eq:optimal_strategy_power_partially_informed_after_liquidation}
\end{align}
where $(\bar{H}, Z^{\bar{H}})$ satisfies the linear BSDE
\begin{equation}
\bar H_{t} = 1+\int_{t}^{T}\left( \frac{p\left(\bar\mu^{M}_v\right)^{2}}{2(1-p)^{2}\sigma^{2}}\bar{H}_{v}+\frac{p\bar\mu^{M}_v}{(1-p)\sigma} Z^{\bar{H}}_{v} \right)dv-\int_{t}^{T}Z^{\bar{H}}_{v}d\tilde{W}_{v}.
\label{eq:H_BSDE_partial_before_liquidation}
\end{equation}
We will discuss the numerical solution of BSDE (\ref{eq:H_BSDE_partial_before_liquidation}) in  Section \ref{sect:numerical_result}.

\subsection{Log utility}
In this section we consider the logarithmic utility $U(x)=\ln(x)$.  The optimal terminal wealth at $T$ is given by
\begin{equation}
\hat X^{(1)}_{T}=\frac{x_{0}}{\bar{L}_{T}}.
\end{equation}
The optimal expected utility is
\begin{equation}
V^{(1)}_{0}=\ln(x_{0})-\mathbb{E}\left[ \ln(\bar{L}_{T}) \right].
\end{equation}
The optimal investment process $\hat{\pi}^{(1)}$  is given by
\begin{equation}
\hat{\pi}_{t}^{(1)}=1_{\{0 \leq t < \tau \wedge T \}}\hat \pi^{(1,b)}_{t}+1_{\{\tau \wedge T \leq t \leq  T  \}}\hat \pi_t^{(1,a)}
\end{equation}
where
\begin{align}
\hat \pi^{(1,b)}_{t} & = \frac{\mu}{\sigma^{2}}, \quad t \in \lbr 0,  \tau\wedge T \lbr, \label{eq:optimal_strategy_partially_log_before}\\
\hat \pi^{(1,a)}_t & = \frac{\bar\mu^{I}_{t}}{\sigma^{2}}, \quad t \in \lbr  \tau \wedge T , T \rbr .\label{eq:optimal_strategy_partially_log_after}
\end{align}
 We decompose the optimal wealth process into before and after liquidation parts as
\begin{equation}
\hat{X}^{(1)}_t=1_{\{0 \leq t < \tau \wedge T \}}\hat X^{(1,b)}_{t}+1_{\{\tau \wedge T \leq t \leq  T \}}\hat X^{(1,a)}_{t}
\end{equation}
where $\hat X^{(1,b)}_{t}$ and $\hat X^{(1,a)}_{t}$ satisfy the following SDEs
\begin{align}
d\hat X_{t}^{(1,b)}&=\hat X_{t}^{(1,b)} \hat\pi^{(1,b)}(\mu dt+ \sigma dW_{t}), & t \in \lbr 0,  \tau\wedge T \lbr, \\
d\hat X_{t}^{(1,a)}&=\hat X_{t}^{(1,a)}\hat{\pi}_{t}^{(1,a)}\left\{ \bar\mu^{a}_{t}dt+\sigma d\tilde{W}_{t}\right\}, & t \in \lbr  \tau \wedge T , T \rbr .
\end{align}
Then we decompose the expected utility of terminal wealth $V^{(1)}_{0}$ into two parts depending on if liquidation occurs before or after time $T$:
\begin{equation}
V^{(1)}_{0}=\mathbb{E}\left[1_{\{T<\tau\}}\ln \left(\hat X^{(1,b)}_T\right)\right]+\mathbb{E}\left[1_{\{T\geq \tau\}}\ln \left(\hat X^{(1,a)}_T\right)\right].
\label{eq:utility_decomposition_log_partially_informed}
\end{equation}
Comparing (\ref{eq:optimal_strategy_fully_log_before}) and (\ref{eq:optimal_strategy_partially_log_before}), we know partially informed investors holds the same optimal strategy as the fully informed investor before liquidation. The optimal terminal wealth for partially and fully informed investors  are identical if no liquidation occurs before $T$, that is
$$\mathbb{E}\left[1_{\{T<\tau\}}\ln \left(\hat X^{(1,b)}_T\right)\right] = \mathbb{E}\left[1_{\{T<\tau\}}\ln \left(\hat X^{(2,b)}_T\right)\right].$$
Thus the first expectation in (\ref{eq:utility_decomposition_log_partially_informed}) has been calculated in Lemma \ref{lemma:optimal_utility_log_term_one} and the other expectation is calculated in Lemma \ref{lemma:optimal_utility_partially_log_term_two}.
Combining those lemmas we obtain the following result.
\begin{proposition}\label{th:opt_V_log_partial} The  optimal log expected utility for the fully informed investor is
\begin{align}
\notag & V^{(1)}_0=\\
\notag &\left\{ \mathcal{N}\left(\frac{-\frac{\ln \alpha}{\sigma}+(\frac{\mu}{\sigma}-\frac{1}{2}\sigma) T}{\sqrt{T}}\right)-\exp\left(\frac{2\mu}{\sigma^2}-\ln\alpha\right)\mathcal{N}\left(\frac{\frac{\ln \alpha}{\sigma}+(\frac{\mu}{\sigma}-\frac{1}{2}\sigma) T}{\sqrt{T}}\right) \right\} \times \left( \ln(x_0)+\frac{1}{2}(\mu-\frac{\mu^2}{\sigma^2})T  \right)\nonumber \\
\notag &+ \int_{\frac{\ln\alpha}{\sigma}}^{0}\int_y^\infty \frac{2\mu x(x-2y)}{\sqrt{2\pi T^3}}\exp\left\{(\frac{\mu}{\sigma}-\frac{1}{2}\sigma) x-\frac{1}{2}(\frac{\mu}{\sigma}-\frac{1}{2}\sigma)^2T-\frac{1}{2T}(2y-x)^2\right \}dxdy\\
\notag &-\frac{\ln\alpha}{\sigma}\int_0^T\frac{1}{\sqrt{2\pi t^3}}\exp\left\{-\frac{1}{2t}\left(\frac{\ln\alpha}{\sigma}-(\frac{\mu}{\sigma}-\frac{1}{2}\sigma) t\right)^2\right\}h^{(1)}(t)dt
\end{align}	
where
\begin{equation*}
h^{(1)}(t):=\ln x_0+\frac{\mu\ln\alpha}{\sigma^2} +\frac{\mu}{2}t-\frac{\mu^2}{2\sigma^2}t +\int_{t}^{T}\frac{\left(\bar\mu^M_v\right)^2}{2\sigma^{2}}dv.
\end{equation*}
\end{proposition}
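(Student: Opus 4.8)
The plan is to evaluate $V^{(1)}_0$ through its decomposition (\ref{eq:utility_decomposition_log_partially_informed}) into a no-liquidation term $\mathbb{E}[1_{\{T<\tau\}}\ln\hat X^{(1,b)}_T]$ and an after-liquidation term $\mathbb{E}[1_{\{T\ge\tau\}}\ln\hat X^{(1,a)}_T]$, computing each separately and adding them. The two supporting lemmas carry the analytic weight, so the proposition itself reduces to substituting Lemma \ref{lemma:optimal_utility_log_term_one} and Lemma \ref{lemma:optimal_utility_partially_log_term_two} into (\ref{eq:utility_decomposition_log_partially_informed}); the task is to see why those lemmas apply and what each contributes.

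For the first term I would exploit that, before liquidation, the partially informed investor is indistinguishable from the fully informed one: comparing (\ref{eq:optimal_strategy_fully_log_before}) and (\ref{eq:optimal_strategy_partially_log_before}) both strategies equal $\mu/\sigma^2$, and on $\lbr 0,\tau\wedge T\lbr$ there is no market impact, so the wealth solves the same geometric Brownian motion driven by $(\mu,\sigma,W)$ and $\hat X^{(1,b)}_T=\hat X^{(2,b)}_T$ on $\{T<\tau\}$ when the two share the initial endowment. Because this common quantity is independent of $(\Theta,K)$, the partially-informed term equals the fully-informed one already evaluated in Lemma \ref{lemma:optimal_utility_log_term_one}. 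Internally that lemma is a first-passage computation: writing $\ln\hat X^{(1,b)}_T=\ln x_0+\frac{\mu^2}{2\sigma^2}T+\frac{\mu}{\sigma}W_T$ and translating $\{S_t\le\alpha S_0\}$ into a crossing of the level $\ln\alpha/\sigma$ by the drifted Brownian motion $W_t+(\mu/\sigma-\sigma/2)t$, one evaluates $\mathbb{P}(T<\tau)$ (the $\mathcal{N}$-prefactor, via the reflection formula for the running minimum) and $\frac{\mu}{\sigma}\mathbb{E}[1_{\{T<\tau\}}W_T]$ (the double integral, via the joint law of endpoint and running minimum). A short rearrangement, absorbing the drift part of $\mathbb{E}[1_{\{T<\tau\}}W_T]$ into the probability, produces exactly the prefactor $\ln(x_0)+\frac12(\mu-\mu^2/\sigma^2)T$ appearing in the statement.

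For the second term I would condition on the liquidation time $\tau=t$, whose density is the first-passage density $-\frac{\ln\alpha}{\sigma}\frac{1}{\sqrt{2\pi t^3}}\exp\{-\frac{1}{2t}(\ln\alpha/\sigma-(\mu/\sigma-\sigma/2)t)^2\}$ that weights the last integral. On $\{\tau=t\}$ the constraint $S_\tau=\alpha S_0$ pins down $W_\tau$, so the log-wealth built up before liquidation is deterministic and equals $\ln x_0+\frac{\mu\ln\alpha}{\sigma^2}+\frac{\mu}{2}t-\frac{\mu^2}{2\sigma^2}t$. After liquidation the investor runs the Merton strategy (\ref{eq:optimal_strategy_partially_log_after}), so $d\ln\hat X^{(1,a)}=\left(\frac{\bar\mu^M_v\mu^M_v}{\sigma^2}-\frac{(\bar\mu^M_v)^2}{2\sigma^2}\right)dv+\frac{\bar\mu^M_v}{\sigma}dW_v$. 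The decisive step, which is the genuinely new feature relative to the fully informed case, is to take the conditional expectation given $\mathbb{G}^{(1)}$ and use the projection identity $\mathbb{E}[\bar\mu^M_v\mu^M_v\mid\mathcal{G}^{(1)}_v]=(\bar\mu^M_v)^2$, valid since $\bar\mu^M_v=\mathbb{E}[\mu^M_v\mid\mathcal{G}^{(1)}_v]$ is $\mathcal{G}^{(1)}_v$-measurable. This collapses the drift contribution to $\int_t^T\frac{(\bar\mu^M_v)^2}{2\sigma^2}dv$ and yields $h^{(1)}(t)$, which is precisely the content of Lemma \ref{lemma:optimal_utility_partially_log_term_two}.

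The main obstacle is not the final assembly, which is a one-line addition of the two lemmas, but the first-passage calculations inside them: obtaining the joint density of the endpoint and running minimum of a drifted Brownian motion for the no-liquidation term, and correctly combining the conditioning on $\tau=t$ with the filtering projection for the after-liquidation term. Once these are established, substituting into (\ref{eq:utility_decomposition_log_partially_informed}) gives the stated formula, and I would remark that its first two lines coincide with those of the fully informed Proposition \ref{theorem:optimal_utility_expression_log} precisely because the pre-liquidation behaviour is identical, the two results differing only in the last line where the true post-liquidation drift $\mu^I$ is replaced by the filtered estimate $\bar\mu^M$.
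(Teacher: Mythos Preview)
Your proposal is correct and follows essentially the same route as the paper: decompose via (\ref{eq:utility_decomposition_log_partially_informed}), identify the pre-liquidation term with the fully informed one through Lemma~\ref{lemma:optimal_utility_log_term_one}, and handle the post-liquidation term by conditioning on $\tau$ and integrating against its first-passage density as in Lemma~\ref{lemma:optimal_utility_partially_log_term_two}. The only cosmetic difference is that the paper writes the post-liquidation wealth directly in terms of the innovations Brownian motion $\tilde W$ (equation (\ref{eq:wealth_partially_informed})), so the drift is already $(\bar\mu^M_v)^2/(2\sigma^2)$ and no projection step is needed, whereas you work with the original $W$ and invoke $\mathbb{E}[\bar\mu^M_v\mu^M_v\mid\mathcal G^{(1)}_v]=(\bar\mu^M_v)^2$; the two computations are algebraically equivalent.
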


We will consider the optimization problem for the  uninformed investors.

\section{Uninformed investors}\label{sect:uninformed}
The uninformed investors erroneously believe the market  price of the asset  follows a Black Scholes dynamics with constant $\mu$. That is, uninformed investors act as Merton investors.
To compare with the fully informed and partially informed investors, we shall consider  both the power utility and logarithmic utility in the following sections.

\subsection{Power Utility}
We first consider the power utility, i.e. $U(x)=\frac{x^p}{p}$. The uninformed investors adopt   the Merton strategy
\begin{equation}
\hat \pi^{(0)}=\frac{\mu}{(1-p)\sigma^2}.
\label{eq:optimal_strategy_power_uninformed}
\end{equation}
However, the market price process of the asset is given by (\ref{eq:global_risky_asset_dynamics}).  Therefore, corresponding to the sub-optimal strategy given by (\ref{eq:optimal_strategy_power_uninformed}), the  wealth process  $\hat X^{(0)}_t$ is written as
$$\hat X^{(0)}_t=1_{\{0\leq t < \tau \wedge T  \}}\hat X^{(0,b)}_t+1_{\{ \tau \wedge T \leq t \leq  T  \}} \hat X^{(0,a)}_t$$
where $ \hat X^{b}_t$ and $\hat X^{(0,a)}_t$ are given by
\begin{align}
d \hat X_t^{(0,b)} & =\hat X_{t}^{(0,b)}\hat\pi^{(0)} \left(\mu dt+\sigma dW_{t}\right), &   t \in\lbr 0,\tau\wedge T \lbr, \\
d \hat X^{(0,a)}_{t}&=\hat X^{(0,a)}_{t}\hat{\pi}^{(0)}\left(\mu^{I}_{t}(\tau,\Theta,K)dt+\sigma dW_{t}\right), & t \in \lbr  \tau \wedge T , T \rbr .
\end{align}

We next compute the expected utility of final wealth $\mathbb{E}[U(\hat X^{0}_T)]$  using the investment strategy given by (\ref{eq:optimal_strategy_power_uninformed}).  We decompose $\mathbb{E}[U(\hat X^{0}_T)]$ into two parts depending on whether or not liquidation occurs before time $T$
\begin{equation}
\mathbb{E}\left[U\left(\hat X^{0}_{T}\right)\right]
=  \mathbb{E}\left[1_{\{\tau>T\}}U\left(\hat X^{(0,b)}_{T}\right)\right] +\mathbb{E}\left[1_{\{\tau\leq T\}}U\left(\hat X_{T}^{(0,a)}\right) \right].
\label{utility_power_decomposition_uninformed}
\end{equation}
The two expectations in (\ref{utility_power_decomposition_uninformed}) are computed in Lemma \ref{lemma:optimal_utility_power_term_one_uninformed}  and \ref{lemma:optimal_utility_power_term_two_uninformed} respectively.
\begin{proposition}
The expected power utility of an uninformed investor who follows the suboptimal strategy (\ref{eq:optimal_strategy_power_uninformed}) is
\begin{align*}
& \mathbb{E}\left[U\left(\hat X^{0}_{T}\right)\right] =\frac{x_{0}^{p}}{p}\exp\left(\frac{p\mu^{2}T}{2(1-p)\sigma^{2}} \right) \\
&\times\left\{\mathcal{N}\left(\frac{-\frac{\ln \alpha}{\sigma}+(\frac{\mu}{(1-p)\sigma}-\frac{\sigma}{2})T}{\sqrt{T}}\right)-\exp\left(\frac{2\mu\ln \alpha}{(1-p)\sigma^{2}}-\ln \alpha\right)\mathcal{N}\left(\frac{\frac{\ln \alpha}{\sigma}+(\frac{\mu}{(1-p)\sigma}-\frac{\sigma}{2})T}{\sqrt{T}}\right)\right\}  \\
& -\frac{\ln\alpha}{\sigma}\int_{0}^{1}\int_{0}^{\infty}\int_{0}^{T}\frac{1}{\sqrt{2\pi t^{3}}}\exp\left\{-\frac{1}{2t}\left(\frac{\ln\alpha}{\sigma}-(\frac{\mu}{\sigma}-\frac{1}{2}\sigma) t\right)^{2}\right\}l^{(0)}(t,\theta,k)\varphi(\theta)dtd\theta dk
\end{align*}
where
\begin{equation*}
l^{(0)}(t,\theta,k)=\frac{x_{0}^{p}}{p}\exp\left\{\frac{\mu\ln\alpha}{(1-p)\sigma^{2}} +\frac{1}{2}\frac{\mu}{(1-p)}t-\frac{1}{2}\frac{\mu^{2}}{(1-p)^{2}\sigma^{2}}t+\int_{t}^{T}\left( \frac{p\mu\mu^{I}_{v}(t,\theta,k)}{(1-p)\sigma^{2}} \right)dv\right\}.
\end{equation*}
\end{proposition}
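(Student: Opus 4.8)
The plan is to exploit that the uninformed investor trades with the \emph{constant} Merton proportion $\hat\pi^{(0)}=\frac{\mu}{(1-p)\sigma^2}$ of \eqref{eq:optimal_strategy_power_uninformed}, so that both pieces of the wealth process admit closed forms. Before liquidation $\hat X^{(0,b)}$ is a geometric Brownian motion, $\hat X^{(0,b)}_t=x_0\exp\{(\hat\pi^{(0)}\mu-\tfrac12(\hat\pi^{(0)})^2\sigma^2)t+\hat\pi^{(0)}\sigma W_t\}$; after liquidation, conditional on $\{\tau=t,\Theta=\theta,K=k\}$ the drift $\mu^I_v(t,\theta,k)$ from \eqref{eq:drift_term_market_price} is deterministic, so $\hat X^{(0,a)}$ is again lognormal with a deterministic (time-dependent) log-drift. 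With these forms in hand the starting point is the decomposition \eqref{utility_power_decomposition_uninformed}, and it only remains to evaluate the two expectations, which are the contents of Lemmas~\ref{lemma:optimal_utility_power_term_one_uninformed} and~\ref{lemma:optimal_utility_power_term_two_uninformed}; the Proposition follows by adding them.

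For the first expectation, over $\{\tau>T\}$, I would write $U(\hat X^{(0,b)}_T)$ as a deterministic constant times $\exp\{p\hat\pi^{(0)}\sigma W_T\}$ and translate the survival event via \eqref{eq:liquidation_time_definition}: with $m:=\frac{\mu}{\sigma}-\frac{\sigma}{2}$, $B_s:=W_s+ms$ and $\ell:=\frac{\ln\alpha}{\sigma}<0$, one has $\{\tau>T\}=\{\min_{0\le s\le T}B_s>\ell\}$ where $B$ is Brownian with drift $m$. The cleanest route is an Esscher change of measure with density proportional to $\exp\{p\hat\pi^{(0)}\sigma W_T\}$: under the new measure $B$ acquires drift $m+p\hat\pi^{(0)}\sigma=\frac{\mu}{(1-p)\sigma}-\frac{\sigma}{2}$, and the residual factor is the no-hit probability $\mathbb{P}^{*}(\min_{s\le T}B_s>\ell)=\mathcal{N}(\frac{-\ell+(m+p\hat\pi^{(0)}\sigma)T}{\sqrt{T}})-e^{2(m+p\hat\pi^{(0)}\sigma)\ell}\mathcal{N}(\frac{\ell+(m+p\hat\pi^{(0)}\sigma)T}{\sqrt{T}})$ by the reflection principle for drifted Brownian motion. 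Collecting the Girsanov normalizer $\exp\{\tfrac12 p^2(\hat\pi^{(0)})^2\sigma^2 T\}$ together with the deterministic constant produces exactly the prefactor $\frac{x_0^p}{p}\exp\{\frac{p\mu^2T}{2(1-p)\sigma^2}\}$, the argument drift $\frac{\mu}{(1-p)\sigma}-\frac{\sigma}{2}$, and the reflection weight $e^{2(m+p\hat\pi^{(0)}\sigma)\ell}=\exp\{\frac{2\mu\ln\alpha}{(1-p)\sigma^2}-\ln\alpha\}$, matching the first bracketed expression.

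For the second expectation, over $\{\tau\le T\}$, I would disintegrate over the liquidation time and the impact parameters, writing $\mathbb{E}[1_{\{\tau\le T\}}U(\hat X^{(0,a)}_T)]=\int_0^1\!\int_0^\infty\!\int_0^T(\cdots)\,dt\,d\theta\,dk$. Since the market price is continuous at $\tau$ with $S_\tau=\alpha S_0$, the barrier condition pins $W_\tau=\ell-m\tau$, so on $\{\tau=t\}$ the wealth at liquidation $\hat X^{(0,a)}_t=\hat X^{(0,b)}_t$ becomes a deterministic function of $t$. Conditionally on $\{\tau=t,\Theta=\theta,K=k\}$ the post-liquidation log-return on $[t,T]$ is Gaussian, with deterministic mean $\int_t^T\hat\pi^{(0)}\mu^I_v(t,\theta,k)\,dv-\frac12(\hat\pi^{(0)})^2\sigma^2(T-t)$ and variance $(\hat\pi^{(0)})^2\sigma^2(T-t)$, and is independent of $\mathcal{F}_\tau$ and of $(\Theta,K)$ by Remark~\ref{remark:brownian_motion}; taking the $p$-th lognormal moment over this independent increment yields a closed-form conditional expected utility, a deterministic function of $(t,\theta,k)$ which is the integrand $l^{(0)}$. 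Integrating it against the first-passage density of the drifted Brownian motion, $-\frac{\ln\alpha}{\sigma}\,\frac{1}{\sqrt{2\pi t^3}}\exp\{-\frac{1}{2t}(\frac{\ln\alpha}{\sigma}-mt)^2\}$, and against the joint density $\varphi$ of $(\Theta,K)$ produces the stated triple integral.

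The main obstacle is the first-passage bookkeeping. In the first term one must carry the drift correctly through the reflection principle and reconcile the Esscher normalizer with the two Gaussian tails so that the effective drift and the reflection weight emerge precisely as stated. In the second term the delicate point is justifying the disintegration at the random liquidation time via the strong Markov property together with the pinning $S_\tau=\alpha S_0$, and confirming that the independence of $(\Theta,K)$ from $\mathbb{F}$ lets the $\varphi$-integration and the Brownian expectation factor cleanly. These two computations are exactly Lemmas~\ref{lemma:optimal_utility_power_term_one_uninformed} and~\ref{lemma:optimal_utility_power_term_two_uninformed}, after which the Proposition is immediate from \eqref{utility_power_decomposition_uninformed}.
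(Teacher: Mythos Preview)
Your proposal is correct and follows essentially the same route as the paper: the decomposition \eqref{utility_power_decomposition_uninformed} followed by Lemmas~\ref{lemma:optimal_utility_power_term_one_uninformed} and~\ref{lemma:optimal_utility_power_term_two_uninformed}. The only cosmetic difference is in the first term: the paper writes out the joint density of $(B_T,\tilde B_T)$ and then recognizes the resulting double integral as the no-hit probability of a Brownian motion with drift $\frac{\mu}{(1-p)\sigma}-\frac{\sigma}{2}$, whereas you reach the same identification directly via the Esscher/Girsanov tilt; the second term is handled identically by pinning $W_\tau$, applying the strong Markov property, and integrating against the first-passage density of $\tau$ and the law of $(\Theta,K)$.
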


We next consider the same problem for the uniformed investor under logarithmic utility.

\subsection{Logarithmic Utility}
In case of logarithmic utility, uninformed investors adopt   the Merton strategy
\begin{equation}
\hat \pi^{(0)}=\frac{\mu}{\sigma^2}.
\label{eq:optimal_strategy_log_uninformed}
\end{equation}
We denote by $\hat{X}^{(0)}_{t}$  the wealth process for  uninformed investors as holding the sub-optimal strategy $\hat {\pi}^{(0)}_{t}$ given by (\ref{eq:optimal_strategy_log_uninformed}). Similar to the case of power utility we calculate the expectation $\mathbb{E}[U(\hat X^{(0)}_{T})]$ using the  decomposition
\begin{equation}
E[\ln(\hat X^{(0)}_{T})]= \mathbb{E}[1_{\{\tau>T\}}\ln(\hat X^{(0,b)}_{T})] +\mathbb{E}[1_{\{\tau\leq T\}} \ln(\hat X_{T}^{(0,a)}) ]. \label{eq:utility_log_decomposition_uninformed}
\end{equation}
Comparing (\ref{eq:optimal_strategy_fully_log_before}) and (\ref{eq:optimal_strategy_log_uninformed}), we know  uninformed investors hold the same optimal strategy as the fully informed investors before liquidation. The terminal wealth for uninformed and fully informed investors  are identical if no liquidation occurs before $T$, that is
$$\mathbb{E}\left[1_{\{T<\tau\}}\ln \left(\hat X^{(0,b)}_T\right)\right] = \mathbb{E}\left[1_{\{T<\tau\}}\ln \left(\hat X^{(2,b)}_T\right)\right].$$
Thus the first expectation in (\ref{eq:utility_log_decomposition_uninformed}) has been calculated in Lemma \ref{lemma:optimal_utility_log_term_one} and the other expectation is calculated in Lemma \ref{lemma:optimal_utility_log_term_two_uninformed}.

\begin{proposition}
The expected log utility of an uniformed investor who follows the suboptimal investment strategy (\ref{eq:optimal_strategy_log_uninformed}) is
\begin{align*}
& E[\ln(\hat X^{(0)}_{T})]=  \\
&  \left\{ \mathcal{N}\left(\frac{-\frac{\ln \alpha}{\sigma}+(\frac{\mu}{\sigma}-\frac{1}{2}\sigma) T}{\sqrt{T}}\right)-\exp\left(\frac{2\mu}{\sigma^{2}}-\ln\alpha\right)\mathcal{N}\left(\frac{\frac{\ln \alpha}{\sigma}+(\frac{\mu}{\sigma}-\frac{1}{2}\sigma) T}{\sqrt{T}}\right) \right\} \times \left( \ln(x_{0})+\frac{1}{2}(\mu-\frac{\mu^{2}}{\sigma^{2}})T  \right) \\
&+ \int_{\frac{\ln\alpha}{\sigma}}^{0}\int_y^\infty \frac{2\mu x(x-2y)}{\sqrt{2\pi T^{3}}}\exp\left\{(\frac{\mu}{\sigma}-\frac{1}{2}\sigma) x-\frac{1}{2}(\frac{\mu}{\sigma}-\frac{1}{2}\sigma)^{2}T-\frac{1}{2T}(2y-x)^2\right \}dxdy \\
&-\frac{\ln\alpha}{\sigma}\int_{0}^{1}\int_{0}^{\infty}\int_{0}^{T}\frac{1}{\sqrt{2\pi t^{3}}}\exp\left\{-\frac{1}{2t}\left(\frac{\ln\alpha}{\sigma}-(\frac{\mu}{\sigma}-\frac{1}{2}\sigma) t\right)^{2}\right\}h^{(0)}(t,\theta,k)\varphi(\theta,k)dtd\theta dk
\end{align*}
where
\begin{equation*}
	h^{(0)}(t,\theta,k):=\ln x_{0}+\frac{\mu\ln\alpha}{\sigma^{2}} +\frac{\mu}{2}t-\frac{\mu^{2}}{2\sigma^{2}}t+\int_{t}^{T}\left( \frac{2\mu\mu^{I}_{v}(t,\theta,k)- \mu^2}{2\sigma^{2}} \right)dv
\end{equation*}
\end{proposition}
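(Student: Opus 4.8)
The plan is to evaluate the two expectations in the decomposition (\ref{eq:utility_log_decomposition_uninformed}) separately, exploiting the fact that, once the wealth SDEs are integrated explicitly, each summand reduces to a Gaussian computation involving the running minimum (resp.\ first-passage time) of a drifted Brownian motion. Since $\hat\pi^{(0)}=\mu/\sigma^2$ is constant, each wealth SDE is a geometric Brownian motion. Before liquidation Itô's formula gives $\ln\hat X^{(0,b)}_T=\ln x_0+\frac{\mu^2}{2\sigma^2}T+\frac{\mu}{\sigma}W_T$, while after liquidation at $\tau=t$ the post-liquidation dynamics yield $\ln\hat X^{(0,a)}_T=\ln\hat X^{(0,b)}_\tau+\int_\tau^T\big(\frac{\mu}{\sigma^2}\mu^I_v(\tau,\Theta,K)-\frac{\mu^2}{2\sigma^2}\big)dv+\frac{\mu}{\sigma}(W_T-W_\tau)$.

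First I would treat $\mathbb{E}[1_{\{\tau>T\}}\ln\hat X^{(0,b)}_T]$, which, as noted just before the statement, coincides with the fully informed term and is the content of Lemma \ref{lemma:optimal_utility_log_term_one}. To obtain it I rewrite the event in terms of a drifted Brownian motion: with $\nu=\frac{\mu}{\sigma}-\frac{\sigma}{2}$ and $\tilde W_t=W_t+\nu t$ one has $S_t=S_0 e^{\sigma\tilde W_t}$, so $\{\tau>T\}=\{\min_{0\le t\le T}\tilde W_t>\tfrac{\ln\alpha}{\sigma}\}$. The expectation becomes $\mathbb{E}[1_{\{\min\tilde W>\ln\alpha/\sigma\}}(\ln x_0+\frac{\mu^2}{2\sigma^2}T+\frac{\mu}{\sigma}W_T)]$, which I evaluate using the classical joint law of $(\tilde W_T,\min_{t\le T}\tilde W_t)$ obtained from the reflection principle after a Girsanov change removing the drift $\nu$. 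This produces the two $\mathcal N(\cdot)$ terms multiplying $\ln(x_0)+\frac12(\mu-\mu^2/\sigma^2)T$ together with the double integral, the latter arising from the $\frac{\mu}{\sigma}W_T$ contribution weighted by the reflected density.

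Next I would treat $\mathbb{E}[1_{\{\tau\le T\}}\ln\hat X^{(0,a)}_T]$, the content of Lemma \ref{lemma:optimal_utility_log_term_two_uninformed}. Conditioning on $\tau=t$ and $(\Theta,K)=(\theta,k)$, I use that on $\{\tau=t\}$ the barrier condition $S_t=\alpha S_0$ pins the Brownian value, $\sigma W_t=\ln\alpha-(\mu-\frac{\sigma^2}{2})t$, giving the deterministic expression $\ln\hat X^{(0,b)}_\tau=\ln x_0+\frac{\mu\ln\alpha}{\sigma^2}+\frac{\mu}{2}t-\frac{\mu^2}{2\sigma^2}t$. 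By the strong Markov property at $\tau$ and the $\mathbb F$-independence of $(\Theta,K)$ from Remark \ref{remark:brownian_motion}, the increment $W_T-W_\tau$ is centred and independent of $(\tau,\Theta,K)$, so the conditional mean of $\ln\hat X^{(0,a)}_T$ is exactly $h^{(0)}(t,\theta,k)$. Integrating $h^{(0)}$ against $\varphi(\theta,k)$ and against the first-passage density of $\tilde W$ to the level $\ln\alpha/\sigma$, namely $\frac{-\ln\alpha/\sigma}{\sqrt{2\pi t^3}}\exp\{-\frac{1}{2t}(\frac{\ln\alpha}{\sigma}-\nu t)^2\}$, yields the final triple integral.

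The main obstacle is the careful handling of the conditioning at the random liquidation time $\tau$: I must justify, via the strong Markov property together with the $\mathbb F$-independence of $(\Theta,K)$, that the post-$\tau$ Brownian increments are independent of $(\tau,\Theta,K)$ so that the stochastic-integral term drops out of the conditional expectation, and I must invoke the explicit first-passage density of the drifted Brownian motion $\tilde W$. Everything else reduces to Gaussian integrals and the reflection-principle density already used for the first term; integrability is immediate because $\ln\hat X^{(0)}_T$ grows at most linearly in $W_T$ and polynomially in $\tau$.
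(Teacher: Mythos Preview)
Your proposal is correct and follows essentially the same route as the paper: the decomposition (\ref{eq:utility_log_decomposition_uninformed}) into a no-liquidation term handled via Lemma~\ref{lemma:optimal_utility_log_term_one} (joint density of the drifted Brownian motion and its running minimum) and a post-liquidation term handled via Lemma~\ref{lemma:optimal_utility_log_term_two_uninformed} (pinning $W_\tau$ by the barrier condition, killing the stochastic integral by conditioning, then integrating against the first-passage density of $\tau$ and the law $\varphi(\theta,k)$). Your explicit appeal to the strong Markov property at $\tau$ is a slightly more careful phrasing of what the paper does by conditioning on $\sigma(\tau)$, but the argument is the same.
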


We next present some numerical results.

\section{Numerical results}\label{sect:numerical_result}

In this section we illustrate numerical results of the optimization problem for the three types of investors.
We set the parameters $ \mu=0.07, \sigma=0.2$ and the initial value $S_{0}=80$. We let the investment horizon $T=1$. The liquidation trigger level is chosen as $\alpha=0.9$.   The stochastic processes are discretized using an Euler scheme with $M=250$ steps and time intervals of length $\Delta t=\frac{1}{250}$.  The number of simulations is $N=10^5$. We suppose the distribution of $(\Theta,K)$ is uniform on $[0.05, 0.15]\times[0.02,0.08]$. The initial wealth is assumed to be $x_{0} = 80$.  The power utility function is specified as $U(x)=2x^{\frac{1}{2}}$.

\subsection{Filtered estimate of the drift}

The time horizon $[0,1]$ is discretized equally as $0=t_{0}<t_{1}<\cdots<t_{M}=1$. For $0\leq m \leq M$ we denote by $\mu^{M}_{t_{m}}(\Theta,K)$ the discretized approximation of $\mu^{M}(\Theta,K)$ at time $t_{m}$.
For $0\leq m \leq M-1$, we denote by $\Delta W_{m}$ the increment of the Brownian motion over the time interval $[t_{m}, t_{m+1}]$. The approximation of the increment of the $(\mathbb{G}^{(2)},\mathbb{Q})$-Brownian motion is $\Delta W^{\mathbb{Q}}_{m} = \Delta W_{m} + \frac{\mu^{M}_{t_{m}}(\Theta,K)}{\sigma}\Delta t$.  We approximate the filtered drift estimate in (\ref{eq:filter_estimate}) at time $t_{m}$ by
\begin{equation}
\hat{\mu}^{M}_{t_{m}} =
\frac{\int_{0}^{\infty}\int_{0}^{1}\left\{ \mu^{M}_{t_{m}}(\theta,k)\exp\left\{  \sum\limits_{0\leq i \leq m-1} \left(\frac{\mu^{M}_{t_{i}}(\theta,k)}{\sigma}( \Delta W_{i} + \frac{\mu^{M}_{t_{i}}(\Theta,K)}{\sigma}\Delta t) -\frac{\left(\mu^{M}_{t_{i}}(\theta,k)\right)^2}{2\sigma^2} \Delta t \right) \right\} \right\} \varphi(\theta,k)  d\theta dk}{\int_{0}^{\infty}\int_{0}^{1}\left\{\exp\left\{  \sum\limits_{0\leq i\leq m-1} \left(\frac{\mu^{M}_{t_{i}}(\theta,k)}{\sigma}( \Delta W_{i} + \frac{\mu^{M}_{t_{i}}(\Theta,K)}{\sigma}\Delta t) -\frac{\left(\mu^{M}_{t_{i}}(\theta,k)\right)^2}{2\sigma^2} \Delta t \right) \right\}\right\} \varphi(\theta,k) d\theta dk}.
\label{eq:filter_estimate_num_a}
\end{equation}
We use Monte-Carlo method to estimate the integral in (\ref{eq:filter_estimate_num_a}). Suppose the number of simulation is $N$. For $1\leq n \leq N$, we denote by $(\theta^{n},k^{n})$ the realized value of the random variable $(\Theta,K)$ in the
$n$th simulation. We estimate $\hat{\mu}^{M}_{t_{m}}$ in (\ref{eq:filter_estimate_num_a}) by the sample mean
\begin{equation}
\tilde{\mu}^{M}_{t_{m}} =
\frac{\sum\limits_{1\leq n\leq N}\left\{ \mu^{M}_{t_{m}}(\theta^{n},k^{n})\exp\left\{  \sum\limits_{0\leq i\leq m-1} \left(\frac{\mu^{M}_{t_{i}}(\theta^{n},k^{n})}{\sigma}( \Delta W_{i} + \frac{\mu^{M}_{t_{i}}(\Theta,K)}{\sigma}\Delta t) -\frac{\left(\mu^{M}_{t_{i}}(\theta^{n},k^{n})\right)^2}{2\sigma^2} \Delta t \right) \right\} \right\}}{\sum\limits_{1\leq n\leq N}\left\{\exp\left\{  \sum\limits_{0\leq m\leq m-1} \left(\frac{\mu^{M}_{t_{i}}(\theta^{n},k^{n})}{\sigma}( \Delta W_{i} + \frac{\mu^{M}_{t_{i}}(\Theta,K)}{\sigma}\Delta t) -\frac{\left(\mu^{M}_{t_{i}}(\theta^{n},k^{n})\right)^2}{2\sigma^2} \Delta t \right) \right\}\right\}}.
\label{eq:filter_estimate_num_b}
\end{equation}

In Figure~\ref{fig:drift:term:compare} we illustrate a sample filter estimate $\tilde{\mu}^{M}$ compared with the drift term $\mu^{M}(\Theta,K)$ in a specific scenario where the realized value of the liquidation random variables are $(\Theta,K)=(0.1,0.05)$. From Figure~\ref{fig:drift:term:compare}  we note that the filtered estimate of the drift is very close to the realized drift. This result suggests that knowing the functional  form of the market impact is more relevant than the actual realization of $(\Theta,K)$.

\begin{figure}[H]
	\centering
	\includegraphics[width=0.5\linewidth]{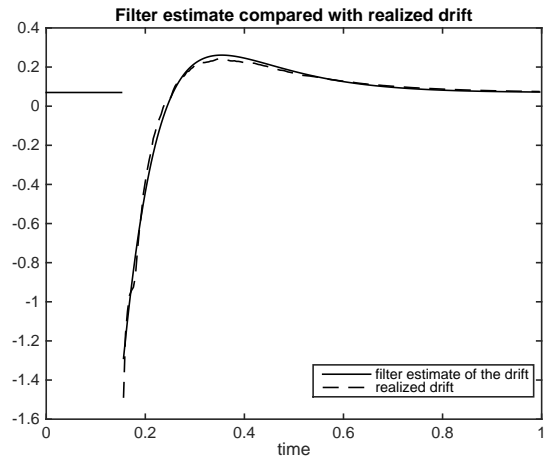}
	\caption{Filter estimate of the drift compared with the realized drift}
	\label{fig:drift:term:compare}
\end{figure}

\subsection{Optimal strategy for power utility}
In this section we illustrate the optimal strategies for fully and partially informed investors in case of power utility by solving the related BSDE numerically. We skip the discussion of log utility since the optimal strategies are simply the "myopic" Merton strategy.
In case of fully informed investors, we approximate the BSDE (\ref{eq:H_BSDE}) by the following discretized BSDE
\begin{align}
\tilde{H}_{t_{m+1}} =& \tilde{H}_{t_{m}}- \left( \frac{p(\mu^{M}_{t_{m}}(\tilde{\theta},\tilde{k}))^{2}}{2(1-p)^{2}\sigma^{2}}\tilde{H}_{t_{m}}+\frac{p\mu^{M}_{t_{m}}(\tilde{\theta},\tilde{k})}{(1-p)\sigma}\tilde Z^{H}_{t_{m}} \right)\Delta t + \tilde Z^{H}_{t_{m}}\Delta W_{t_{m}}, \quad t_{0}\leq t_{m} < t_{M}, \label{eq:H_discretized}\\
\tilde{H}_{t_{M}} =& 1.\label{eq:Z_H_discretized}
\end{align}

The BSDE (\ref{eq:H_discretized})-(\ref{eq:Z_H_discretized}) can be solved using the following recursive scheme (see \citet{gobet2005regression})
\begin{align}
\tilde{Z}^H_{t_{m}}  =& \frac{1}{\Delta t} \mathbb{E}[\tilde{H}_{t_{m+1}}\Delta W_{t_{m}}|\mathcal{G}^{(2)}_{t_{m}}], \label{eq:H_numerical_solution}\\
\tilde{H}_{t_{m}}  =& \frac{\mathbb{E}[\tilde{H}_{t_{m+1}}|\mathcal{G}^{(2)}_{t_{m}}]+ \frac{p\mu^{M}_{t_{m}}(\tilde{\theta},\tilde{k})}{(1-p)\sigma}\tilde{Z}^{H}_{t_{m}}\Delta t}{1-\frac{p(\mu^{M}_{t_{m}}(\tilde{\theta},\tilde{k}))^{2}}{2(1-p)^{2}\sigma^{2}}\Delta t}. \label{eq:Z_numerical_solution}
\end{align}
We estimate the conditional expectation in (\ref{eq:H_numerical_solution}) and (\ref{eq:Z_numerical_solution}) by the Monte-Carlo regression approach proposed by \citet{gobet2005regression}. Note that the market price process $S^{M}_{t}$ is not Markovian with respect to $(\mathbb{G}^{(2)},\mathbb{P})$. We define the running minimum process $\tilde{S}^{M}_{t} = \inf\{  S^{M}_{v}| 0\leq v\leq  t \}$ and note that the pair $(S^{M}_{t}, \tilde{S}^{M}_{t} )$ is Markovian with respect to $(\mathbb{G}^{(2)},\mathbb{P})$. Hence we may choose the regression basis functions: $1,x, x^2, y, y^2$ and $xy$.  By the regression method of \citet{gobet2005regression} the conditional expectations in (\ref{eq:H_numerical_solution}) and (\ref{eq:Z_numerical_solution}) can be estimated by
$$c_{1} + c_{2}(S^{M}_{t}-\alpha S_{0})+c_{3}(S^{M}_{t}-\alpha S_{0})^2+c_{4}(\tilde{S}^{M}_{t}-\alpha S_{0})+c_{5}(\tilde{S}^{M}_{t}-\alpha S_{0})^2+c_{6}(\tilde{S}^{M}_{t}-\alpha S_{0})(\tilde{S}^{M}_{t}-\alpha S_{0})$$ for some coefficients $c_{i}, 1\leq i\leq 6$.

We approximate the optimal strategy for fully informed investor $\hat{\pi}^{(2)}$  by $\tilde{\pi}^{(2,b)}$ as follows
\begin{equation}
\tilde{\pi}^{(2)}_{t_{m}}=\frac{\mu}{(1-p)\sigma^{2}}+\frac{\tilde Z^{H}_{t_{m}}}{\sigma \tilde{H}_{t_{m}}},\quad 0\leq t_{m} \leq  t_{M}.
\end{equation}
Following a similar procedure we may solve the related BSDE for partially informed investors and obtain the approximate optimal strategy.

Figure \ref{fig:optimal_strategy_power_full}  illustrates the approximated optimal strategies for fully and partially investors respectively corresponding to one sample path of the risky asset price where liquidation occurs well before the terminal time $T$.  In particular for the path of the asset price in Figure \ref{fig:optimal_strategy_power_full} liquidation occurs at time $t=0.1540$.  Before liquidation the two strategies are indistinguishable due to the scale.  We plot the optimal strategies before liquidation in  Figure \ref{fig:optimal_strategy_power_Before_Liq} and note that there is some tracking error before liquidation.  This difference may be due to the fact that the before liquidation strategy of both investors contains a component which depends on the solution of a BSDE, which is accomplished backward in time, and in particular depends recursively on the filtered drift estimate for the partially informed investor.  Hence, owing to tracking error typical to filtering problems some errors may be propogated to the before liquidation strategy through the numerical solution procedure for the associated BSDE.   Table \ref{table:optimal_strategies_before_liq} presents the approximated optimal strategies for fully and partially investors at times before liquidation corresponding to Figure \ref{fig:optimal_strategy_power_Before_Liq}.  %

\begin{figure}[H]
\centering
  \includegraphics[width=.75\linewidth]{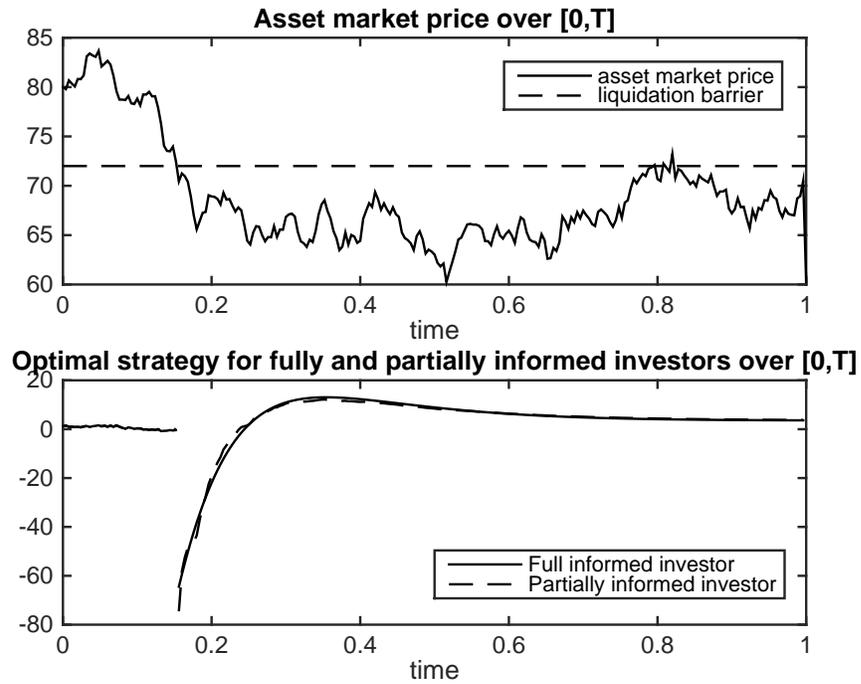}
  \caption{Approximated optimal strategy for fully and partially informed investors over $[0,T]$}
\label{fig:optimal_strategy_power_full}
\end{figure}

\begin{figure}[H]
\centering
  \includegraphics[width=.75\linewidth]{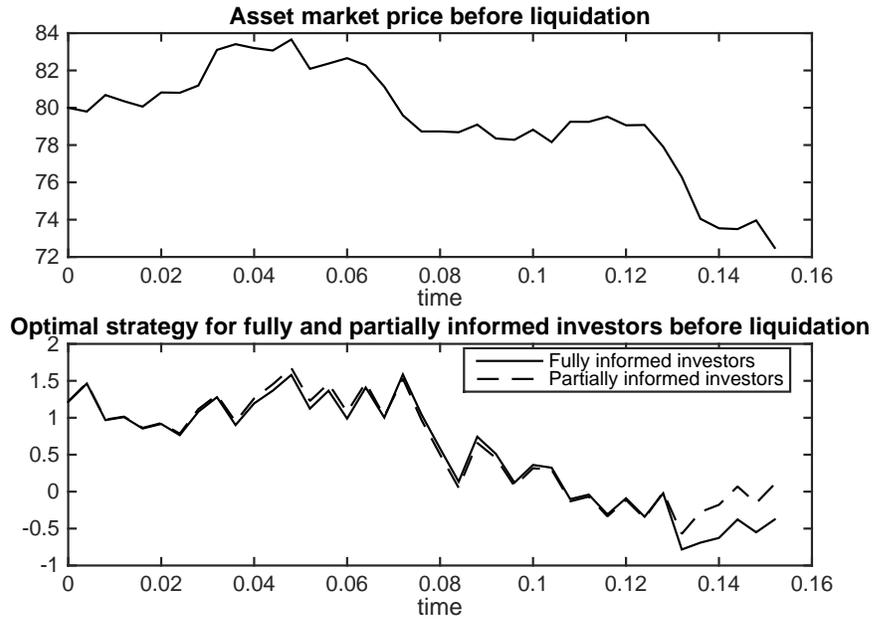}
  \caption{Approximated optimal strategy for fully and partially informed investors before liquidation}
\label{fig:optimal_strategy_power_Before_Liq}
\end{figure}

\begin{table}[H]
	\centering
	\begin{tabular}{|c|c|c|c|c|c|c|c|c|c|}	
	\hline
	  $t_{m}$ &   0.1200 &   0.1240 &   0.1280 & 0.1320 &   0.1360  &  0.1400 &   0.1440  &  0.1480  &  0.1520 \\	
	\hline
	  $S^{M}_{t_{m}}$ &  79.0600  & 79.0766 &  77.9106 & 76.2818 & 74.0479 &  73.5371 &   73.4940  & 73.9593 &   72.4905 \\
	  \hline
	  $\pi^{(1)}_{t_{m}}$ & -0.1127 &  -0.3614  & -0.0063 &  -0.5712  & -0.2756 &  -0.1780 &   0.0699 &   -0.1559  &  0.1043 \\
	  \hline
	   $\pi^{(2)}_{t_{m}}$ & -0.0898  &  -0.3399 &  -0.0224 &  -0.7831  & -0.6907 &  -0.6265 &  -0.3766 &  -0.5502 &  -0.3760 \\
	  \hline
	\end{tabular}
	\caption{Approximated optimal strategies before liquidation}
	\label{table:optimal_strategies_before_liq}
\end{table}

\begin{figure}[H]
\centering
  \includegraphics[width=.75\linewidth]{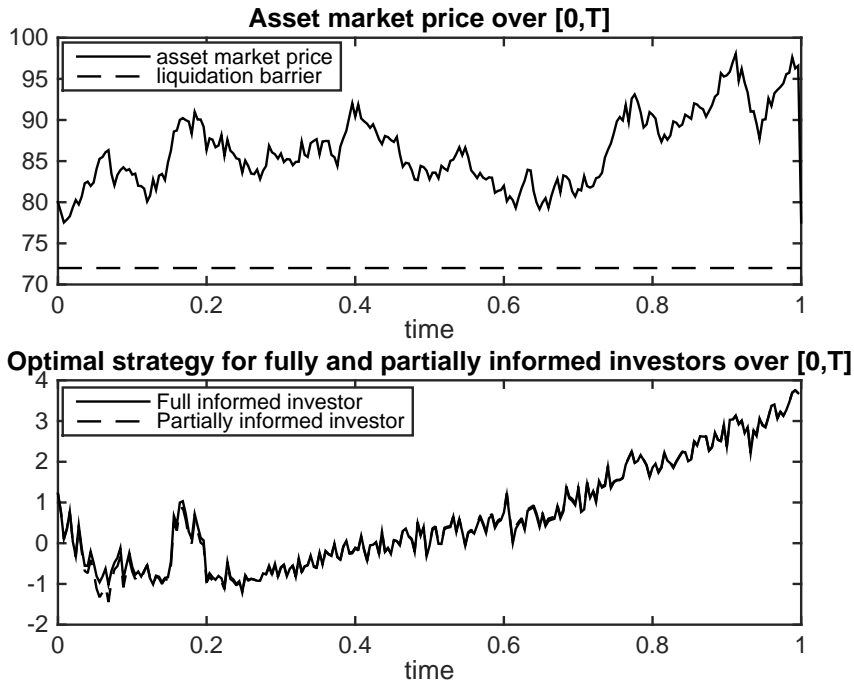}
  \caption{Approximated optimal strategy for fully and partially informed investors without liquidation}
\label{fig:optimal_strategy_power_No_Liq}
\end{figure}

Figure \ref{fig:optimal_strategy_power_No_Liq} illustrates the approximated optimal strategies for fully and partially investors respectively corresponding to a realized path of the asset price that does not induce liquidation.  In particular, the optimal trading strategies of the fully informed and partially informed investors appear almost identical.  We also observe a general tendancy for the optimal strategies to decrease the position in the stock  as its price moves toward the liquidation barrier and increase the position in the stock as the price moves away from the liquidation barrier.  However, as the time to the end of the investment horizon shortens and the probability of liquidation appears less likely the overall trend to  increase the position in the stock, toward the level of the Merton strategy, dominates.

\subsection{Optimal expected utility}
In this subsection we implement the Monte-Carlo method to find the optimal expected power and log utilities.
 In case of uninformed investors, since the
"optimal" strategy is simply the Merton strategy, we may approximate the wealth process $X^{(0)}$ directly using the Euler scheme.  For $0\leq m \leq M$ and  $1\leq n \leq N$, we denote by $X^{(0),n}_{t_{m}}$ the realized wealth
for uninformed investors at time $t_{m}$ in the nth simulation.  The expected utility $\mathbb{E}[U(X^{(0)})]$ is approximated by the sample mean
$\bar{V}^{(0)} = \frac{1}{N}\sum_{1\leq n \leq N}U(X^{(0),n}_{t_{M}})$. The standard error of the sample mean is
$$
  SE^{(0)}=\sqrt{\frac{1}{(N-1)N}\sum_{1\leq n \leq N}\left(U(X^{(0),n}_{t_{M}})-\bar{V}^{(0)}\right)^2}.$$
The relative standard error of the sample mean is
$ RSE^{(0)}= (SE^{(0)})/(|\bar{V}^{(0)}|)$. The $95\%$ confidence interval estimate of the sample mean is $[\bar{V}^{(0)}-1.96*SE^{(0)}, \bar{V}^{(0)}+1.96*SE^{(0)}]$.  This
simulation scheme also applies to the log utility for fully and partially informed investors.

However, in case of the power utility for fully and partially informed
investors we cannot approximate the wealth process directly since the optimal strategies are not explicitly determined. Although we can first approximate
the optimal strategies by solving the related BSDE, this would increase the size of simulation error. Instead we simulate the likelihood process $L$ in
(\ref{eq:likelihood_process_fully_informed}) and $\bar{L}$ in (\ref{eq:likelihood_process_partial_c}) since the optimal expected power utilities are
functionals of $L_{T}$ and $\bar{L}_{T}$ given by (\ref{eq:optimal_utility_power_fully_informed}) and (\ref{eq:optimal_utility_power_partially_informed})
respectively. For instance, in case of power utility for fully informed investors, we denote the discretized realization of $L_{t}$ in nth simulation by $L^{n}_{t_{m}}$ for
$0\leq m\leq M$ and $1\leq n \leq N$. The expectation $\mathbb{E}[(L_{T})^{\frac{p}{p-1}}]$ is estimated by the sample mean $\bar{\xi} = \frac{1}{N}
\sum_{1\leq n \leq N}(L^{n}_{t_{M}})^{\frac{p}{p-1}}$. The standard error of the sample mean is
$$SE^{(2)}=\sqrt{\frac{1}{(N-1)N}\sum_{1\leq n \leq N}
\left((L^{n}_{t_{M}})^{\frac{p}{p-1}}-\bar{\xi}\right)^2}.$$
The relative standard error of the sample mean is
$RSE^{(2)}=(SE^{(2)})/(|\bar{\xi}|)$. The $95\%$
confidence interval estimate of the sample mean is $[\bar{\xi}-1.96*SE^{(2)}, \bar{\xi}+1.96*SE^{(2)}]$.  By (\ref{eq:optimal_utility_power_fully_informed}) the optimal expected utility for fully informed investors is estimated by $\bar{V}^{(2)} = \frac{x_{0}^{p}}{p}(\bar{\xi})^{1-p}$. The $95\%$
confidence interval estimate of optimal expected utility is $[ \frac{x_{0}^{p}}{p}\left(\bar{\xi}-1.96*SE^{(2)}\right)^{1-p},  \frac{x_{0}^{p}}{p}\left(\bar{\xi}+1.96*SE^{(2)}\right)^{1-p}]$. A similar scheme can be applied to the case of power utility for partially informed investors.

 We present the numerical results on the optimal expected utilities for the three types of investors  in the Table 2 and Table 3 for power and log utilities respectively.  As should be expected there exists certain gaps among the optimal expected utilities of different types of investors. We may interpret those gaps as the value of information asymmetry.  The results are more pronounced in the case of power utility than in the case of power utility.  Nevertheless, in both cases there are statistically significant differences in optimal expected wealth given that the confidence intervals do not overlap. In the power utility case  the optimal strategy of the partially informed investor is very close to that of the fully informed investor.  However, the inability to fully capture the potential gains from trading against liquidation, owning to the need to estimate the drift and the tracking error, leads to a significantly lower optimal expected utility.

\begin{table}[H] \label{tab2}
	\centering
	\begin{tabular}{|c|c|c|c|}		
		\hline
		\multirow{3}{*}{Expected utilities} &
		\multicolumn{3}{c|}{Numerical evaluation  }\\
		\cline{2-4}
		& 	\multirow{2}{*}{Sample mean} & Relative & 95\% estimated  \\
		&  & standard error &  confidence interval\\
		\hline
		Fully informed  & 48.9602   &     0.0883    &     [44.5223, 53.0279] \\
		\hline
		Partially informed  & 31.3099   &   0.0172 &    [30.7767, 31.8342]\\
		\hline
		Uninformed  & 18.9228  &     0.0012   &      [18.8796, 18.9661]\\
		\hline
	\end{tabular}
	\caption{Numerical evaluation of optimal power utilities for three types of investors}
\end{table}
\begin{table}[H] \label{tab3}
	\centering
	\begin{tabular}{|c|c|c|c|}
		\hline
		\multirow{3}{*}{Expected utilities} &
		\multicolumn{3}{c|}{Numerical evaluation  }\\
		\cline{2-4}
		& 	\multirow{2}{*}{Sample Mean} & Relative  & 95\% estimated  \\
		&   & standard error &  confidence interval\\
		\hline
		Fully informed  & 4.8282  &  0.0073   &       [4.8219, 4.8346]\\
		\hline
		Partially informed  &4.7579  &    0.0080  &  [4.7520, 4.7638]\\
		\hline
		Uninformed  & 4.3665   & 0.0005 &   [4.3621, 4.3709] \\
		\hline
	\end{tabular}
	\caption{Numerical evaluation of optimal log utilities for three types of investors}
\end{table}

\section{Conclusion}\label{sect:conclusion}
In this paper, we characterize the market impact of liquidation by a function of certain form. We consider the portfolio optimization problem for three types of investors with different level of information about the liquidation trigger mechanism and the market impact. In case of logarithmic utility, we find the closed-form optimal strategy for all three types of investors.  In the case of power utility it is not as straightforward to find the closed-form optimal strategy for the partially informed investors. Finally we present some numerical results using Monte-Carlo simulation method.  These results indicate that there is significant value, in terms of optimal expected utility, of increased information about the opportunity to trade optimally against an investor who may need to liquidate a large position in a disorderly fashion.

There are several possible directions for improving the model. We can use more realistic models of market impact or the barrier that may depend on market, regulatory, or macro-economic variables. For partial insiders, the occupation time below the liquidation threshold is a random variable rather than a known constant as in case of full insiders. We plan to incorporate permanent price impact into the liquidation impact function generalizing the temporary price impact function. We shall explore the effect of different liquidation impact functions on the optimal trading strategies and utility of terminal wealth of uninformed, partially informed, and completely informed investors. Since certain market participants possess different market information it is natural to discuss the value of information in terms of portfolio utility. The results of future research can also inform financial and operational risk management processes and regulations for certain agents and trading activities including short-selling prohibitions, buying constraints, or derivatives market participation.

\section*{Acknowledgement}
The authors thank the Institut de finance structur\'{e}e et des instruments d\'{e}riv\'{e}s de Montr\'{e}al (IFSID) for funding this research. Caroline Hillairet acknowledges also support from Investissements d'Avenir (ANR-11-IDEX-0003/Labex Ecodec/ANR-11-LABX-0047)

\bibliographystyle{abbrvnat}
\bibliography{mybib}

\appendix		
\section{Appendix}
In the appendix we provide technical lemmas and proofs which allow us to easily justify the main results.

\begin{lemma}\label{lemma:optimal_utility_log_term_one}
\begin{align*}
& \mathbb{E}[1_{\{\tau>T\}}\ln (\hat  X^{(2,b)}_{T})|\mathcal G_0^{(2)}]=\left( \ln(X_{0})+\frac{1}{2}(\mu-\frac{\mu^{2}}{\sigma^{2}})T  \right)  \\
& \times  \left\{ \mathcal{N}\left(\frac{-\frac{\ln \alpha}{\sigma}+(\frac{\mu}{\sigma}-\frac{1}{2}\sigma) T}{\sqrt{T}}\right)-\exp\left(\frac{2\mu\ln\alpha}{\sigma^{2}}-\ln\alpha\right)\mathcal{N}\left(\frac{\frac{\ln \alpha}{\sigma}+(\frac{\mu}{\sigma}-\frac{1}{2}\sigma) T}{\sqrt{T}}\right) \right\} \\
&+ \int_{\frac{\ln\alpha}{\sigma}}^{0}\int_{y}^{\infty} \frac{2\mu x(x-2y)}{\sigma\sqrt{2\pi T^{3}}}\exp\left\{(\frac{\mu}{\sigma}-\frac{1}{2}\sigma) x-\frac{1}{2}(\frac{\mu}{\sigma}-\frac{1}{2}\sigma)^{2}T-\frac{1}{2T}(2y-x)^{2}\right \}dxdy
\end{align*}
where $\mathcal{N}(x)=\frac{1}{\sqrt{2\pi}}\int_{-\infty}^{x}e^{\frac{u^{2}}{2}}du$ is the cumulative distribution function of a standard normal random variable.
\end{lemma}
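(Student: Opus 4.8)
The plan is to obtain a closed form for $\ln\hat X^{(2,b)}_T$ and then integrate it against the joint law of the terminal value and running minimum of a single drifted Brownian motion. First I would solve the before-liquidation wealth equation \eqref{eq:wealth_fully_informed_log_before_liquidation_SDE} with the Merton control $\hat\pi^{(2,b)}=\mu/\sigma^2$ from \eqref{eq:optimal_strategy_fully_log_before}. This is a geometric Brownian motion, giving $\hat X^{(2,b)}_T=X_0\exp\{\frac{\mu^2}{2\sigma^2}T+\frac{\mu}{\sigma}W_T\}$, hence $\ln\hat X^{(2,b)}_T=\ln X_0+\frac{\mu^2}{2\sigma^2}T+\frac{\mu}{\sigma}W_T$. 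Since the strategy, the fundamental price $S$ (which equals $S^M$ on $\lbr 0,\tau\wedge T\lbr$), and the stopping time $\tau$ depend only on $\mathbb F$, while $(\Theta,K)$ is independent of $\mathbb F$, the conditioning on $\mathcal G^{(2)}_0=\sigma(\Theta,K)$ may be dropped and it suffices to compute an unconditional expectation.

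Next I would recast the liquidation event in terms of a drifted Brownian motion. Writing $S_t=S_0\exp\{(\mu-\frac12\sigma^2)t+\sigma W_t\}$, the condition $S_t\le\alpha S_0$ becomes $B_t:=W_t+\nu t\le a$, where $\nu:=\frac{\mu}{\sigma}-\frac12\sigma$ and $a:=\frac{\ln\alpha}{\sigma}<0$; thus $\{\tau>T\}=\{m_T>a\}$ with $m_T:=\min_{0\le s\le T}B_s$. Substituting $W_T=B_T-\nu T$ into $\ln\hat X^{(2,b)}_T$ yields $\ln\hat X^{(2,b)}_T=\ln X_0+\frac12(\mu-\frac{\mu^2}{\sigma^2})T+\frac{\mu}{\sigma}B_T$, so the quantity of interest splits as
\[
\Big(\ln X_0+\tfrac12(\mu-\tfrac{\mu^2}{\sigma^2})T\Big)\,\mathbb P(\tau>T)+\frac{\mu}{\sigma}\,\mathbb E\big[1_{\{m_T>a\}}B_T\big].
\]

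I would then remove the drift by Girsanov: under $\tilde{\mathbb P}$ with $d\tilde{\mathbb P}/d\mathbb P=\exp\{-\nu W_T-\frac12\nu^2T\}$ the process $B$ is a standard Brownian motion, and $d\mathbb P/d\tilde{\mathbb P}=\exp\{\nu B_T-\frac12\nu^2T\}$. Both terms reduce to integrals against the joint density of $(B_T,m_T)$ for standard Brownian motion, which by the reflection principle is $\frac{2(x-2y)}{\sqrt{2\pi T^3}}\exp\{-\frac{(x-2y)^2}{2T}\}$ on $\{y\le\min(0,x)\}$. The first-passage probability $\mathbb P(\tau>T)=\mathbb P(m_T>a)$ produces the two-term normal-c.d.f.\ expression with factor $\exp(2\nu a)=\exp(\frac{2\mu\ln\alpha}{\sigma^2}-\ln\alpha)$, while the second expectation, after inserting the Radon--Nikodym factor $e^{\nu x-\frac12\nu^2T}$ and the payoff $\frac{\mu}{\sigma}x$, gives precisely the stated double integral over $\{a<y<0,\ x>y\}$.

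The routine but delicate part is the bookkeeping: applying the first-passage/reflection formulas with a \emph{negative} barrier and nonzero drift, restricting the joint density to $\{m_T>a\}$, and verifying the algebraic simplifications—in particular that $\exp(2\nu a)=\exp(\frac{2\mu\ln\alpha}{\sigma^2}-\ln\alpha)$ and that $\nu x-\frac12\nu^2T$ matches $(\frac{\mu}{\sigma}-\frac12\sigma)x-\frac12(\frac{\mu}{\sigma}-\frac12\sigma)^2T$—reproduce the claimed integrand. I expect no conceptual obstacle beyond careful tracking of signs and the region of integration.
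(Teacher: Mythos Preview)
Your proposal is correct and follows essentially the same route as the paper: both solve the before-liquidation wealth SDE with the Merton control, rewrite $\ln\hat X^{(2,b)}_T$ as $\ln X_0+\tfrac12(\mu-\mu^2/\sigma^2)T+\tfrac{\mu}{\sigma}B_T$ in terms of the drifted Brownian motion $B_t=W_t+(\mu/\sigma-\sigma/2)t$, express $\{\tau>T\}$ via the running minimum of $B$, and then split into $\mathbb P(\tau>T)$ times the constant plus $\tfrac{\mu}{\sigma}\mathbb E[1_{\{m_T>a\}}B_T]$. The only cosmetic difference is that the paper quotes the joint density of $(B_T,\tilde B_T)$ and the first-passage probability directly from \citet{jeanblanc2009mathematical}, whereas you obtain them by a Girsanov change of measure plus the reflection principle for standard Brownian motion; the resulting integrands and the algebraic identities (e.g.\ $\exp(2\nu a)=\exp(\tfrac{2\mu\ln\alpha}{\sigma^2}-\ln\alpha)$) coincide.
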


\begin{proof}
By (\ref{eq:risky_asset_fundamental_value}) we have
\begin{equation}
S_{t}=S_{0}\exp\left\{\sigma\left((\frac{\mu}{\sigma}-\frac{1}{2}\sigma )t+W_{t}\right)\right\}.
\label{eq:asset_price_SDE_solution}
\end{equation}
Define $B_{ t}=(\frac{\mu}{\sigma}-\frac{1}{2}\sigma)t+ W_{t}$ and
$\tilde B_{t}=\inf\{  B_v| 0\leq v\leq  t \}$. Recalling the definition of $\tau$ in (\ref{eq:liquidation_time_definition}) we find
\begin{equation}
1_{\{\tau>T\}}=1_{\{\tilde B_{T}>\frac{\ln \alpha}{\sigma} \}}.
\label{eq:liquidation_time_equivalent_definition}
\end{equation}
Let $\kappa=\frac{\mu}{\sigma}-\frac{1}{2}\sigma$. From \citet{jeanblanc2009mathematical} we know
\begin{equation}
\mathbb{P}(B_{T}\in dx,\tilde B_{T}\in dy)=1_{\{x>y\}}1_{\{y<0\}}\frac{2(x-2y)}{\sqrt{2\pi T^3}}\exp\{\kappa x-\frac{1}{2}\kappa^2T-\frac{1}{2T}(2y-x)^2\}dxdy.
\label{eq:Brownian_minimum_double_density}
\end{equation}
and
\begin{equation}
\mathbb{P}(\tau>T)=\mathcal{N}\left(\frac{-\frac{\ln \alpha}{\sigma}+(\frac{\mu}{\sigma}-\frac{1}{2}\sigma) T}{\sqrt{T}}\right)-\exp\left(\frac{2\mu\ln(\alpha)}{\sigma^2}-\ln(\alpha)\right)\mathcal{N}\left(\frac{\frac{\ln \alpha}{\sigma}+(\frac{\mu}{\sigma}-\frac{1}{2}\sigma) T}{\sqrt{T}}\right).
\label{eq:liquidation_time_greater_than_T_prob}
\end{equation}
On the other hand, by (\ref{eq:wealth_fully_informed_log_before_liquidation_SDE})  we know
\begin{align}
\hat X^{(2,b)}_{T} & = X_{0}\exp\left\{\left(\hat\pi^{(2,b)}\mu - \frac{1}{2}(\hat \pi^{(2,b)}\sigma)^{2}\right) T+\hat\pi^{(2,b)}\sigma W_{T}\right\}\nonumber\\
 &=X_{0}\exp\left\{ \frac{\mu}{\sigma}\left( ( \frac{\mu}{\sigma}-\frac{1}{2}\sigma)T+W_{T} \right)+\frac{1}{2}\mu(1-\frac{\mu}{\sigma^{2}})T\right\}\nonumber\\
&=X_{0}\exp\left\{ \frac{\mu}{\sigma}B_{T}+\frac{1}{2}\mu(1-\frac{\mu}{\sigma^{2}})T\right\}. \label{eq:wealth_no_liquidation_terminal_value}
\end{align}
Using  (\ref{eq:liquidation_time_equivalent_definition}) and (\ref{eq:wealth_no_liquidation_terminal_value}) we compute
\begin{align}
& \mathbb{E}\left[1_{\{\tau>T\}}\ln\left(\hat X^{(2,b)}_{T}\right) |\mathcal G_0^{(2)}\right] \nonumber \\
=& \mathbb{E}\left[1_{\{\tau>T\}}\left\{ \ln(X_{0})+ \frac{\mu}{\sigma}B_{T}+\frac{1}{2}\mu(1-\frac{\mu}{\sigma^{2}})T  \right\} |\mathcal G_0^{(2)} \right]\nonumber \\
=& \mathbb{P}(\tau>T)\left\{ \ln(X_{0})+\frac{1}{2}(\mu-\frac{\mu^{2}}{\sigma^{2}})T \right\}+\mathbb{E}\left[1_{ \{ \hat B_{T}> \frac{\ln\alpha}{\sigma} \} } \frac{\mu}{\sigma} B_{T}  \right] \label{eq:log_utility_first_expectation_compuation}
\end{align}	
since $(\Theta, K)$ is independent of $\mathbb F$ and $X_0$ is $\mathcal G_0^{(2)}$-measurable.
Finally we apply (\ref{eq:Brownian_minimum_double_density}) and (\ref{eq:liquidation_time_greater_than_T_prob}) to (\ref{eq:log_utility_first_expectation_compuation}) to obtain the result.
\end{proof}

\begin{lemma}\label{lemma:optimal_utility_log_term_two}
\begin{align*}
& \mathbb{E}[1_{T\geq \tau}\ln(\hat  X^{(2,a)}_{T})|\mathcal G_0^{(2)}] \\
= & -\frac{\ln\alpha}{\sigma}\int_{0}^{T}\frac{1}{\sqrt{2\pi t^{3}}}\exp\left\{-\frac{1}{2t}\left(\frac{\ln\alpha}{\sigma}-(\frac{\mu}{\sigma}-\frac{1}{2}\sigma) t\right)^{2}\right\}h^{(2)}(t,\Theta,K)dt
\end{align*}
where
\begin{equation}
h^{(2)}(t,\theta,k):=\ln X_{0}+\frac{\mu\ln\alpha}{\sigma^{2}} +\frac{\mu}{2}t-\frac{\mu^{2}}{2\sigma^{2}}t +\int_{t}^{T}\frac{\left(\mu^{I}_v(t,\theta,k)\right)^{2}}{2\sigma^2}dv.
\label{eq:h_function_in_lemma_fully_informed}
\end{equation}
\end{lemma}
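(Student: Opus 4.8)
The plan is to solve the after-liquidation wealth equation (\ref{eq:wealth_fully_informed_log_after_liquidation_SDE}) in closed form and then take conditional expectations by exploiting the independence of $(\Theta,K)$ and $\mathbb F$. Since the optimal after-liquidation strategy (\ref{eq:optimal_strategy_fully_log_after}) is the Merton proportion $\hat\pi^{(2,a)}_t=\mu^{I}_t(\tau,\Theta,K)/\sigma^2$, integrating (\ref{eq:wealth_fully_informed_log_after_liquidation_SDE}) on $\lbr\tau\wedge T,T\rbr$ gives, on $\{\tau\le T\}$,
\[
\ln\hat X^{(2,a)}_T=\ln\hat X^{(2,b)}_\tau+\int_\tau^T\frac{\left(\mu^{I}_v(\tau,\Theta,K)\right)^2}{2\sigma^2}\,dv+\int_\tau^T\frac{\mu^{I}_v(\tau,\Theta,K)}{\sigma}\,dW_v,
\]
where I have used the continuity of the wealth at the liquidation time, $\hat X^{(2,a)}_\tau=\hat X^{(2,b)}_\tau$ (the market price has no jump at $\tau$ since $g(0;\Theta,K)=1$).

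Next I would evaluate $\ln\hat X^{(2,b)}_\tau$. By definition of $\tau$ one has $S_\tau=\alpha S_0$, hence by (\ref{eq:asset_price_SDE_solution}) the value $B_\tau=\tfrac{\ln\alpha}{\sigma}$ is deterministic. Substituting $B_\tau=\tfrac{\ln\alpha}{\sigma}$ into the stopped form of (\ref{eq:wealth_no_liquidation_terminal_value}), namely $\hat X^{(2,b)}_\tau=X_0\exp\{\tfrac{\mu}{\sigma}B_\tau+\tfrac12\mu(1-\tfrac{\mu}{\sigma^2})\tau\}$, collapses the random $B_\tau$-term and yields $\ln\hat X^{(2,b)}_\tau=\ln X_0+\frac{\mu\ln\alpha}{\sigma^2}+\frac{\mu}{2}\tau-\frac{\mu^2}{2\sigma^2}\tau$. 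Combined with the display above, the deterministic and drift terms assemble exactly into $h^{(2)}(\tau,\Theta,K)$ of (\ref{eq:h_function_in_lemma_fully_informed}), so that $1_{\{\tau\le T\}}\ln\hat X^{(2,a)}_T=1_{\{\tau\le T\}}h^{(2)}(\tau,\Theta,K)+N_T$ with $N_T:=\int_{\tau\wedge T}^T\frac{\mu^{I}_v(\tau\wedge T,\Theta,K)}{\sigma}\,dW_v$.

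It then remains to take $\mathbb E[\,\cdot\,|\mathcal G^{(2)}_0]$. The process $s\mapsto\int_{\tau\wedge T}^s\frac{\mu^{I}_v(\tau\wedge T,\Theta,K)}{\sigma}\,dW_v$ started at the stopping time $\tau\wedge T$ is a $(\mathbb G^{(2)},\mathbb P)$-martingale under the admissibility condition (\ref{eq:admissible_strategy_integration_condition}), so optional sampling gives $\mathbb E[N_T|\mathcal G^{(2)}_{\tau\wedge T}]=0$, and since $\mathcal G^{(2)}_0\subset\mathcal G^{(2)}_{\tau\wedge T}$ the stochastic-integral contribution vanishes. For the surviving term I would use that $\tau$ is an $\mathbb F$-stopping time while $(\Theta,K)$ is independent of $\mathbb F$: conditionally on $\mathcal G^{(2)}_0=\sigma(\Theta,K)$ the law of $\tau$ is its unconditional law, and $h^{(2)}(t,\theta,k)$ is deterministic once $(t,\theta,k)$ are fixed. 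Writing the conditional expectation as the integral of $h^{(2)}(t,\Theta,K)$ against the first-passage density of $\tau$ then finishes the argument.

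Finally I would insert that density. Since $\tau=\inf\{t:B_t\le\tfrac{\ln\alpha}{\sigma}\}$ with $B$ a Brownian motion of drift $\kappa=\tfrac{\mu}{\sigma}-\tfrac12\sigma$ reaching the negative level $\tfrac{\ln\alpha}{\sigma}$, the inverse-Gaussian density from \citet{jeanblanc2009mathematical} is $\mathbb P(\tau\in dt)=-\frac{\ln\alpha}{\sigma}\frac{1}{\sqrt{2\pi t^3}}\exp\{-\frac{1}{2t}(\frac{\ln\alpha}{\sigma}-\kappa t)^2\}\,dt$, which is precisely the kernel in the statement. I expect the main obstacle to be the rigorous justification that $N_T$ has vanishing conditional expectation, i.e.\ verifying the true (not merely local) martingale property of the stopped stochastic integral and applying optional sampling at $\tau\wedge T$; the bookkeeping identifying $\ln\hat X^{(2,b)}_\tau$ plus the drift integral with $h^{(2)}$ is routine once $B_\tau=\tfrac{\ln\alpha}{\sigma}$ is used.
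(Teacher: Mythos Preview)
Your proposal is correct and follows essentially the same approach as the paper: solve the after-liquidation SDE, use $S_\tau=\alpha S_0$ (equivalently $B_\tau=\tfrac{\ln\alpha}{\sigma}$) to obtain the closed form of $\ln\hat X^{(2,b)}_\tau$, eliminate the stochastic integral, and then integrate $h^{(2)}(t,\Theta,K)$ against the inverse-Gaussian first-passage density of $\tau$. The only cosmetic difference is that the paper kills the stochastic integral by conditioning on $\sigma(\tau)\vee\mathcal G_0^{(2)}$ in a tower step, whereas you invoke the martingale property and optional sampling at $\tau\wedge T$; these are equivalent justifications.
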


\begin{proof}
Let $t = \tau$ in (\ref{eq:asset_price_SDE_solution}) we have
\begin{equation}
S_{\tau}=S_{0}\exp\{(\mu-\frac{1}{2}\sigma^{2})\tau+\sigma W_{\tau}^{\mathbb{P}}\}.
\label{eq:asset_price_at_liquidation_expression}
\end{equation}
Using the fact $S_{\tau}=\alpha S_{0}$ we find
\begin{equation}
W_{\tau}^{\mathbb{P}}=\frac{1}{\sigma}\left\{ \ln \alpha-(\mu-\frac{1}{2}\sigma^2)\tau  \right\}.
\label{eq:brownian_at_liquidation_value}
\end{equation}
By (\ref{eq:wealth_fully_informed_log_before_liquidation_SDE})  and (\ref{eq:brownian_at_liquidation_value}) we compute
\begin{align}
\hat X^{(2,b)}_{\tau} & = X_{0}\exp\left\{(\hat{\pi}^{(2,b)}-\frac{1}{2}(\hat \pi^{(2,b)})^{2}\sigma^{2})\tau+\hat\pi^{b}\sigma W_{\tau}\right\} \nonumber \\
& = X_{0}\exp\left\{\frac{\mu^{2}}{2\sigma^{2}}t+\frac{\mu}{\sigma} W_{\tau}\right\} \nonumber\\
& = X_{0}\exp\left\{\frac{\mu\ln\alpha}{\sigma^{2}} +\frac{\mu}{2}\tau-\frac{\mu^{2}}{2\sigma^{2}}\tau\right\}. \label{eq:wealth_fully_informed_log_at_liquidation}
\end{align}	
Solving (\ref{eq:wealth_fully_informed_log_after_liquidation_SDE}) we obtain
\begin{align}
\hat X^{(2,a)}_{T} & =  \hat X^{(2,b)}_{\tau}\exp\{\int_{\tau}^{T}(\hat \pi^{2,a}_{v}(\Theta,K)\mu^{I}_{v}(\tau,\Theta,K)-\frac{1}{2}(\hat \pi^{2,a}_{v}(\Theta,K))^{2}\sigma^{2})dv+\int_{\tau}^{T}\hat\pi^{2,a}_{v}(\Theta,K)\sigma dW_{v}\} \nonumber\\
& = \hat X^{(2,b)}_\tau\exp\{\int_{\tau}^{T}\frac{(\mu^{I}_{v}(\tau, \Theta,K))^{2}}{2\sigma^{2}}dv+\int_{\tau}^{T}\frac{\mu^{I}_{v}(\tau, \Theta,K)}{\sigma} dW_{v}\}.
\label{eq:wealth_log_fully_informed_terminal_expression}
\end{align}
Using (\ref{eq:wealth_fully_informed_log_at_liquidation}) and (\ref{eq:wealth_log_fully_informed_terminal_expression}) we compute
\begin{align*}
& \mathbb{E}\left[1_{T\geq \tau}\ln \left(\hat X^{(2,a)}_{T} \right)| \mathcal G_0^{(2)} \right] \\
=&  \mathbb{E}\left[1_{T\geq \tau}\left(\ln\left(\hat X^{(2,b)}_{\tau}\right)+{\int_{\tau}^{T} \frac{\left(\mu_{v}^{I}(\tau, \Theta,K)\right)^{2}}{2\sigma^{2}}dv+\int_{\tau}^{T}\frac{\mu^{I}_{v}(\tau, \Theta,K)}{\sigma}dW_{v}}\right)| \mathcal G_0^{(2)}\right] \\
=& \mathbb{E}\left[\mathbb{E}\left[\left.1_{T\geq \tau}\left(\ln(\hat X^{(2,b)}_{\tau})+{\int_{\tau}^{T} \frac{\left(\mu_{v}^{I}(\tau, \Theta,K)\right)^{2}}{2\sigma^{2}}dv+\int_{\tau}^{T}\frac{\mu^{I}_{v}(\tau, \Theta,K)}{\sigma}dW_{v}}\right)\right|\sigma(\tau), \mathcal G_0^{(2)}\right] | \mathcal G_0^{(2)} \right] \\
=&  \mathbb{E}\left[1_{T\geq \tau}\left(\ln X_{0}+\frac{\mu\ln\alpha}{\sigma^{2}} +\frac{\mu}{2}\tau-\frac{\mu^{2}}{2\sigma^{2}}\tau +\int_{\tau}^{T}\frac{\left(\mu^{I}_{v}(t,\Theta,K)\right)^{2}}{2\sigma^{2}}dv\right) | \mathcal G_0^{(2)} \right] .
\end{align*}
Recall that $(\Theta,K)$ is independent to $\mathbb F$ and that from \citet[Sect. 3.3.1]{jeanblanc2009mathematical}  that the density of $\tau$ is
\begin{equation}
\mathbb{P}(\tau\in dt)=-\frac{\ln\alpha}{\sigma}\frac{1}{\sqrt{2\pi t^3}}\exp\left\{-\frac{1}{2t}\left(\frac{\ln\alpha}{\sigma}-(\frac{\mu}{\sigma}-\frac{1}{2}\sigma) t\right)^2\right\}dt.
\label{eq:tau_desntiy}
\end{equation}
Using (\ref{eq:tau_desntiy}),  and the definition of the function $h^{(2)}(t,\theta,k)$ in (\ref{eq:h_function_in_lemma_fully_informed}), we obtain the result.
\end{proof}

\begin{lemma}\label{lemma:optimal_utility_partially_log_term_two}
\begin{equation}
\mathbb{E}[1_{T\geq \tau}\ln(\hat {X}^{1,a}_{T})]
= -\frac{\ln\alpha}{\sigma}\int_{0}^{T}\frac{1}{\sqrt{2\pi t^{3}}}\exp\left\{-\frac{1}{2t}\left(\frac{\ln\alpha}{\sigma}-(\frac{\mu}{\sigma}-\frac{1}{2}\sigma) t\right)^2\right\}h^{(1)}(t)dt
\end{equation}
where
\begin{equation}
h^{(1)}(t):=\ln x_{0}+\frac{\mu\ln\alpha}{\sigma^{2}} +\frac{\mu}{2}t-\frac{\mu^{2}}{2\sigma^{2}}t +\int_{t}^{T}\frac{\left(\bar{\mu}^{M}_{v}\right)^{2}}{2\sigma^2}dv.
\label{eq:h_function_in_lemma_partially_informed}
\end{equation}
\end{lemma}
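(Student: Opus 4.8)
The plan is to follow the proof of Lemma~\ref{lemma:optimal_utility_log_term_two} line by line, substituting the filtered drift $\bar\mu^M$ for the observed drift $\mu^M(\Theta,K)$ and the innovations Brownian motion $\tilde W$ for $W$, and dropping the conditioning on $\mathcal{G}_0^{(2)}=\sigma(\Theta,K)$ since the partially informed investor's objective \eqref{eq:optimization_objective_partially_informed} is unconditional.

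First I would record the wealth at the liquidation time. Because the partially informed investor uses the same before-liquidation strategy $\hat\pi^{(1,b)}=\mu/\sigma^2$ as the fully informed one, the computation \eqref{eq:wealth_fully_informed_log_at_liquidation} carries over verbatim: inserting $S_\tau=\alpha S_0$ through \eqref{eq:brownian_at_liquidation_value} gives
$$\hat X^{(1,b)}_\tau = x_0\exp\Big\{\tfrac{\mu\ln\alpha}{\sigma^2}+\tfrac{\mu}{2}\tau-\tfrac{\mu^2}{2\sigma^2}\tau\Big\}.$$
Next I would integrate the after-liquidation log-dynamics. With $\hat\pi^{(1,a)}_t=\bar\mu^M_t/\sigma^2$ and the $\tilde W$-driven wealth equation, It\^o's formula yields
$$\ln\hat X^{(1,a)}_T=\ln\hat X^{(1,b)}_\tau+\int_\tau^T\frac{(\bar\mu^M_v)^2}{2\sigma^2}\,dv+\int_\tau^T\frac{\bar\mu^M_v}{\sigma}\,d\tilde W_v,$$
so that the first five terms reproduce $h^{(1)}(\tau)$ and only the stochastic integral remains.

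I would then take expectations and split the result into three pieces. The term $\ln\hat X^{(1,b)}_\tau$ is a deterministic function of $\tau$ alone, so integrating it against the density \eqref{eq:tau_desntiy} of $\tau$ gives the first four terms of $h^{(1)}$. The stochastic integral vanishes in expectation: since $\tau$ is a $\mathbb{G}^{(1)}$-stopping time, $\bar\mu^M$ is $\mathbb{G}^{(1)}$-adapted, and $\tilde W$ is a $(\mathbb{G}^{(1)},\mathbb{P})$-Brownian motion by \citet[Lemma 4.1]{bjork2010optimal}, the process $t\mapsto\int_0^t 1_{\{v>\tau\}}\frac{\bar\mu^M_v}{\sigma}\,d\tilde W_v$ is a $\mathbb{G}^{(1)}$-(local) martingale null at the origin, and the admissibility condition \eqref{eq:admissible_strategy_integration_condition_partial} secures the integrability needed to conclude its expectation is zero. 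Conditioning the remaining quadratic term on $\{\tau=t\}$ and applying the $\tau$-density then produces the claimed one-dimensional integral.

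The main obstacle is the interpretation of the term $\int_t^T(\bar\mu^M_v)^2/(2\sigma^2)\,dv$ inside $h^{(1)}(t)$. Unlike the fully informed case, conditioning on $\{\tau=t\}$ does not render $\bar\mu^M_v$ deterministic, because it is a path-dependent nonlinear filter \eqref{eq:filter_estimate}; hence $h^{(1)}(t)$ is genuinely random and the displayed formula must be read with the outer expectation acting on it (equivalently, this term is replaced by $\mathbb{E}[\,\int_t^T(\bar\mu^M_v)^2/(2\sigma^2)\,dv\mid\tau=t\,]$). Making this step rigorous requires justifying the interchange of the expectation with the conditioning on $\tau$, for which I would invoke Fubini together with the integrability of $\bar\mu^M$ guaranteed by admissibility.
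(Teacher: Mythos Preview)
Your proposal is correct and follows essentially the same route as the paper: compute $\hat X^{(1,b)}_\tau$ from \eqref{eq:brownian_at_liquidation_value}, apply It\^o to get $\ln\hat X^{(1,a)}_T$, kill the $d\tilde W$-integral (the paper does this by conditioning on $\sigma(\tau)$ rather than your direct martingale argument, but the content is the same), and integrate the remainder against the density \eqref{eq:tau_desntiy} of $\tau$. Your observation that $\int_t^T(\bar\mu^M_v)^2/(2\sigma^2)\,dv$ is not rendered deterministic by conditioning on $\{\tau=t\}$ is well-taken---the paper's own proof glosses over this point and the formula must indeed be read with an implicit conditional expectation inside $h^{(1)}(t)$.
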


\begin{proof}
Similar to the proof of Lemma \ref{lemma:optimal_utility_log_term_two}, we find the terminal wealth $\hat{X}^{(1,a)}_{T}$ if liquidation occurs before $T$
\begin{equation}
\hat X^{(1,a)}_{T}
= x_{0}\exp\left\{\frac{\mu\ln\alpha}{\sigma^{2}} +\frac{\mu}{2}\tau-\frac{\mu^{2}}{2\sigma^{2}}\tau+\int_{\tau}^{T}\frac{(\bar\mu^{M}_{v})^{2}}{2\sigma^{2}}dv+\int_{\tau}^{T}\frac{\bar\mu^{M}_{v}}{\sigma} d\tilde{W}_{v}\right\}.
\label{eq:wealth_log_partially_informed_terminal_expression}
\end{equation}
We compute
\begin{align}
& \mathbb{E}\left[1_{T\geq \tau}\ln \left(\hat X^{(1,a)}_T\right)\right] \nonumber\\
=& \mathbb{E}\left[1_{T\geq \tau}\left\{\ln x_{0}+\frac{\mu\ln\alpha}{\sigma^{2}} +\frac{\mu}{2}\tau-\frac{\mu^{2}}{2\sigma^{2}}\tau+\int_{\tau}^{T}\frac{(\bar\mu^{M}_{v})^{2}}{2\sigma^{2}}dv+\int_{\tau}^{T}\frac{\bar\mu^{M}_{v}}{\sigma} d\tilde{W}_{v}\right\}\right] \nonumber\\
=& \mathbb{E}\left[\mathbb{E}\left[\left.1_{T\geq \tau}\left\{\ln x_{0}+\frac{\mu\ln\alpha}{\sigma^{2}} +\frac{\mu}{2}\tau-\frac{\mu^{2}}{2\sigma^{2}}\tau+\int_{\tau}^{T}\frac{(\bar\mu^{M}_{v})^{2}}{2\sigma^{2}}dv+\int_{\tau}^{T}\frac{\bar\mu^{M}_{v}}{\sigma} d\tilde{W}_{v}\right\}\right|\sigma(\tau)\right]\right]\nonumber\\
=& \mathbb{E}\left[1_{T\geq \tau}\left\{\ln x_{0}+\frac{\mu\ln\alpha}{\sigma^{2}} +\frac{\mu}{2}\tau-\frac{\mu^{2}}{2\sigma^{2}}\tau+\int_{\tau}^{T}\frac{(\bar\mu^{M}_{v})^{2}}{2\sigma^{2}}dv\right\}\right]
\end{align}
Using the density of $\tau$ given in (\ref{eq:tau_desntiy}) and the definition of the function $h^{(1)}(t)$ in (\ref{eq:h_function_in_lemma_partially_informed}), we obtain the result.
\end{proof}

\begin{lemma}\label{lemma:optimal_utility_power_term_one_uninformed}
\begin{align}
& \frac{1}{p}\mathbb{E}[1_{\{\tau>T\}}(\hat X^{(0,b)}_{T})^p] =
\frac{x_{0}^{p}}{p}\exp\left(\frac{p\mu^{2}T}{2(1-p)\sigma^{2}} \right) \nonumber\\
&\times\left\{\mathcal{N}\left(\frac{-\frac{\ln \alpha}{\sigma}+(\frac{\mu}{(1-p)\sigma}-\frac{\sigma}{2})T}{\sqrt{T}}\right)-\exp\left(\frac{2\mu\ln \alpha}{(1-p)\sigma^{2}}-\ln \alpha\right)\mathcal{N}\left(\frac{\frac{\ln \alpha}{\sigma}+(\frac{\mu}{(1-p)\sigma}-\frac{\sigma}{2})T}{\sqrt{T}}\right)\right\}.
\end{align}
\end{lemma}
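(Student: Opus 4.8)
The plan is to reduce the statement to the first-passage law for a Brownian motion with drift, reusing the machinery from the proof of Lemma~\ref{lemma:optimal_utility_log_term_one}. First I would solve the wealth SDE for $\hat X^{(0,b)}$ with the constant Merton proportion $\hat\pi^{(0)}=\frac{\mu}{(1-p)\sigma^{2}}$ from (\ref{eq:optimal_strategy_power_uninformed}), obtaining the closed-form geometric Brownian motion
$$\hat X^{(0,b)}_{T}=x_{0}\exp\left\{\left(\frac{\mu^{2}}{(1-p)\sigma^{2}}-\frac{\mu^{2}}{2(1-p)^{2}\sigma^{2}}\right)T+\frac{\mu}{(1-p)\sigma}W_{T}\right\}.$$
Raising this to the power $p$ and substituting $W_{T}=B_{T}-\kappa T$, where $\kappa=\frac{\mu}{\sigma}-\frac{1}{2}\sigma$ and $B_{t},\tilde B_{t}$ are as in the proof of Lemma~\ref{lemma:optimal_utility_log_term_one}, I would write $(\hat X^{(0,b)}_{T})^{p}=x_{0}^{p}\exp\{(A-b\kappa)T+bB_{T}\}$ with $A=p\big(\frac{\mu^{2}}{(1-p)\sigma^{2}}-\frac{\mu^{2}}{2(1-p)^{2}\sigma^{2}}\big)$ and $b=\frac{p\mu}{(1-p)\sigma}$. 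Using (\ref{eq:liquidation_time_equivalent_definition}) to rewrite $1_{\{\tau>T\}}=1_{\{\tilde B_{T}>\frac{\ln\alpha}{\sigma}\}}$, the target reduces to $\frac{x_{0}^{p}}{p}e^{(A-b\kappa)T}\,\mathbb{E}[1_{\{\tilde B_{T}>\frac{\ln\alpha}{\sigma}\}}e^{bB_{T}}]$.

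The heart of the computation is the weighted first-passage expectation $\mathbb{E}[1_{\{\tilde B_{T}>c\}}e^{bB_{T}}]$ with $c=\frac{\ln\alpha}{\sigma}<0$. The cleanest route is an Esscher/Girsanov change of measure: setting $\frac{d\tilde{\mathbb{P}}}{d\mathbb{P}}=e^{bW_{T}-\frac12 b^{2}T}$ turns $B$ into a Brownian motion with drift $\kappa+b$ under $\tilde{\mathbb{P}}$ while leaving the event $\{\tilde B_{T}>c\}$ unchanged, so that $\mathbb{E}[1_{\{\tilde B_{T}>c\}}e^{bB_{T}}]=e^{b\kappa T+\frac12 b^{2}T}\,\tilde{\mathbb{P}}(\tilde B_{T}>c)$. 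Then $\tilde{\mathbb{P}}(\tilde B_{T}>c)$ is precisely the running-minimum probability of a drifted Brownian motion, of the same form as (\ref{eq:liquidation_time_greater_than_T_prob}) but with drift $\kappa+b$ replacing $\kappa$, yielding the two $\mathcal{N}(\cdot)$ terms. Alternatively one could integrate $e^{bx}$ directly against the joint density (\ref{eq:Brownian_minimum_double_density}) of $(B_{T},\tilde B_{T})$ and complete the square, but the measure change avoids that bookkeeping.

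The only real obstacle is the algebraic verification that the constants collapse to the stated form, which I would carry out through three identities. First, the shifted drift simplifies to $\kappa+b=\frac{\mu}{(1-p)\sigma}-\frac{\sigma}{2}$, matching the arguments of both normal CDFs. Second, the reflection exponent satisfies $2(\kappa+b)\,\frac{\ln\alpha}{\sigma}=\frac{2\mu\ln\alpha}{(1-p)\sigma^{2}}-\ln\alpha$, matching the prefactor of the second $\mathcal{N}(\cdot)$. Third, the leftover deterministic exponent $(A-b\kappa)+b\kappa+\frac12 b^{2}=A+\frac12 b^{2}$ collapses to $\frac{p\mu^{2}}{2(1-p)\sigma^{2}}$, reproducing the overall prefactor $\exp\!\big(\frac{p\mu^{2}T}{2(1-p)\sigma^{2}}\big)$. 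Assembling these three simplifications with the factor $\frac{x_{0}^{p}}{p}$ gives the claimed expression verbatim.
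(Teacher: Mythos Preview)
Your proposal is correct and follows essentially the same strategy as the paper: both reduce the expectation to the first-passage probability of a Brownian motion with shifted drift $\kappa+b=\frac{\mu}{(1-p)\sigma}-\frac{\sigma}{2}$ and then invoke the standard formula (the analogue of (\ref{eq:liquidation_time_greater_than_T_prob})). The only difference is presentational: the paper integrates $(\hat X^{(0,b)}_{T})^{p}$ directly against the joint density (\ref{eq:Brownian_minimum_double_density}) and recognises the resulting double integral as $\mathbb{P}(\tilde C_{T}>\frac{\ln\alpha}{\sigma})$ for $C_{t}=(\kappa+b)t+W_{t}$, whereas you obtain the same drift shift via the Esscher/Girsanov change of measure---precisely the alternative you yourself mention.
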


\begin{proof}
The proof is basically the same as that of Lemma \ref{lemma:optimal_utility_log_term_one} except using the power utility function instead of log utility function. We compute
\begin{align}
& \frac{1}{p}\mathbb{E}[1_{\{\tau>T\}}(\hat X^{(0,b)}_{T})^p] \nonumber\\
=&\frac{x_{0}^{p}}{p}\exp\left(\frac{p\mu^{2}T}{2(1-p)\sigma^{2}} \right)  \nonumber\\
&\times\int_{\frac{\ln\alpha}{\sigma}}^{0}\int_{y}^{\infty} \frac{2(x-2y)}{\sqrt{2\pi T^{3}}}\exp\left\{(\frac{\mu}{(1-p)\sigma}-\frac{\sigma}{2})x-\frac{1}{2}(\frac{\mu}{(1-p)\sigma}-\frac{\sigma}{2})^{2}T-\frac{1}{2T}(2y-x)^2\right\}dxdy.
\label{eq:power_utility_first_expectation_compuation}
\end{align}	
Define $ C_t=(\frac{\mu}{(1-p)\sigma}-\frac{\sigma}{2})t+W_t$ and $\tilde C_t=\inf\{ C_s| 0\leq s\leq t \}$. Note that the integral in (\ref{eq:power_utility_first_expectation_compuation}) is  equal to $\mathbb{P}(\tilde C_{T}>\frac{\ln \alpha}{\sigma})$.
By \citet[Sect. 3.2.2]{jeanblanc2009mathematical} we know
\begin{align}
&\mathbb{P}(\tilde C_{T}>\frac{\ln \alpha}{\sigma})\nonumber\\
=&\mathcal{N}\left(\frac{-\frac{\ln \alpha}{\sigma}+(\frac{\mu}{(1-p)\sigma}-\frac{\sigma}{2})T}{\sqrt{T}}\right)-\exp\left(\frac{2\mu\ln \alpha}{(1-p)\sigma^2}-\ln \alpha\right)\mathcal{N}\left(\frac{\frac{\ln \alpha}{\sigma}+(\frac{\mu}{(1-p)\sigma}-\frac{\sigma}{2})T}{\sqrt{T}}\right).
\label{eq:C_T_cumulative_probability}
\end{align}
Substituting (\ref{eq:C_T_cumulative_probability}) into (\ref{eq:power_utility_first_expectation_compuation}) we obtain the result.
\end{proof}

\begin{lemma}\label{lemma:optimal_utility_power_term_two_uninformed}
\begin{align*}
& \frac{1}{p}\mathbb{E}[1_{T\geq \tau}(\hat X^{(0,a)}_{T})^{p}]\\
=& -\frac{\ln\alpha}{\sigma}\int_{0}^{1}\int_{0}^{\infty}\int_{0}^{T}\frac{1}{\sqrt{2\pi t^{3}}}\exp\left\{-\frac{1}{2t}\left(\frac{\ln\alpha}{\sigma}-(\frac{\mu}{\sigma}-\frac{1}{2}\sigma) t\right)^{2}\right\}l^{(0)}(t,\theta,k)\varphi(\theta,k)dtd\theta dk
\end{align*}
where
\begin{equation}
l^{(0)}(t,\theta,k)=\frac{x_{0}^{p}}{p}\exp\left\{\frac{\mu\ln\alpha}{(1-p)\sigma^{2}} +\frac{1}{2}\frac{\mu}{(1-p)}t-\frac{1}{2}\frac{\mu^{2}}{(1-p)^{2}\sigma^2}t+\int_{t}^{T}\left( \frac{p\mu\mu^{I}_{v}(t,\theta,k)}{(1-p)\sigma^{2}} \right)dv\right\}.
\label{eq:l_function_in_lemma_uninformed}
\end{equation}
\end{lemma}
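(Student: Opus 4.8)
The plan is to mirror the argument of Lemma~\ref{lemma:optimal_utility_log_term_two}, transposing it from logarithmic to power utility and from the fully informed to the uninformed wealth dynamics, and to recycle the parts already established in Lemma~\ref{lemma:optimal_utility_power_term_one_uninformed}. The key observation is that the prefactor $-\frac{\ln\alpha}{\sigma}\frac{1}{\sqrt{2\pi t^{3}}}\exp\{\cdots\}\,dt$ appearing in the statement is exactly the density of $\tau$ recorded in \eqref{eq:tau_desntiy}, while $\varphi(\theta,k)\,d\theta\,dk$ is the law of $(\Theta,K)$. Hence the stated triple integral is nothing other than $\mathbb{E}\big[1_{\{\tau\le T\}}\,l^{(0)}(\tau,\Theta,K)\big]$, and it suffices to identify $l^{(0)}(t,\theta,k)$ with the conditional expectation $\mathbb{E}\big[\tfrac1p(\hat X^{(0,a)}_T)^p \mid \tau=t,\,\Theta=\theta,\,K=k\big]$ and then integrate against these two densities.

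First I would solve the two wealth SDEs under the Merton strategy $\hat\pi^{(0)}=\frac{\mu}{(1-p)\sigma^{2}}$ from \eqref{eq:optimal_strategy_power_uninformed}. On $\lbr 0,\tau\wedge T\lbr$ the before-liquidation wealth is $\hat X^{(0,b)}_t=x_{0}\exp\{(\hat\pi^{(0)}\mu-\tfrac12(\hat\pi^{(0)}\sigma)^2)t+\hat\pi^{(0)}\sigma W_t\}$; substituting the value of $W_\tau$ from \eqref{eq:brownian_at_liquidation_value} expresses $\hat X^{(0,b)}_\tau$ as a deterministic function of $\tau$ alone, exactly as in \eqref{eq:wealth_fully_informed_log_at_liquidation}. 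Solving the after-liquidation SDE on $\lbr\tau\wedge T,T\rbr$ then gives
\begin{equation*}
\hat X^{(0,a)}_T=\hat X^{(0,b)}_\tau\exp\Big\{\int_\tau^T\big(\hat\pi^{(0)}\mu^{I}_v(\tau,\Theta,K)-\tfrac12(\hat\pi^{(0)}\sigma)^2\big)\,dv+\int_\tau^T\hat\pi^{(0)}\sigma\,dW_v\Big\}.
\end{equation*}

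Raising to the power $p$ and computing the conditional expectation given $\sigma(\tau)\vee\sigma(\Theta,K)$ is the heart of the proof. Conditionally on $\tau=t$, $\Theta=\theta$, $K=k$, the integrand $\hat\pi^{(0)}\sigma$ is deterministic, so $\int_t^T\hat\pi^{(0)}\sigma\,dW_v$ is Gaussian; by the strong Markov property the post-$\tau$ increments of $W$ are independent of $\mathcal F_\tau$, hence independent of $\tau$, and I would evaluate $\mathbb{E}\big[\exp\{p\int_t^T\hat\pi^{(0)}\sigma\,dW_v\}\big]=\exp\{\tfrac12 p^{2}(\hat\pi^{(0)}\sigma)^{2}(T-t)\}$ via the Gaussian Laplace transform. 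Collecting the deterministic exponent of $(\hat X^{(0,b)}_\tau)^p$, the deterministic drift integral, and this variance correction (in which the terms $-\tfrac12 p(\hat\pi^{(0)}\sigma)^2$ and $\tfrac12 p^{2}(\hat\pi^{(0)}\sigma)^2$ combine) yields the exponent defining $l^{(0)}(t,\theta,k)$ as in \eqref{eq:l_function_in_lemma_uninformed}.

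Finally, since $(\Theta,K)$ is independent of $\mathbb F$ and therefore of $\tau$, I would apply the tower property to write $\mathbb{E}\big[1_{\{\tau\le T\}}\tfrac1p(\hat X^{(0,a)}_T)^p\big]=\mathbb{E}\big[1_{\{\tau\le T\}}\,l^{(0)}(\tau,\Theta,K)\big]$ and then unfold this last expectation as an integral of $l^{(0)}$ against the product of the density of $\tau$ from \eqref{eq:tau_desntiy} and the density $\varphi$, producing the stated triple integral. The main obstacle is the conditional-expectation step: one must justify carefully, through the strong Markov property, that conditionally on $\{\tau=t\}$ the terminal stochastic integral retains variance $(\hat\pi^{(0)}\sigma)^{2}(T-t)$ and is independent of the realized liquidation time, so that the Laplace-transform evaluation is legitimate and may be separated from the $\tau$-integration; once this is secured, the remaining algebra and the final integration against the two densities are routine.
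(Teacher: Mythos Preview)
Your proposal is correct and follows essentially the same route as the paper's proof: solve the before- and after-liquidation wealth SDEs under the Merton strategy, use \eqref{eq:brownian_at_liquidation_value} to express $\hat X^{(0,b)}_\tau$ as a deterministic function of $\tau$, take the $p$-th power, remove the stochastic integral by conditioning (the paper conditions on $\sigma(\tau)$ alone and leaves $(\Theta,K)$ inside the expectation, whereas you condition on $\sigma(\tau)\vee\sigma(\Theta,K)$ --- both are equivalent by independence), and finally integrate against the density of $\tau$ from \eqref{eq:tau_desntiy} and the density $\varphi$. Your explicit invocation of the strong Markov property and the Gaussian Laplace transform merely spells out what the paper leaves implicit in its one-line conditioning step.
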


\begin{proof}
Similar to the proof of Lemma \ref{lemma:optimal_utility_log_term_two} we find the wealth value at liquidation time $\tau$ as follows
\begin{equation*}
\hat X^{(0,b)}_\tau=x_0\exp\left\{\frac{\mu\ln\alpha}{(1-p)\sigma^2} +\frac{1}{2}\frac{\mu}{(1-p)}\tau-\frac{1}{2}\frac{\mu^2}{(1-p)^2\sigma^2}\tau\right\}.
\end{equation*}
Then the terminal wealth $\hat X^{(0,a)}_{T}$ if liquidation occurs before $T$ is
\begin{equation*}
\hat X^{(0,a)}_T=\hat X^{(0,b)}_\tau \exp\left\{\int_{\tau}^{T}\left( \frac{\mu\mu^I_v(\tau,\Theta,K)}{(1-p)\sigma^2}-\frac{\mu^2}{2(1-p)^2\sigma^2}\right)dv+\int_{\tau}^{T}\frac{ \mu}{(1-p)\sigma}dW_{v}\right\}.
\end{equation*}
We compute
\begin{align*}
& \mathbb{E}[1_{T\geq \tau}U (\hat X^{(0,a)}_T)] \\
=&\frac{1}{p}\mathbb{E}\left[1_{\{T\geq \tau\}}(\hat X^{b}_{\tau})^{p}\exp\left\{\int_{\tau}^{T}\left( \frac{p\mu\mu^{I}_{v}(\tau,\Theta,K)}{(1-p)\sigma^{2}} -\frac{p\mu^{2}}{2(1-p)^{2}\sigma^{2}}\right)dv+\int_{\tau}^{T}\frac{ p\mu}{(1-p)\sigma}dW_{v}\right\}\right] \\
=& \frac{1}{p}\mathbb{E}\left[\mathbb{E}\left[\left.1_{\{T\geq \tau\}}(\hat X^{b}_{\tau})^{p}\exp\left\{\int_{\tau}^{T}\left( \frac{p\mu\mu^{I}_{v}(\tau,\Theta,K)}{(1-p)\sigma^{2}} -\frac{p\mu^{2}}{2(1-p)^{2}\sigma^{2}}\right)dv+\int_{\tau}^{T}\frac{ p\mu}{(1-p)\sigma}dW_{v}\right\}\right|\sigma(\tau)\right]\right] \nonumber\\
= &  \frac{1}{p}\mathbb{E}\left[1_{\{T\geq \tau\}}(\hat X^{b}_{\tau})^{p}\exp\left\{\int_{\tau}^{T}\left( \frac{p\mu\mu^{I}_{v}(\tau, \Theta,K)}{(1-p)\sigma^{2}} -\frac{p\mu^{2}}{2(1-p)\sigma^{2}}\right)dv\right\}\right]
\end{align*}
Using the density of $\tau$ given in (\ref{eq:tau_desntiy}) and the definition of the function $l^{(0)}(t,\theta,k)$ in (\ref{eq:l_function_in_lemma_uninformed}), we obtain the result.
\end{proof}

\begin{lemma}\label{lemma:optimal_utility_log_term_two_uninformed}
\begin{align*}
&\mathbb{E}[1_{T\geq \tau}\ln(\hat X^{(0,a)}_T)] \\
=& -\frac{\ln\alpha}{\sigma}\int_{0}^{\infty}\int_{0}^{\infty}\int_0^T\frac{1}{\sqrt{2\pi t^{3}}}\exp\left\{-\frac{1}{2t}\left(\frac{\ln\alpha}{\sigma}-(\frac{\mu}{\sigma}-\frac{1}{2}\sigma) t\right)^2\right\}h^{(0)}(t,\theta,k)\varphi(\theta,k)dtd\theta dk
\end{align*}
where
\begin{equation}
h^{(0)}(t,\theta,k):=\ln x_{0}+\frac{\mu\ln\alpha}{\sigma^{2}} +\frac{\mu}{2}t-\frac{\mu^{2}}{2\sigma^{2}}t+\int_{t}^{T}\left( \frac{2\mu\mu^{I}_{v}(t,\theta,k)- \mu^2}{2\sigma^{2}} \right)dv.
\label{eq:h_function_in_lemma_uninformed}
\end{equation}
\end{lemma}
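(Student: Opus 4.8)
The plan is to mirror the proof of Lemma~\ref{lemma:optimal_utility_log_term_two}, exploiting the fact that the uninformed investor's before-liquidation strategy coincides with the Merton strategy $\hat\pi^{(0)}=\mu/\sigma^2$ used by the fully informed investor on $\lbr 0,\tau\wedge T\lbr$. First I would record the wealth at the liquidation time. Since the dynamics of $\hat X^{(0,b)}$ are identical to those of $\hat X^{(2,b)}$ (with $X_0$ replaced by $x_0$), the computation leading to (\ref{eq:wealth_fully_informed_log_at_liquidation}) gives
\begin{equation*}
\hat X^{(0,b)}_\tau = x_0\exp\left\{\frac{\mu\ln\alpha}{\sigma^2}+\frac{\mu}{2}\tau-\frac{\mu^2}{2\sigma^2}\tau\right\}.
\end{equation*}

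Next I would solve the after-liquidation SDE. Substituting the constant strategy $\hat\pi^{(0)}=\mu/\sigma^2$ into $d\hat X^{(0,a)}_t=\hat X^{(0,a)}_t\hat\pi^{(0)}(\mu^I_t(\tau,\Theta,K)\,dt+\sigma\, dW_t)$ and integrating yields
\begin{equation*}
\hat X^{(0,a)}_T=\hat X^{(0,b)}_\tau\exp\left\{\int_\tau^T\left(\frac{\mu\mu^I_v(\tau,\Theta,K)}{\sigma^2}-\frac{\mu^2}{2\sigma^2}\right)dv+\int_\tau^T\frac{\mu}{\sigma}\,dW_v\right\}.
\end{equation*}
Taking logarithms, the integrand of the drift term is exactly $\frac{2\mu\mu^I_v(\tau,\Theta,K)-\mu^2}{2\sigma^2}$, the quantity appearing in the definition of $h^{(0)}$. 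I emphasize that here the after-liquidation strategy is the suboptimal constant $\mu/\sigma^2$ rather than the optimal $\mu^I_v/\sigma^2$ of the fully informed investor, which is why the drift differs from the $(\mu^I_v)^2/(2\sigma^2)$ appearing in Lemma~\ref{lemma:optimal_utility_log_term_two}.

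The key step is then to eliminate the stochastic integral. Conditioning on $\sigma(\tau)\vee\sigma(\Theta,K)$, the integrand $\mu/\sigma$ is constant, so $\int_\tau^T\frac{\mu}{\sigma}\,dW_v=\frac{\mu}{\sigma}(W_T-W_\tau)$, which has vanishing conditional mean by the strong Markov property of $W$ at the stopping time $\tau$ together with the independence of $(\Theta,K)$ from $\mathbb{F}$. The tower property then leaves only $h^{(0)}(\tau,\Theta,K)$ inside the expectation on $\{T\geq\tau\}$. Finally, using once more that $(\Theta,K)$ is independent of $\mathbb{F}$ (hence of $\tau$) and the density of $\tau$ from (\ref{eq:tau_desntiy}), I would factor the expectation into the stated triple integral against $\mathbb{P}(\tau\in dt)\,\varphi(\theta,k)\,d\theta\,dk$.

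The only point requiring care, and the main (minor) obstacle, is the justification that the conditional expectation of the stochastic integral vanishes; this is where the strong Markov property and the $\sigma(\Theta,K)\perp\mathbb{F}$ assumption enter, exactly as in Lemma~\ref{lemma:optimal_utility_log_term_two}, but the argument is in fact cleaner here because the integrand is a deterministic constant rather than the random $\mu^I_v/\sigma$.
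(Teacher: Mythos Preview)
Your proposal is correct and follows essentially the same approach as the paper: compute $\hat X^{(0,b)}_\tau$ as in Lemma~\ref{lemma:optimal_utility_log_term_two}, solve the after-liquidation SDE with the constant strategy $\mu/\sigma^2$, take logarithms, condition to kill the stochastic integral, and then integrate against the density of $\tau$ and the law of $(\Theta,K)$. The only cosmetic difference is that the paper conditions on $\sigma(\tau)$ alone rather than $\sigma(\tau)\vee\sigma(\Theta,K)$, but since $(\Theta,K)\perp\mathbb{F}$ both conditionings yield the same vanishing of $\frac{\mu}{\sigma}(W_T-W_\tau)$, and your version is arguably the cleaner justification.
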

\begin{proof}
Similar to the proof of Lemma \ref{lemma:optimal_utility_log_term_two}, we find the terminal wealth $\hat{X}^{(0,a)}_{T}$ if liquidation occurs before $T$
\begin{equation}
\hat X^{(0,a)}_T=x_{0} \exp\left\{\frac{\mu\ln\alpha}{\sigma^{2}} +\frac{\mu}{2}\tau-\frac{\mu^{2}}{2\sigma^{2}}\tau+\int_{\tau}^{T}\left( \frac{\mu\mu^{I}_{v}(\tau,\Theta,K)}{\sigma^{2}}-\frac{\mu^{2}}{2\sigma^{2}} \right)dt+\int_{\tau}^{T}\frac{ \mu}{\sigma}dW_{t}\right\}.
\end{equation}
We compute
\begin{align*}
& \mathbb{E}\left[1_{T\geq \tau}\ln(\hat X^{(0,a)}_T)\right] \\
=& \mathbb{E}\left[1_{T\geq \tau}\left\{\ln x_{0}+\frac{\mu\ln\alpha}{\sigma^{2}} +\frac{\mu}{2}\tau-\frac{\mu^{2}}{2\sigma^{2}}\tau+\int_{\tau}^{T}\left( \frac{\mu\mu^{I}_{v}(\tau, \Theta,K)}{\sigma^{2}}-\frac{\mu^{2}}{2\sigma^{2}} \right)dt+\int_{\tau}^{T}\frac{ \mu}{\sigma}dW_{t}\right\}\right] \\
=& \mathbb{E}\left[\mathbb{E}\left[\left.1_{T\geq \tau}\left\{\ln x_{0}+\frac{\mu\ln\alpha}{\sigma^{2}} +\frac{\mu}{2}\tau-\frac{\mu^{2}}{2\sigma^{2}}\tau+\int_{\tau}^{T}\left( \frac{\mu\mu^{I}_{v}(\tau,\Theta,K)}{\sigma^{2}}-\frac{\mu^{2}}{2\sigma^{2}} \right)dt+\int_{\tau}^{T}\frac{ \mu}{\sigma}dW_{t}\right\}\right|\sigma(\tau)\right] \right]\\
=& \mathbb{E}\left[1_{T\geq \tau}\left\{\ln x_{0}+\frac{\mu\ln\alpha}{\sigma^{2}} +\frac{\mu}{2}\tau-\frac{\mu^{2}}{2\sigma^{2}}\tau+\int_{\tau}^{T}\left( \frac{\mu\mu^{I}_{v}(\tau,\Theta,K)}{\sigma^{2}}-\frac{\mu^{2}}{2\sigma^{2}} \right)dt\right\}\right]
\end{align*}
Using the density of $\tau$ given in (\ref{eq:tau_desntiy}) and the definition of the function $h^{(0)}(t,\theta)$ in (\ref{eq:h_function_in_lemma_uninformed}), we obtain the result.
\end{proof}

\end{document}